\newtheorem{thm}{Theorem}[section]
\newtheorem{prop}[thm]{Proposition}
\newtheorem{cor}[thm]{Corollary}
\newtheorem{lem}[thm]{Lemma}
\newtheorem{defn}[thm]{Definition}
\newtheorem{rem}[thm]{Remark}
\newtheorem{ex}[thm]{Example}
\numberwithin{equation}{section}
\newcommand{\Abb}{{\mathbb{A}}}
\newcommand{\Lbb}{{\mathbb{L}}}
\newcommand{\Pbb}{{\mathbb{P}}}
\newcommand{\Tbb}{{\mathbb{T}}}
\newcommand{\cC}{{\mathcal{C}}}
\newcommand{\cN}{{\mathcal{N}}}
\newcommand{\cT}{{\mathcal{T}}}
\newcommand{\cV}{{\mathcal{V}}}
\newcommand{\cW}{{\mathcal{W}}}
\newcommand{\cX}{{\mathcal{X}}}
\newcommand{\cY}{{\mathcal{Y}}}
\newcommand{\cZ}{{\mathcal{Z}}}
\def\bA{{\mathbb A}}
\def\bG{{\mathbb G}}
\def\bL{{\mathbb L}}
\def\bT{{\mathbb T}}
\def\A{{\mathbb A}}
\def\C{{\mathbb C}}
\def\N{{\mathbb N}}
\renewcommand{\P}{{\mathbb P}}
\def\Z{{\mathbb Z}}
\def\R{{\mathbb R}}
\def\K{{\mathbb K}}
\def\fA{{\mathfrak A}}
\newcommand{\bc}[2]{{}^{#1}{#2}}
\title{A motivic approach to phase transitions in Potts models}
\author{Paolo Aluffi}
\address{Department of Mathematics \\ Florida State University \\ Tallahassee \\
FL 32306 \\ USA}
\email{aluffi@math.fsu.edu}
\author{Matilde Marcolli}
\address{Division of Physics, Mathematics and Astronomy \\ California Institute of Technology \\  Pasadena \\ CA 91125 \\ USA}
\email{matilde@caltech.edu}
\begin{document}
\maketitle

\begin{abstract}
We describe an approach to the study of phase transitions in Potts models
based on an estimate of the complexity of the locus of real zeros of the
partition function, computed in terms of  the classes in the Grothendieck ring  
of the affine algebraic varieties defined by the vanishing of the
multivariate Tutte polynomial. We give completely explicit calculations for
the examples of the chains of linked polygons and of the graphs obtained
by replacing the polygons with their dual graphs. These are based on a
deletion--contraction formula for the Grothendieck classes and on generating
functions for splitting and doubling edges.
\end{abstract}

\tableofcontents

\section{Introduction}

It is well known that the partition function of a Potts model with $q$ spin states
on a graph $G$ is given by the value at $q$ of the multivariate Tutte polynomial
of the graph, a famous combinatorial invariant of graphs. 
The problem of phase transitions in Potts models is related to
the behavior of the sets of complex zeros and real zeros of these polynomials,
see for instance \cite{Sokal}.

In this paper we propose a new approach, based on algebraic geometry,
and especially on motivic invariants such as classes in the Grothendieck
ring of varieties, to study how the set of zeros of the partition function of
a Potts model changes for a nested family of finite graphs that grow in
size to approximate an infinite graph. 

We aim at estimating the ``topological complexity" of the set of real
zeros by computing its Euler characteristic with compact support,
which is also known to provide an estimate of the algorithmic complexity
of the real algebraic set. In order to compute this invariant and estimate its
growth over certain explicit families of graphs, we use the fact that 
the invariant is not just topological but also ``motivic", which means that
it defines a ring homomorphism from the Grothendieck ring of varieties
to the integers.

Thus, we proceed first to compute explicitly the classes in the Grothendieck
ring of the varieties defined by the zeros of the Potts models partition
functions, following techniques recently developed to treat a similar
problem arising in perturbative quantum field theory, for algebraic
varieties associated to the parametric form of Feynman integrals.
The varieties arising in quantum field theory can be viewed, up to
a duality, as a limit case of the ones arising from the Potts model
partition functions.

We first prove an algebro-geometric inclusion-exclusion
formula that relates the classes for a given graph to those of the graphs
obtained by deleting or contracting one edge in the graph, and a more
complicated algebro--geometric term, which is the variety defined by
the intersection of the varieties of the deletion and the contraction. This
formula is similar to an analogous result proved in \cite{AluMa3} for
the varieties arising from Feynman diagrams in quantum field theory.

We then show that, when iterating simple operations on graphs, such as
splitting an edge or doubling an edge, the more complicated
term in the algebro--geometric deletion contraction formula can
be simplified due to cancellations in the Grothendieck ring
and the resulting operation can be described completely in
terms of varieties associated to purely combinatorial data on the graph.
The corresponding recursive relation leads to explicit and remarkably simple generating 
functions for the classes of graphs obtained from a given
graph by multiple edge splittings or edge doublings. 

We use the formulae obtained in this way to write explicitly the
classes in the {Grothen\-dieck} ring for the loci of zeros of the
partition function on Potts models over graphs given by chains
of linked polygons. This class of Potts models was already 
studied by different techniques in the literature (see for
instance \cite{ShTs} and references therein). Similarly, we
compute the Grothendieck classes explicitly for similar 
chains where the polygons are replaced by  their dual graphs,
the banana graphs. 

For these illustrative examples, we then use the expression for
the Grothendieck class to compute explicitly the Euler
characteristic with compact support, and we show that it
grows exponentially as the graphs grow in size, thus
estimating the corresponding growth in complexity of the
set of real zeros of the partition function.

\section{Potts models, multivariate Tutte polynomial and hypersurfaces}

We recall some basic facts and terminology about Potts model that we need in the rest
of the paper. For a more detailed introduction to partition functions of Potts models and
their relation to graph combinatorics, we refer the reader to \cite{Sokal}, see also
the survey \cite{BEMPS}.

\subsection{Multivariate Tutte polynomial}

The multivariate Tutte polynomial of a finite graph $G$ is defined as follows. Let $V=V(G)$
be the set of vertices of $G$ and $E=E(G)$ the set of edges. We do not assume $G$ to
be connected. One assigns to each edge $e\in E$ a variable $t_e$ and then considers
the polynomial
\begin{equation}\label{multiTutte}
Z_G(q,t) = \sum_{G'\subseteq G} q^{k(G')} \prod_{e\in E(G')} t_e,
\end{equation}
where $k(G')=b_0(G')$ is the number of connected components, and 
the sum is over all subgraphs $G' \subseteq G$ that have the same
number of vertices $V(G')=V(G)$ of $G$. Each subgraph $G'$ therefore
corresponds to a choice of a subset $A\subseteq E(G)$ of edges of $G$,
so that $E(G')=A$. The variables $t_e$ and the additional variable $q$ 
are commuting variables.

\medskip

A detailed account of the relation between the multivariate Tutte polynomial and the 
physics of Potts models is given in the survey \cite{Sokal}.  To briefly recall the main
point, in the case where $q$ is a positive integer, one considers a $q$-state model
on a graph $G$, where each vertex carries a ``spin" that can take $q$ possible values
(the case $q=2$ recovers the usual $\pm 1$ states of spin). We let $\fA$ be the set
of cardinality $q$ of the possible spin states. A state of the system is
an assignment of a spin state to each vertex and the energy $H$ of a state is the sum
over all edges of the graph of a quantity that is equal to zero if the spins assigned to
the two endpoints of the edge are different and equal to an assigned value $-J_e$
if they are the same.  With the notation $t_e = e^{\beta J_e}-1$, where $\beta$ is
the thermodynamic parameter (an inverse temperature up to the Boltzmann constant),
one has $t_e\geq 0$ in the ferromagnetic case ($J_e\geq 0$) and $-1\leq t_e\leq 0$
in the antiferromagnetic case $-\infty \leq J_e\leq 0$. The partition function of
the system is then the sum over all the possible states of the corresponding Boltzmann
weight $e^{-\beta H}$, with $H$ the energy of that state. This gives
\begin{equation}\label{PottsZ}
Z_G(q,t)= \sum_{\sigma: V(G)\to \fA} \,\,\,\, \prod_{e\in E(G)} (1+ t_e \delta_{\sigma(v),\sigma(w)}),
\end{equation}
where the sum is over all maps of vertices to spin states, $v$ and $w$ are the endpoints
$\partial(e)=\{ v,w \}$, and $\delta$ is the Kronecker delta.

It was shown by Fortuin--Kasteleyn \cite{FoKa} that \eqref{PottsZ} is the
restriction to positive integer values of $q$ of a polynomial function in $q$ and that
this polynomial is, in fact, the multivariate Tutte polynomial \eqref{multiTutte}.

\subsection{Deletion--contraction relation}

As in the case of the ordinary Tutte polynomial, the multivariate Tutte polynomial
\eqref{multiTutte} satisfies a deletion--contraction relation. Namely, given an edge
$e\in E(G)$, let $G\smallsetminus e$ be the graph obtained by deleting the edge
$e$ and let $G/e$ be the graph obtained by contracting it. One has the following
formula.

\begin{rem}\label{TutteDelConRem} {\rm
The polynomial
\eqref{multiTutte} satisfies
\begin{equation}\label{multidelcon}
Z_G(q,t)= Z_{G\smallsetminus e}(q,\hat t) + t_e Z_{G/e}(q,\hat t),
\end{equation}
where $\hat t$ consists of the edge variables with $t_e$ removed. 
The deletion--contraction relation \eqref{multidelcon} covers all cases, including
those where the edge $e$ is a bridge or a looping edge.}
\end{rem}

\subsection{The problem of phase transitions}

Zeros of the multivariate Tutte polynomial are of special interest in relation
to the problem of phase transitions of the statistical mechanical system
described by the Potts model. In fact, the partition function $Z_G(q,t)$ becomes
the normalization factor of the probability distribution on the set of all possible
states of the system, and zeros of $Z_G(q,t)$ would signal the presence of a phase 
transition in the system, for a specific choice of parameters $J_e$ and $q$, and for
certain values of the inverse temperature $\beta$. 

In the ferromagnetic case, the physical case of interest is where the variables
$t_e \in \R_+$. For $q\geq 1$, the polynomial $Z_G(q,t)$ does not have zeros
in that domain. The antiferromagnetic case with $-1\leq t_e\leq 0$ is more 
interesting, and various results on zero-free regions are given in \cite{JackSokal}.

Even when there are no zeros in the region of direct physical interest, it
is well known (see \cite{JackSokal}, \cite{Sokal}) that studying the {\em complex}
zeros of the polynomials $Z_G(q,t)$ can provide useful information on phase
transitions, not for a single graph $G$ itself, but for a family of finite graphs $G_n$,
such that $G_\infty=\cup_n G_n$ determines an infinite graph on which one still considers
a statistical mechanical system obtained as a thermodynamic limit of the finite systems.
If loci of complex zeros of the $Z_G(q,t)$ can approach the real locus in the limit,
this will result in the presence of phase transitions for the system on $G_\infty$.

Thus, the geometric problem we concentrate on is to understand and
estimate how the loci of complex and real zeros, respectively, of the
Potts model partition function change over certain families of finite
graphs $\{ G_n \}$ as above.

Our point of view, in this paper, is to approach this problem from an
algebro--geometric and motivic point of view, inspired by recent developments
on motivic properties of the loci of zeros of the closely related Kirchhoff graph
polynomials in the setting of perturbative quantum field theory, see \cite{AluMa3},
\cite{AluMa2}, \cite{AluMa}, \cite{BEK}, \cite{BK1}, \cite{Mar-book}.

\subsection{Potts model hypersurfaces}

Studying the zeros of the polynomial $Z_G(q,t)$, means
understanding the geometry of the hypersurface defined by the 
equation $Z_G(q,t)=0$. Since the polynomial is not homogeneous in its
variables, it does not define a projective hypersurface (unlike the case
of the graph polynomials in quantum field theory), but it does define
an affine hypersurface, which we refer to as {\em the Potts model
hypersurface}. In fact, we consider two types of hypersurfaces
associated to the Potts model, one where the parameter $q$ is
treated as a variable $q\in \A$ along with the other edge variables $t_e$,
and one where one specializes to a fixed value of $q$.

\begin{defn}\label{PottsModHypDef}
Suppose given a finite graph $G$, with set of vertices $V(G)$ and set of edges $E(G)$.
Let $\cZ_G$ be the hypersurface in affine space $\A^{\# E(G)+1}$ defined by 
\begin{equation}\label{PottsModHyp}
\cZ_G := \{ (q,t)\in \A^{\# E(G)+1} \, |\, Z_G(q,t)=0 \}.
\end{equation}
For a fixed value of $q\in \A$, the hypersurface $\cZ_{G,q}$ in $\A^{\# E(G)}$ is given by
\begin{equation}\label{qPottsModHyp}
\cZ_{G,q} := \{ t\in \A^{\# E(G)} \, |\, Z_G(q,t)=0 \}.
\end{equation}
\end{defn}

The hypersurface $\cZ_{G,q}$ is therefore a slice of $\cZ_G$ with the 
hyperplane in $\A^{\# E(G)+1}$ of fixed $q$-coordinate. In the physically
significant cases, one wants to study the complex and the real zeros of the hypersurface
$\cZ_{G,q}$ where $q\in \N$ is a positive integer corresponding to the
number of spin-states of the Potts model.

\begin{defn}\label{virphasetrans}
For a finite graph $G$, the {\em virtual phase transitions} of the Potts model
are the real points  $\cZ_{G}(\R)$ of the algebraic variety $\cZ_G$. For a fixed
$q$, the virtual phase transitions are the points of the real locus $\cZ_{G,q}(\R)$ 
of the variety $\cZ_{G,q}$.
\end{defn}

We distinguish here between virtual phase transitions (all real zeros
of the polynomial $Z_G(q,t)$) and the actual physical phase transitions,
which would be constrained by the additional requirement that $q\in\N$
and the edge variables are $t_e\geq 0$ in the ferromagnetic case, or
$-1\leq t_e \leq 0$ in the antiferromagnetic case. Thus,  for example,
in the case of a finite graph, even if there are no physical phase transitions
in the ferromagnetic case, there can still be a non-empty set of
virtual phase transitions.

\subsection{The analogy with Quantum Field Theory}

In perturbative quantum field theory, the parametric form of
Feynman integrals for massless scalar field theories can be
expressed as a (possibly divergent) integral of an algebraic
differential form on the complement of an algebraic hypersurface
defined by the vanishing of a polynomial associated to the
graph, the first Kirchhoff polynomial given by
\begin{equation}\label{Kirchhoff}
\Psi_G(t)= \sum_{T\subseteq G} \prod_{e\notin T} t_e,
\end{equation}
where $t=(t_1,\ldots,t_n)$ are variables assigned to the
edges of the graph and $T$ runs over maximal spanning
forests, that is, subgraphs of $G$ with $V(T)=V(G)$, which 
are forests with $b_0(T)=b_0(G)$. (Note, the terminology
``spanning forest" is used elsewhere for what we refer to here
as ``maximal spanning forests".)

In the literature on motivic aspects of Feynman integrals \cite{BeBro},
\cite{Stanley}, \cite{Stem}, it is also common to consider the related polynomial
\begin{equation}\label{dualKirchhoff}
\Phi_G(t) = \sum_{T\subseteq G} \prod_{e\in T} t_e, 
\end{equation}
where the sum is, as above, over the maximal spanning
forests, but the product is on edges in the forest, instead
of edges in the complement. 

\begin{defn}\label{GraphHypDef}
We denote by $\cX_{G}\subset \A^{\# E(G)}$ the affine hypersurface defined by
the polynomial \eqref{Kirchhoff} and by $\bar \cX_{G} \subset \P^{\# E(G)-1}$
the corresponding projective hypersurface.

Similarly, we denote by $\cY_{G}\subset \A^{\# E(G)}$ the affine hypersurface defined by
the polynomial \eqref{dualKirchhoff} and by $\bar \cY_{G} \subset \P^{\# E(G)-1}$
the corresponding projective hypersurface.
\end{defn}

\begin{rem}\label{PsiPhiRem}{\rm 
One obtains $\Psi_G(t)$ from $\Phi_G(t)$ by dividing by $\prod_{e\in E(G)} t_e$
and changing variables by the transformation $t_e \mapsto 1/t_e$. }
\end{rem}

For a planar graph
this operation relates the Kirchhoff polynomial of a graph with that of a dual
graph, see the discussion on the Cremona transformation in \cite{AluMa}.

\begin{rem}\label{PhiTutteRem} {\rm
It is also well known (see for instance \cite{KRTW}, \cite{Sokal}) that the
graph polynomial $\Phi_G(t)$ of \eqref{dualKirchhoff} can be recovered
from the multivariate Tutte polynomial by the following operations:
\begin{enumerate}
\item Clear an overall factor of $q^{k(G)}$ with $k(G)=b_0(G)$ the
number of connected components, that is, consider the normalized
Potts partition function 
\begin{equation}\label{tildeZq}
\tilde Z_G(q,t) = q^{-k(G)} Z_G(q,t).
\end{equation}
\item Take the evaluation $\tilde Z_G(q,t)|_{q=0}$. This corresponds to
a sum on subgraphs $G'$ with $k(G')=k(G)$.
\item Of this take then the homogeneous piece with the lowest degree in the
$t=(t_e)$ variables. This corresponds to the sum over maximal spanning
forests, that is, to the polynomial $\Phi_G(t)$. 
\end{enumerate} }
\end{rem}

In the context of quantum field theory, Tutte polynomials
can also occur, for example where one considers scalar field theories
on noncommutative spacetimes as in \cite{KRTW}.

\subsection{The tangent cone}

Consider an affine hypersurface $X \subset \A^N$ 
given by the vanishing $X=\{ t \in \A^N \,|\, P(t) =0 \}$ of a 
(possibly non-homogeneous) polynomial $P(t)$ whose leading term $P_k(t)$
(the term of lowest order in the variables $t=(t_i)$)  is
of some degree $k\geq 1$.  Then the {\em tangent cone\/} to $X$ at the origin,
$\cT\cC(X)=\cT\cC_0(X)$ is also a hypersurface in $\A^N$, given by
\begin{equation}\label{tangentconeX}
\cT\cC(X) =\{ t\in \A^N \,|\, P_k(t) =0 \}, 
\end{equation}
the zero locus of the homogeneous polynomial $P_k$. This corresponds
to the {\em normal cone\/} $\cN\cC_0(X)$ for the subscheme given by the origin
$0 \subset X$. The construction known as {\em deformation to the normal cone} 
provides a very useful algebro-geometric replacement of the notion of tubular
neighborhoods of embedded subvarieties (subschemes), see \cite{Ful}.

In the case of a closed subscheme $Y\subset X$, one blows up 
the locus $Y \times \{ 0 \}$ inside $X \times \P^1$. One then considers
the complement 
$$ \tilde X_Y :=Bl_{Y\times \{ 0 \}}(X \times \P^1) \smallsetminus Bl_Y(X). $$
One obtains in this way a fibration $\tilde X_Y \to \P^1$, 
whose general fiber (away from $0$) is naturally isomorphic to $X$,
while the special fiber over zero is the
tangent cone $\cT\cC_Y(X)$.
This has indeed the effect of deforming $X$ to the tangent cone $\cT\cC_Y(X)$.

For a finite graph $G$, we denote by $P_G$ the homogeneous polynomial
\begin{equation}\label{Pleading}
P_G(q,t) = \text{ leading term of } Z_G(q,t),
\end{equation}
in the variables $(q,t)\in \A^{\#E(G)+1}$ and by $\cV_G$ the affine variety
\begin{equation}\label{coneVGvar}
\cV_G =\{ (q,t)\in \A^{\#E(G)+1} \,|\, P_G(q,t)=0 \}.
\end{equation}
Since the polynomial $P_G$ is homogeneous, 
we can also consider the projective hypersurface $\bar \cV_G \subset \P^{\#E(G)}$.
We also consider the affine hypersurfaces 
\begin{equation}\label{YGq}
\cV_{G,q} =\{ t\in \A^{\#E(G)} \,|\, P_G(q,t)=0 \},
\end{equation}
for fixed $q$. These are not homogeneous, in general, except in the case $q=0$.

We then have the following rephrasing of Remark \ref{PhiTutteRem}.

\begin{lem}\label{normconegraph}
The variety $\cV_G$ is the tangent cone of the variety $\cZ_G$ at zero.
It has a component given by the hyperplane $H=\{ q=0 \}$ with
multiplicity equal to $k(G)=b_0(G)$ and another component $\cW_G$,
which intersects the hyperplane $H$ along the graph hypersurface $\cY_G$.
\end{lem}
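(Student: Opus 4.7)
The plan breaks into three short pieces. First, the identification of $\cV_G$ with the tangent cone of $\cZ_G$ at the origin is immediate from \eqref{tangentconeX} together with \eqref{Pleading}, so nothing needs to be proved beyond collating the two definitions. Everything else comes from expanding $P_G$ explicitly using \eqref{multiTutte}.

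For the leading term, I would assign to each subgraph $G' = (V(G), A)$, $A \subseteq E(G)$, its total degree $k(G') + |A|$ in the variables $(q, t)$. Writing $k(G') = |V(G)| - \rk(A)$, where $\rk$ is the graphic matroid rank, this degree equals $|V(G)| + (|A| - \rk(A))$, which is minimized exactly when $A$ is a forest, with minimum value $|V(G)|$. This identifies
\begin{equation*}
P_G(q, t) \;=\; \sum_{A \subseteq E(G) \,\text{forest}} q^{|V(G)| - |A|} \prod_{e \in A} t_e,
\end{equation*}
a homogeneous polynomial of degree $|V(G)|$. Next, since every forest $A$ satisfies $|A| \leq |V(G)| - k(G)$ with equality iff $A$ is a maximal spanning forest of $G$, the power of $q$ in each monomial of $P_G$ is at least $k(G)$, and I can factor $P_G = q^{k(G)} Q_G(q, t)$ with $Q_G(0, t) \neq 0$. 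This simultaneously exhibits $H = \{q = 0\}$ as a component of $\cV_G$ of multiplicity exactly $k(G) = b_0(G)$ and defines the residual component $\cW_G := \{Q_G = 0\}$.

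To finish, I would compute $\cW_G \cap H$ by setting $q = 0$ in $Q_G$: the surviving monomials are exactly those with $|A| = |V(G)| - k(G)$, namely the maximal spanning forests, so
\begin{equation*}
Q_G(0, t) \;=\; \sum_{T} \prod_{e \in T} t_e \;=\; \Phi_G(t)
\end{equation*}
in the notation of \eqref{dualKirchhoff}, and therefore $\cW_G \cap H = \cY_G$. I do not anticipate any real obstacle; the one point worth verifying explicitly is that $q^{k(G)}$ is the \emph{exact} power of $q$ dividing $P_G$, which is equivalent to $\Phi_G \not\equiv 0$ and follows from the existence of at least one maximal spanning forest in any graph.
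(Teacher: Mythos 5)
Your proof is correct and follows essentially the same route as the paper: identify $P_G$ as a sum over spanning forests, factor out $q^{k(G)}$ to exhibit the hyperplane component and define $\cW_G$, and then observe that $Q_G(0,t)=\Phi_G(t)$. The only stylistic difference is that you compute the total degree via the matroid rank identity $k(G')=|V(G)|-\rk(A)$, where the paper instead argues by an edge-by-edge induction on forests; the content is the same, and your explicit remark that $\Phi_G\not\equiv 0$ is what certifies the multiplicity is exactly $k(G)$ is a small but welcome clarification.
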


\proof The first statement follows directly from the definition \eqref{tangentconeX}
of the tangent cone. 

The polynomial $P_G(q,t)$ as in \eqref{Pleading} is the sum of terms
of lowest degree in $Z_G(q,t)$. To see what they parameterize,
note that if the subgraph $G'$ determined by a set of edges $A\subseteq E$ is {\em not\/}
a forest, then one or more of the edges may be removed from $A$ without 
affecting the number of connected components, i.e., the exponent of $q$;
while the degree of the product $\prod t_e$ decreases accordingly.

Further, assume that $A$ is a forest. Then 
\[ k(A)+|A| = \# V(G) . \]
Indeed, this is clear if $A=\emptyset$; and the left-hand-side does not
change if we add an edge connecting vertices without closing cycles
($|A|$ increases by $1$, $k(A)$ decreases by $1$ for each such 
operation). Therefore, all contributions of forests to $Z_G(q,t)$
have degree equal to the number of vertices of $G$, and this is the
lowest possible degree. 

Thus, the polynomial $P_G(q,t)$ collects the contribution of those terms of $Z_G(q,t)$
corresponding to subgraphs that are forests with $V(G')=V(G)$ (spanning forests),
\begin{equation}\label{PGforests}
 P_G(q,t) =
\sum_{G' \subseteq G,\, b_1(G')=0,\, \#V(G')=N} \,\,\, q^{k(G')} \prod_{e\in E(G')} t_e .
\end{equation}
The hypersurface $\cW_G$ is the locus of zeros of the polynomial $Q_G(q,t)$
satisfying $P_G(q,t)= q^{k(G)} Q_G(q,t)$, where $q$ does not divide $Q_G(q,t)$.

The intersection $H\cap \cW_G$ is then given by the locus of zeros of the
polynomial
\begin{equation}\label{QG0t}
 Q_G(0,t) = \sum_{G' \subseteq G, \text{ max forest}} \prod_{e\in E(G')} t_e, 
\end{equation} 
which is the polynomial $\Phi_G(t)$ of \eqref{dualKirchhoff}.
\endproof

\subsection{The Grothendieck ring of varieties and mixed Hodge structures}

The algebraic varieties $\cX_G$ and $\cY_G$ associated to Feynman graphs
have been studied extensively in recent years in terms of their classes in the
Grothendieck ring of varieties, see for instance \cite{BeBro}, \cite{BEK}, \cite{Stanley},
\cite{Stem}. We will be applying here analogous techniques to the Potts model
hypersurfaces $\cZ_G$ and $\cZ_{G,q}$ and apply the results to the problem
of phase transitions. Thus, we recall here a few things about the Grothedieck
ring of varieties, and a result of \cite{AluMa3}, whose analog for Potts models
hypersurfaces we prove in this paper and will be the basis of our motivic
approach to phase transitions.

The Grothendieck ring $K_0(\cV_\K)$ of varieties over a field $\K$ is
generated by isomorphism classes $[X]$ of smooth (quasi)projective varieties 
with the inclusion-exclusion relation
\begin{equation}\label{inclexclrel}
[X] = [Y] + [X\smallsetminus Y]
\end{equation}
for any closed embedding  of a subvariety $Y\subset X$, and with the 
product structure given by $[ X \times Y ] =[X] [Y]$.

In the following, we will be interested in considering the Potts model hypersurfaces 
as varieties defined over $\C$, but we will also be focusing on their real zeros, hence
thinking of them as varieties over $\R$. Thus, in the following we simply write 
$K_0(\cV)$ for the Grothendieck ring, whenever the arguments do not depend on
what field we work over, and we will explicitly mention $\C$ or $\R$ when needed.

The class $[X]$ in the Grothendieck ring is a universal Euler characteristic
for algebraic varieties (see \cite{Bittner}), in the sense that any invariant
of isomorphism classes of algebraic varieties that satisfies the inclusion-exclusion
relation and is multiplicative on products factors through the Grothendieck ring.
These invariants are sometimes called {\em motivic}.

In particular, in the case of complex algebraic varieties and of classes in $K_0(\cV_\C)$,
among these motivic invariants that factor through the Grothendieck
ring we have the usual topological Euler characteristic, but also the virtual Hodge 
polynomials. These will be useful to us in the following so we recall briefly the definition.

The virtual Hodge polynomial of an algebraic variety is defined as 
\begin{equation}\label{HodgePoly}
e(X) (x,y)= \sum_{p,q=0}^d e^{p,q}(X) x^p y^q, \ \ \ \text{ with } \ \ \ 
e^{p,q}(X) =\sum_{k=0}^{2d} (-1)^k h^{p,q}(H^k_c(X)),    
\end{equation}
where, for each pair of integers $(p,q)$, the term $h^{p,q}(H^k_c(X))$ is the Hodge number 
of the mixed Hodge structure on the cohomology with compact support of $X$. If $X$ is
smooth projective, then the virtual Hodge polynomial reduces to the Poincar\'e polynomial, 
with $e^{p,q}(X)=(-1)^{p+q}h^{p,q}(X)$ being the classical pure Hodge numbers. 

The fact that the virtual Hodge polynomial factors through the Grothendieck ring
$K_0(\cV_\C)$ of varieties means that an explicit formula for the class of a variety in the
Grothendieck ring can be used to compute the virtual Hodge polynomial and obtain
some explicit information on the Hodge numbers and the mixed Hodge structure.

\subsection{Virtual Betti numbers of real algebraic varieties}\label{chicSec}

As we mentioned above, in the case of complex algebraic varieties,
the ordinary topological Euler characteristic is  a motivic invariant. This is
not true for real algebraic varieties, as the additive property over closed
embeddings need not be satisfied. However, it is known that there is a
unique motivic invariant that agrees with the topological Euler characteristic
on compact smooth real algebraic varieties and is homeomorphism invariant
(but not homotopy invariant), see \cite{Quarez} and also \cite{Dut}, \cite{Tota}.
It is defined, for any real (semi)algebraic set $S$, as 
\begin{equation}\label{chiBM}
\chi_c(S) = \sum_k (-1)^k b_k^{BM}(S),
\end{equation}
where $b_k^{BM}(S)$ are the Borel--Moore Betti numbers, namely the ranks of
the relative homologies $H_k(\bar S,\infty)$, where $\bar S$ is the Alexandrov
compactification. Equivalently, they are the ranks of the cohomology with compact
support $H^*_c(S)$. 

\begin{ex}\label{chicLT}{\rm 
Let $\bL=[\A^1]$ be the Lefschetz motive, the class of the affine line in $K_0(\cV)$
and let $\bT=[\bG_m]=\bL-1$ be the class of the multiplicative group $\bG_m=\A^1 \smallsetminus \{ 0 \}$.  The topological Euler characteristic $\chi: K_0(\cV_\C)\to \Z$ satisfies $\chi(\bL)=1$
and $\chi(\bT)=0$, while the Euler characteristic with compact support $\chi_c : K_0(\cV_\R)
\to\Z$ satisfies $\chi_c(\bL)=-1$ and $\chi_c(\bT)=-2$.}
\end{ex}

Moreover, it is shown in \cite{McCroPar} that the Betti numbers with $\Z/2\Z$ coefficients 
$b_k(X)=\dim H_k(X,\Z/2\Z)$, defined in the usual way for  
compact smooth real algebraic varieties, extend in a unique way to $K_0(\cV_\R)$,
so that one obtains a ring homomorphism
\begin{equation}\label{virtualPoincare}
\beta: K_0(\cV_\R) \to \Z[t], 
\end{equation}
such that $\beta(X,t)=\sum_k b_k(X) t^k$ for compact smooth varieties. The coefficients
$\beta_k$ of the ring homomorphism $\beta$ are called the {\em virtual Betti numbers}.
They are {\em not} topological invariants. However, they compute the Euler characteristic
\eqref{chiBM}, namely, 
\begin{equation}\label{chicbeta}
\beta(X,-1) = \chi_c (X)
\end{equation}
for all real algebraic varieties $X$, with both equal to the ordinary Euler characteristic $\chi(X)$
in the compact smooth case. Notice that, while $\chi_c(X)$ is the alternating sum of
the ranks of the Borel--Moore homologies, the virtual Betti numbers $\beta_k(X)$
are in general not equal to the Borel--Moore Betti numbers $b_k^{BM}(X)$ (for instance,
the $\beta_k(X)$ can be negative), although their alternating sums agree.

In the case of a compact smooth real algebraic variety, which is the real locus $X(\R)$
of a smooth projective complex algebraic variety $X(\C)$, there are ways to bound the
``topological complexity" of $X(\R)$ in terms of invariants of $X(\C)$, in the form of
Petrovski\u\i--Ole\u\i nik inequalities: for example, for $X(\R)$ a smooth projective real 
algebraic variety of even dimension $n=2p$, one has \cite{Arnold}, \cite{Kharl}
\begin{equation}\label{POineq}
| \chi(X(\R)) -1 | < h^{p,p}(H^n(X(\C))).
\end{equation}
This type of result was extended to cases with isolated singularites in \cite{Dimca},
where one gets
\begin{equation}\label{POineqSing}
| \chi(X(\R)) -1 | \leq \left\{ \begin{array}{ll} \sum_{0\leq q \leq p} h^{q,q}(H^n_c(X(\C))) 
& n=2p \\[2mm]
\sum_{0\leq q\leq p} h^{q,q}(H^n_c(X(\C))) + h^{p+1,p+1}(H^{p+1}_c(X(\C)) & n=2p+1.
\end{array} \right. 
\end{equation}

However, more generally one does not have a Petrovski\u\i--Ole\u\i nik type inequalities to
estimate the virtual Betti numbers and the Euler characteristic $\chi_c(X)$ of
arbitrary real algebraic varieties in terms of the virtual Hodge polynomials of
the complex variety. For complex varieties the virtual Betti numbers can
be computed in terms of the virtual Hodge polynomial. In fact, one can introduce  
the weight $k$ Euler characteristic given by setting
$$ w^k_j(X(\C)) = \sum_{p+q=j} h^{p,q}(H^k_c(X(\C))), $$
which equal $b_k(X)$ for $j=k$ and zero otherwise in the smooth
projective case, and are otherwise equal to the ranks of the quotients of the
weight filtration $w^k_j(X(\C)) = \dim_\C W^k_j(X)/W^k_{j-1}(X(\C))$ on the cohomology
with compact support $H^k_c(X(\C))$. Then for arbitrary complex algebraic varieties
the virtual Betti numbers are given by (\cite{McCroPar})
$$ \beta_j(X(\C)) =  (-1)^j \sum_k (-1)^k w^k_j(X(\C)). $$
In general one does not have a good way to estimate the virtual Betti numbers
$\beta_k(X(\R))$ of a real algebraic variety, nor their alternating sum $\chi_c(X(\R))$, in
terms of the virtual Betti numbers $\beta_k(X(\C))$.

Although in general one cannot estimate $\chi_c(X(\R))$ in terms of
a suitable Petrovski\u\i--Ole\u\i nik type inequality, we will show that
in certain cases one can compute explicitly both $\chi_c(X(\R))$ and the
virtual Hodge numbers of $X(\C)$ as a consequence of being
able to compute explicitly the class $[X]$ in the Grothendieck ring of
varieties.

We will discuss later how these considerations relate 
to the problem of phase transitions in Potts models. In particular, we will see that,
by working with classes in the Grothendieck ring, we obtain some estimates
on the topological complexity of the set of virtual phase transitions of the Potts model
over certain families of finite graphs $G_n$ approximating some infinite graph $G=\cup_n G_n$.

\subsection{Grothendieck classes of Potts model hypersurfaces}
We also introduce the following notation for the classes in the Grothendieck ring 
of the Potts model hypersurfaces.

\begin{defn}\label{PottsGrClassesDef}
Let $[\cZ_G]$ be the class in the Grothendieck ring $K_0(\cV)$ of the Potts model 
hypersurface \eqref{PottsModHyp}. Also let $\{ \cZ_G \}$ denote the class of
the hypersurface complement,
\begin{equation}\label{hypcomplPotts}
\{ \cZ_G \} =[\A^{\# E(G)+1}\smallsetminus \cZ_G]= \bL^{\# E(G)+1} - [\cZ_G],
\end{equation}
where $\bL=[\A^1]$ is the Lefshetz motive (the class of the affine line).
The classes $[\cZ_{G,q}]$ and $\{ \cZ_{G,q} \}=[\A^{\# E(G)}\smallsetminus \cZ_{G,q}]$
are similarly defined for the hypersurface $\cZ_{G,q}$ of \eqref{qPottsModHyp}.
\end{defn}

As in the case of the graph hypersurfaces of Feynman graphs (see \cite{AluMa3},
\cite{AluMa2}, \cite{BeBro}), we will see that for Potts models it is simpler to
write explicit formulae for the class of hypersurface complement $\{ \cZ_G \}$ than for the
class $[\cZ_G]$ of the hypersurface itself, though the information is clearly 
equivalent due to the simple relation \eqref{hypcomplPotts} between them.

\subsection{The Grothendieck class for fixed $q$ and the fibration condition}

We discuss here the relation between the classes of the hypersurface complement
$\{ \cZ_G \}$ and $\{ \cZ_{G,q} \}$ for the full Potts model hypersurface and for the
one with fixed $q$. We identify a useful condition, according to which the 
the class of $\{ \cZ_G \}$ behaves as one would expect
in the case of a fibration on the locus $q\ne 0,1$. We will later
identify specific classes of graphs we want to work with and check that
they satisfy this condition. We do not address in this paper the question of
how general this condition actually is, nor the question of whether
the variety $\cZ_G$ itself really is a locally trivial  fibration over the locus
$q\ne 0,1$, at least for some specific families of graphs. 

One can see directly from the poynomial $Z_G(q,t)$ why $q=0$ and $q=1$ should
certainly be special values, for the following reasons. Recall that 
the equation we are dealing with is
\[
Z_G(q,t)=\sum_{A\subseteq E(G)} q^{k(A)} \prod_{a\in A} t_a .
\]

\begin{itemize}
\item For $q=0$ and $G$ nonempty, this is $0$: indeed, $k(A)>0$ for every
subset $A$ of edges in that case. This just says that the hypersurface
$Z_G(q t)=0$ has a component along $q=0$; this component
can be removed (dividing $Z_G(q,t)$ by $q^{k(G)}$) and the residual 
hypersurface may be studied over $q=0$. This is the hypersurface
$Q_G(0,t)=0$ of \eqref{QG0t}, which is the dual $\cY_G$ of the graph 
hypersurface $\cX_G$ as in Definition \ref{GraphHypDef}. 
This falls back on the case investigated in \cite{AluMa2}, \cite{AluMa3}.

\item For $q=1$, the polynomial becomes
\begin{equation}\label{ZG1t}
Z_G(1,t)= \sum_{A\subseteq E} \, \prod_{a\in A} t_a
=\prod_{e\in E} (1+t_e).
\end{equation}
This is a union of normal crossing divisors, and in fact it consists of
essentially $n$ coordinate hyperplanes in $\Abb^n$. 
The complement is the set of $n$-tuples $(t_1,\dots,t_n)$ ($n=\#E(G)$) 
with each $t_i+1\ne 0$, a copy of the $n$-torus. Thus the class of its 
complement is $\Tbb^n$.
\end{itemize}

The condition that the class $\{ \cZ_G \}$ behaves as in the case of a 
fibration over the locus $q\ne 0,1$ can then
be formulated as the condition that
\begin{itemize}
\item The class $\{\cZ_{G,q}\}$ is independent of $q$ for $q\ne 0,1$; this
class will be denoted $\{\cZ_{G, q\neq 0,1}\}$;
\item The following holds:
\begin{equation}\label{ZGfibrationq}
\{\cZ_G\} = (\Tbb-1) \{\cZ_{G, q\neq 0,1}\}  + \Tbb^{\#E(G)} .
\end{equation}
\end{itemize}
This accounts for the fact that the complement of $Z_G=0$ is 
contained in $q\ne 0$, has a torus fiber over $q=1$, and (heuristically)
fibers over $q\ne 0,1$, a copy of $\Tbb-1$, with constant fiber class.

In particular, a necessary condition for \eqref{ZGfibrationq} is that
 $(\Tbb-1)$ divides $\{\cZ_G\}-\Tbb^{\# E(G)}$ in the Grothendieck
ring, so any counterexample to this property would give examples
where the fibration condition \eqref{ZGfibrationq} does not hold.

\section{Deletion--contraction for classes in the Grothendieck ring}

In \cite{AluMa3} it was shown that, in the case of the graph hypersurface
complements $\A^{\# E(G)}\smallsetminus \cX_G$, the classes in the
Grothendieck ring satisfy an algebro geometric analog of a deletion--contraction
relation. More precisely, it was proved in \cite{AluMa3} that, for a graph $G$ with $n=\# E(G)$
and for a given edge $e\in E(G)$, the classes of the varieties
$\cX_G$, $\cX_{G/e}$ and $\cX_{G\smallsetminus e}$ are related by
\begin{equation}\label{delconXG}
 [\A^n \smallsetminus \cX_G] = \bL\, [ \A^{n-1}\smallsetminus (\cX_{G\smallsetminus e} \cap
\cX_{G/e})] - [\A^{n-1} \smallsetminus \cX_{G\smallsetminus e} ] ,
\end{equation}
when $e$ is neither a bridge nor a looping edge, and
$$ [\A^n \smallsetminus \cX_G] =\bL\, [\A^{n-1}\smallsetminus
\cX_{G/e}] = \bL\, [\A^{n-1}\smallsetminus \cX_{G\smallsetminus e}] $$
when $e$ is a bridge and
$$ [\A^n \smallsetminus \cX_G] =(\bL-1) [\A^{n-1}\smallsetminus
\cX_{G/e}] =(\bL-1) [\A^{n-1}\smallsetminus \cX_{G\smallsetminus e}] $$
when $e$ is a looping edge, where $\bL=[\A^1]$ is the Lefschetz motive, as above.

Notice how \eqref{delconXG} is not a combinatorial deletion--contraction formula:
indeed, the term involving the intersection $\cX_{G\smallsetminus e} \cap
\cX_{G/e}$ of the hypersurfaces of the deletion and the contraction is in
general difficult to control explicitly, even if one has an explicit formula for
the classes of the deletion and the contraction separately.  However, it
was also shown in \cite{AluMa3} that, for certain families of graphs, such
as chains of polygons, one obtains recursive relations in the 
Grothendieck ring, where the ``problematic"
term in the deletion--contraction formula cancels out and one obtains an
explicit generating function for the classes of the varieties associated to 
the family of graphs. The result provides a way to control how the class
in the Grothendieck ring grows in complexity when the graph is enlarged
through some simple operations, such as doubling edges or splitting edges.
In the setting of quantum field theory the families of graphs obtained
through such simple operations, like the chains of polygons, are typically 
not complex enough to capture interesting behaviors of the associated periods, 
but we will argue here that analogous operations performed in the 
setting of Potts models already gives rise to interesting non-trivial cases.

\subsection{Algebro-geometric deletion--contraction for Potts model hypersurfaces}

We prove here an analog of the deletion-contraction formula \eqref{delconXG}
for the classes in the Grothendieck ring of the Potts model hypersurfaces. We first
analyze the case of the full $\cZ_G$ and then the case of $\cZ_{G,q}$ with fixed $q$
and of the tangent cone $\cV_G$ at zero. 

We now state our main result on the deletion--contraction properties for 
Potts model hypersurfaces. Notice that, in the following statement,
there is no distinction between the case of bridges or looping edges and all the
other edges, just as in the combinatorial deletion--contraction relation for the
multivariate Tutte polynomials.

\begin{thm}\label{delconZGthm}
Let $G$ be a finite graph and $e$ an edge of $G$. Then the class $\{ \cZ_G \}$ 
of \eqref{hypcomplPotts} satisfies
\begin{equation}\label{delconZG}
 \{ \cZ_G \} = \bL \{ \cZ_{G/e} \cap \cZ_{G\smallsetminus e} \} - \{ \cZ_{G/e} \} .
\end{equation}
\end{thm}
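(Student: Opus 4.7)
The plan is to exploit the multivariate deletion--contraction of Remark \ref{TutteDelConRem}, which writes $Z_G(q,t) = A(q,\hat t) + t_e\, B(q,\hat t)$ with $A = Z_{G\smallsetminus e}(q,\hat t)$ and $B = Z_{G/e}(q,\hat t)$. Because this expression is linear in the separated variable $t_e$, my approach is to analyze $\cZ_G \subset \A^{\#E(G)+1}$ via the projection $\pi\colon\A^{\#E(G)+1}\to \A^{\#E(G)}$ that forgets $t_e$, and to stratify the base $\A^{\#E(G)}$ according to the joint vanishing behavior of $A$ and $B$. Concretely, I would decompose the base into three locally closed strata: $U_1 = \{B\ne 0\}$, $U_2 = \{B=0,\, A\ne 0\} = \cZ_{G/e}\smallsetminus(\cZ_{G/e}\cap\cZ_{G\smallsetminus e})$, and $U_3 = \cZ_{G/e}\cap\cZ_{G\smallsetminus e}$. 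These give a disjoint decomposition of $\A^{\#E(G)}$ and hence pull back to a decomposition of $\A^{\#E(G)+1}$ into the locally closed pieces $\pi^{-1}(U_i)\cong U_i\times \A^1$.

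Over $U_1$ the equation $A + t_e B = 0$ has the unique solution $t_e = -A/B$; the change of variables $(x,t_e)\mapsto (x,\, t_e B(x) + A(x))$ defines an isomorphism $\pi^{-1}(U_1)\xrightarrow{\sim} U_1\times \A^1$ sending $\cZ_G\cap\pi^{-1}(U_1)$ to the zero section $U_1\times\{0\}$, so the complement $\pi^{-1}(U_1)\smallsetminus \cZ_G$ has class $[U_1]\,(\bL-1)$ in $K_0(\cV)$. Over $U_2$ the resulting linear equation in $t_e$ has no solution, so $\cZ_G\cap\pi^{-1}(U_2)=\emptyset$ and the complement contributes $[U_2]\,\bL$. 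Over $U_3$ the polynomial $Z_G$ vanishes identically on the fiber, so $\cZ_G\cap\pi^{-1}(U_3)=\pi^{-1}(U_3)$ and the contribution to the complement is zero. Summing these three contributions yields $\{\cZ_G\} = (\bL-1)[U_1] + \bL\, [U_2]$.

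It then remains to translate $[U_1]$ and $[U_2]$ back into the hypersurface-complement notation: by inclusion--exclusion in the ambient $\A^{\#E(G)}$ one has $[U_1]=\{\cZ_{G/e}\}$ and $[U_2]=[\cZ_{G/e}]-[\cZ_{G/e}\cap\cZ_{G\smallsetminus e}]=\{\cZ_{G/e}\cap\cZ_{G\smallsetminus e}\}-\{\cZ_{G/e}\}$. Substituting, the terms containing $\{\cZ_{G/e}\}$ collapse to $-\{\cZ_{G/e}\}$ while the remaining piece is $\bL\{\cZ_{G/e}\cap\cZ_{G\smallsetminus e}\}$, giving \eqref{delconZG}. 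I anticipate no serious obstacle: the argument is essentially bookkeeping once the correct stratification is in hand, and its main virtue is uniformity---every edge is handled the same way, mirroring the fact that \eqref{multidelcon} itself requires no case analysis for bridges or looping edges. The only point demanding minor care is verifying that the trivialization over $U_1$ is a genuine isomorphism of varieties, which is immediate because $B$ is invertible as a regular function on $U_1$ by construction.
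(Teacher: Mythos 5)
Your proposal is correct and follows essentially the same route as the paper: stratify the base $\A^{\#E(G)}$ by the vanishing of $Z_{G/e}$ and $Z_{G\smallsetminus e}$, compute the fiberwise contribution to $\{\cZ_G\}$ over each stratum, and then translate $[U_1]$, $[U_2]$ into the bracket notation via inclusion--exclusion. The only (welcome) difference is cosmetic: you make the locally closed stratification and the trivialization $\pi^{-1}(U_1)\cong U_1\times\A^1$ fully explicit, whereas the paper phrases the same count informally as ``a $\bG_m$ worth of $t_e$'s'' and ``an $\A^1$ worth of $t_e$'s.''
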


\proof
The result follows from the combinatorial deletion--contraction relation for
the multivariate Tutte polynomials
\[
Z_G(q,t)=Z_{G\smallsetminus e}(q,\hat t^{(e)})
+ t_e\, Z_{G/e}(q,\hat t^{(e)}),
\]
where $\hat t^{(e)}$ is the set of variables $t=(t_{e'})_{e'\in E(G)}$ with the
variable $t_e$ omitted. We then check the various cases. 

$\bullet$ If $Z_{G/e}(q,\hat t^{(e)})\ne 0$, then $Z_G(q,t)$ is
guaranteed to be $\ne 0$ if $t_e$ does not equal $-Z_{G\smallsetminus
e}(q,\hat t^{(e)})/ Z_{G/e}(q,\hat t^{(e)})$. This accounts
for a $\bG_m$ worth of $t_e$'s for each such $(q,\hat t^{(e)})$,
contributing a class 
\[
(\Lbb-1) \{ \cZ_{G/e} \} .
\]

$\bullet$ If $Z_{G/e}(q,\hat t^{(e)})= 0$, then $Z_G(q,t)\ne 0$ if and only 
if $Z_{G\smallsetminus e}(q,\hat t^{(e)})\ne
0$. This accounts for an $\A^1$ worth of $t_e$'s for each $(q,\hat t^{(e)})$ 
such that $Z_{G/e}(q,\hat t^{(e)})= 0$ and
$Z_{G\smallsetminus e}(q,\hat t^{(e)})\ne 0$. This contributes a
class
\[
\Lbb \cdot [\cZ_{G/e} \smallsetminus (\cZ_{G/e} \cap \cZ_{G\smallsetminus
e})] .
\]
Note that
\[
[\cZ_{G/e}]-[\cZ_{G/e} \cap \cZ_{G\smallsetminus e}] = \Lbb^{\#E(G)}-[\cZ_{G/e}
  \cap \cZ_{G\smallsetminus e}] -\Lbb^{\#E(G)}+[\cZ_{G/e}] \] \[ =\{ \cZ_{G/e} \cap
\cZ_{G\smallsetminus e} \} - \{ \cZ_{G/e} \} .
\]

Thus, the two contributions add up to
\[
\{\cZ_G\}=(\Lbb-1) \{ \cZ_{G/e} + \Lbb\left(
\{\cZ_{G/e}\cap \cZ_{G\smallsetminus e}\}-\{\cZ_{G/e}\}\right)
= \Lbb \{\cZ_{G/e}\cap \cZ_{G\smallsetminus e}\}- \{\cZ_{G/e}\} ,
\]
as claimed.
\endproof

The following properties of the classes of Potts model hypersurfaces are
simple consequences of the definitions, or follow easily from Theorem~\ref{delconZGthm}.

\begin{cor}\label{listZGprop}
The classes $\{ \cZ_G \} \in K_0(\cV)$ of \eqref{hypcomplPotts} satisfy the
following properties:
\begin{enumerate}
\item If $G$ consists of a single vertex and no edges, then  $\{ \cZ_G \} =\bL-1$.
\item If $G$ consists of a single edge, with either one or two vertices,
then $\{ \cZ_G \} = (\bL-1)^2$.
\item If a graph $G'$ is the union $G'=G_1 \cup_v G_2$ of two graphs joined at a vertex $v$, and
$G''$ denotes the disjoint union of the same two graphs, then $\{ \cZ_{G'} \} = \{ \cZ_{G''} \}$.
\item If $G$ is obtained
by joining $G_1$ and $G_2$ with a single edge from a vertex of $G_1$
to a vertex of $G_2$, then $\{\cZ_G\} = (\Lbb-1) \{\cZ_{G'}\}$, with $G'=G_1 \cup_v G_2$ as above.
\item If $\overline G$ is obtained from a graph $G$ by appending a single (looping or
otherwise) edge to a vertex, then $\{\cZ_{\overline G}\} = (\Lbb-1) \{\cZ_G\}$.
\end{enumerate}
\end{cor}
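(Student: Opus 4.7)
The plan is to verify the five properties in turn. Parts (1)--(3) follow by direct inspection of the definition \eqref{multiTutte}, and parts (4) and (5) reduce to applications of the deletion--contraction formula of Theorem~\ref{delconZGthm} combined with (3).

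Parts (1) and (2) are short explicit computations. A single vertex gives $Z_G = q$ and $\cZ_G = \{0\}\subset \A^1$, so $\{\cZ_G\} = \bL - 1$. A single edge (looped or not) gives a polynomial that factors as $q(q+t_e)$ in the two-vertex case or as $q(1+t_e)$ in the one-vertex looped case; in either situation $\cZ_G$ is a union of two distinct lines in $\A^2$ meeting at one point, of class $2\bL - 1$, so the complement has class $(\bL-1)^2$.

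The core of the argument lies in (3). For a spanning subgraph $H$ of $G' = G_1 \cup_v G_2$, decomposing as $H = H_1 \cup_v H_2$ with $H_i \subseteq G_i$ both containing $v$, one has $b_0(H) = b_0(H_1) + b_0(H_2) - 1$; for the corresponding spanning subgraph $H_1 \sqcup H_2$ of $G'' = G_1 \sqcup G_2$, one has instead $b_0 = b_0(H_1) + b_0(H_2)$. A term-by-term comparison in \eqref{multiTutte} therefore yields $Z_{G''} = q \cdot Z_{G'}$. Since the edge sets of $G'$ and $G''$ coincide, both polynomials live in the same ring and on the same ambient affine space; moreover $q \mid Z_{G'}$ (as $k(G') \geq 1$), so the hyperplane $\{q = 0\}$ is already a component of $\cZ_{G'}$. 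Hence $\cZ_{G''} = \cZ_{G'}$ as subvarieties, and their Grothendieck classes agree.

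Parts (4) and (5) follow from Theorem~\ref{delconZGthm} applied to the distinguished edge. For (4), the bridging edge $e$ satisfies $G \smallsetminus e = G''$ and $G/e = G'$; by (3) the hypersurfaces of deletion and contraction coincide, so the intersection in \eqref{delconZG} equals $\cZ_{G/e}$ itself and the formula collapses to $\bL\{\cZ_{G/e}\} - \{\cZ_{G/e}\} = (\bL-1)\{\cZ_{G'}\}$. For (5), either the appended edge $e$ is a loop, in which case the standard loop computation on the multivariate Tutte polynomial gives $\overline G \smallsetminus e = \overline G/e = G$; or $e$ attaches a new pendant vertex $w$ to $v \in V(G)$, in which case $\overline G/e = G$ and $\overline G \smallsetminus e = G \sqcup \{w\}$, for which the same factor-of-$q$ argument as in (3) gives $Z_{G \sqcup \{w\}} = q \cdot Z_G$ and hence $\cZ_{G \sqcup \{w\}} = \cZ_G$. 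In every sub-case the deletion and contraction hypersurfaces agree, and Theorem~\ref{delconZGthm} collapses to $(\bL-1)\{\cZ_G\}$.

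The main point to verify, underpinning (3)--(5), is that the ``extra $q$'' separating $Z_{G''}$ from $Z_{G'}$ (and $Z_{G \sqcup \{w\}}$ from $Z_G$) produces no new component of the zero locus, because $\{q=0\}$ is already contained in $\cZ_{G'}$. Once this and the invariance of the ambient affine space under the relevant operations are checked, all five identities follow with no further calculation.
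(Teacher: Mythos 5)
Your proof is correct and follows essentially the same route as the paper's: direct computation for (1)--(2), the $q$-divisibility observation for (3), and Theorem~\ref{delconZGthm} with the intersection collapsing via (3) for (4)--(5). The only cosmetic difference is in (3), where the paper invokes the known factorizations $Z_{G'} = \tfrac{1}{q} Z_{G_1} Z_{G_2}$ and $Z_{G''} = Z_{G_1} Z_{G_2}$ directly, while you rederive the relation $Z_{G''} = q\,Z_{G'}$ from the $b_0$-count on spanning subgraphs; both lead to the same conclusion, and your observation that the factor of $q$ adds no new component because $\{q=0\}\subset\cZ_{G'}$ already is exactly the key point.
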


\proof (1) If $G$ consists of a single vertex, then $Z_G=q$ defines a point in the
affine line $\Abb^1$ with coordinate $q$.

(2) If $G$ is a single edge 
joining two distinct vertices, then $Z_G(q,t)=qt+q^2$ and therefore
$\{\cZ_G\}=\Lbb^2-2\Lbb+1=(\Lbb-1)^2$. 

If $G$ is a single looping edge, then
$\cZ_G(q,t)=qt+q$ and again $\{\cZ_G\}=(\Lbb-1)^2$. 

(3) If $G'$ consists of 
two graphs $G_1$ and $G_2$ joined at a vertex, then $Z_{G'}(q,t)=
\frac 1q Z_{G_1} Z_{G_2}$. If $G''$ consists of the disjoint
union of $G_1$ and $G_2$, then $Z_{G''}(q,t)=Z_{G_1} Z_{G_2}$. It follows (if
none of the graphs is empty) that $\{\cZ_{G'}\}=\{\cZ_{G''}\}$, and that
$\{\cZ_{G'}\cap \cZ_{G''}\} =\{\cZ_{G'}\}$. 

(4) Applying
Theorem~\ref{delconZGthm}, this implies that if $G$ is obtained
by joining $G_1$ and $G_2$ with a single edge from a vertex of $G_1$
to a vertex of $G_2$, then $\{\cZ_G\} = (\Lbb-1) \{\cZ_{G'}\}$.

(5) Let $G'$ be the disjoint union of $G$ and of a single vertex. Then
$Z_{G'} =qZ_G$, and in particular $\{Z_{G'}\}=\{Z_G\}$, and
$Z_{G'}\cap Z_G=Z_G$.  Applying Theorem~\ref{delconZGthm}, we see
that if $\overline G$ is obtained by appending a single non-looping
edge to a vertex of $G$, then $\{\cZ_{\overline G}\} = (\Lbb-1) \{\cZ_G\}$.
The same conclusion is reached if $\overline G$ is obtained from
$G$ by adding a looping edge.
\endproof

Notice that, unlike what happens with the graph hypersurfaces of Feynman 
diagrams (see \cite{AluMa2}), in the case of a disjoint union of graphs the 
class of $\{\cZ_{G_1\cup G_2}\}$ is not the product of the classes of the 
hypersurface complements of the two graphs, since here the polynomials 
$Z_{G_1}$ and $Z_{G_2}$ have the same variable $q$ in common.
However, if the graph $G=G_1\cup G_2$ is given by the disjoint union of 
two graphs $G_1$ and $G_2$, and all the graphs involved satisfy the fibration 
condition \eqref{ZGfibrationq}, then we can obtain an explicit formula for the 
class $\{ \cZ_{G_1 \cup G_2} \}$.

\begin{cor}\label{disjunfibrcor}
Let $G=G_1\cup G_2$ be the disjoint union of two finite graphs $G_1$ and $G_2$,
and assume that the classes $\{ \cZ_G \}$ and  $\{ \cZ_{G_i} \}$ satisfy the fibration condition 
\eqref{ZGfibrationq}. Then the class $\{ \cZ_G \}$ can be expressed
explicitly in terms of the classes $\{ \cZ_{G_1} \}$ and $\{ \cZ_{G_2} \}$ by
\begin{equation}\label{disjunfibr}
\{ \cZ_{G_1\cup G_2} \}=
\frac{\{\cZ_{G_1}\}\cdot \{\cZ_{G_2}\}-\Tbb^{\# E(G_1)}\cdot \{\cZ_{G_2}\}
-\Tbb^{\#E(G_2)}\cdot \{\cZ_{G_1}\}+\Tbb^{\#E(G_1)+\#E(G_2)+1}}{\Tbb-1}.
\end{equation}
\end{cor}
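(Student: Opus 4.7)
The plan is to reduce the statement about $\{\cZ_G\}$ to a statement about the fixed-$q$ slices $\{\cZ_{G_i,q\neq 0,1}\}$, where the polynomial factors and the classes multiply, and then reassemble the answer using the fibration condition \eqref{ZGfibrationq}.

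First I would observe that because $G_1$ and $G_2$ share no edges, the variable sets $t^{(1)} = (t_e)_{e\in E(G_1)}$ and $t^{(2)} = (t_e)_{e\in E(G_2)}$ are disjoint, and from \eqref{multiTutte} one has the product formula
\begin{equation*}
Z_{G_1\cup G_2}(q,t) = Z_{G_1}(q,t^{(1)})\cdot Z_{G_2}(q,t^{(2)}).
\end{equation*}
Consequently, for any fixed value of $q$ for which neither factor is identically zero (in particular for $q\ne 0$), the locus $\cZ_{G,q}$ is the union of the cylinders $\cZ_{G_1,q}\times \A^{\# E(G_2)}$ and $\A^{\# E(G_1)}\times \cZ_{G_2,q}$, and its complement is the product of the individual complements. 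Passing to the Grothendieck ring,
\begin{equation*}
\{\cZ_{G,q}\} = \{\cZ_{G_1,q}\}\cdot\{\cZ_{G_2,q}\}
\end{equation*}
for every such $q$; specializing to $q\ne 0,1$ gives $\{\cZ_{G,q\ne 0,1}\} = \{\cZ_{G_1,q\ne 0,1}\}\cdot\{\cZ_{G_2,q\ne 0,1\}}$.

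Next I would invoke the fibration condition \eqref{ZGfibrationq} for $G_1$ and $G_2$ separately to solve for the slice classes:
\begin{equation*}
\{\cZ_{G_i, q\ne 0,1}\} = \frac{\{\cZ_{G_i}\} - \Tbb^{\# E(G_i)}}{\Tbb - 1},\qquad i=1,2.
\end{equation*}
Substituting the product of these two expressions into \eqref{ZGfibrationq} applied to $G = G_1\cup G_2$, and using $\# E(G) = \# E(G_1) + \# E(G_2)$, one obtains
\begin{equation*}
\{\cZ_G\} = \frac{\bigl(\{\cZ_{G_1}\}-\Tbb^{\# E(G_1)}\bigr)\bigl(\{\cZ_{G_2}\}-\Tbb^{\# E(G_2)}\bigr)}{\Tbb-1} + \Tbb^{\# E(G_1)+\# E(G_2)}.
\end{equation*}
Placing the final summand over the common denominator $\Tbb-1$ produces an extra $\Tbb^{\# E(G_1)+\# E(G_2)+1}-\Tbb^{\# E(G_1)+\# E(G_2)}$ in the numerator, which cancels the $+\Tbb^{\# E(G_1)+\# E(G_2)}$ coming from expanding the product; expanding the remainder gives exactly the right-hand side of \eqref{disjunfibr}.

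The only subtle step is the multiplicativity of the slice classes, which requires knowing that each $Z_{G_i}(q,t^{(i)})$ is not the zero polynomial for generic $q$; since this only needs to hold at values $q\ne 0,1$, it causes no trouble (indeed, $Z_{G_i}(q,t^{(i)})$ is a nonzero polynomial whenever $q\ne 0$, as $q^{\# V(G_i)}$ appears as a monomial). I do not expect a serious obstacle beyond organizing the algebra cleanly; all the geometric content is contained in the factorization of $Z_G$ and the hypothesis \eqref{ZGfibrationq}.
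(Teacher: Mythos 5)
Your proof is correct and follows essentially the same route as the paper's: both exploit the factorization $Z_{G_1\cup G_2}=Z_{G_1}\cdot Z_{G_2}$ to get multiplicativity of the fixed-$q$ slice classes, then use the fibration condition \eqref{ZGfibrationq} twice (once to solve for each slice class, once to reassemble $\{\cZ_G\}$). The only thing to fix is a minor LaTeX typo where you wrote $\{\cZ_{G_2,q\ne 0,1\}}$ with the closing brace misplaced.
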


\proof For  $q$ fixed, the remaining variables are indeed
distinct for disjoint $G_1$, $G_2$. Thus, the classes satisfy
\begin{equation}\label{Feyruleq}
\{\cZ_{G_1\cup G_2,q}\} 
=\{\cZ_{G_1,q}\} \cdot \{\cZ_{G_2,q}\} .
\end{equation}
(This holds for all $q$, including the special values $q=0$ and $q=1$.) 
If the classes $\{ \cZ_{G_i,q} \}$ are independent of 
$q\ne 0,1$ and the formula \eqref{ZGfibrationq} holds,
then we get
\begin{align*}
\{\cZ_{G_1\cup G_2}\}
&=(\Tbb-1) \{\cZ_{G_1\cup G_2, q\neq 0,1}\} +\Tbb^{\# E(G_1)+\# E(G_2)}\\
&=(\Tbb-1) (\{\cZ_{G_1, q\neq 0,1}\} \cdot \{\cZ_{G_2, q\neq 0,1}\}) 
+\Tbb^{\#E(G_1)+\#E(G_2)}\\
&=(\Tbb-1) \left(\frac{\{\cZ_{G_1}\}-\Tbb^{\#E(G_1)}}{\Tbb-1} \cdot 
\frac{\{\cZ_{G_2}\}-\Tbb^{\#E(G_2)}}{\Tbb-1} \right) 
+\Tbb^{\#E(G_1)+\#E(G_2)}\\
&=\frac{\{\cZ_{G_1}\}\cdot \{\cZ_{G_2}\}-\Tbb^{\#E(G_1)}\cdot \{\cZ_{G_2}\}
-\Tbb^{\# E(G_2)}\cdot \{\cZ_{G_1}\}+\Tbb^{\#E(G_1)+\#E(G_2)+1}}{\Tbb-1}
\end{align*}
\endproof

\subsection{Deletion--contraction for fixed $q$}

Deletion--contraction works exactly as in the case of the full Potts model
hypersurface $\cZ_G$. 

\begin{prop}\label{deleconPottsq}
For a finite graph $G$, the class $\{ \cZ_{G,q} \}$ for fixed $q$ satisfies
\begin{equation}\label{delconZGq}
\{\cZ_{G,q} \}=(\Tbb+1) \{\cZ_{G/e,q}\cap \cZ_{G\smallsetminus e,q}\}- \{\cZ_{G/e,q}\} .
\end{equation}
\end{prop}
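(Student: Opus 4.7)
The plan is to mimic the proof of Theorem~\ref{delconZGthm} verbatim, restricting attention to the fiber over a fixed $q$. The key observation is that the deletion--contraction identity
\[
Z_G(q,t) = Z_{G\smallsetminus e}(q,\hat t^{(e)}) + t_e\, Z_{G/e}(q,\hat t^{(e)})
\]
is linear in $t_e$, so one can analyze the locus where $Z_G(q,t)\ne 0$ by fibering the ambient $\Abb^{\#E(G)}$ with coordinates $t$ over $\Abb^{\#E(G)-1}$ with coordinates $\hat t^{(e)}$, and for each base point counting the $t_e$'s that give a nonvanishing value.

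First I would split according to whether $Z_{G/e}(q,\hat t^{(e)})$ vanishes. On the open locus in $\Abb^{\#E(G)-1}\smallsetminus \cZ_{G/e,q}$ where $Z_{G/e}(q,\hat t^{(e)})\ne 0$, the polynomial $Z_G(q,t)$ is nonzero on a $\bGbb_m$-worth of $t_e$, giving a contribution $(\Lbb-1)\{\cZ_{G/e,q}\}$. On the closed locus $\cZ_{G/e,q}$, $Z_G(q,t)$ is independent of $t_e$ and is nonzero precisely where $Z_{G\smallsetminus e}(q,\hat t^{(e)})\ne 0$, contributing $\Lbb\cdot[\cZ_{G/e,q}\smallsetminus(\cZ_{G/e,q}\cap\cZ_{G\smallsetminus e,q})]$. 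Summing the two contributions and using the identity
\[
[\cZ_{G/e,q}] - [\cZ_{G/e,q}\cap\cZ_{G\smallsetminus e,q}] = \{\cZ_{G/e,q}\cap\cZ_{G\smallsetminus e,q}\} - \{\cZ_{G/e,q}\}
\]
in $K_0(\cV)$ (as in the proof of Theorem~\ref{delconZGthm}) collapses the expression to
\[
\Lbb\{\cZ_{G/e,q}\cap\cZ_{G\smallsetminus e,q}\} - \{\cZ_{G/e,q}\}.
\]
Since $\Tbb+1=\Lbb$ by definition of the Lefschetz motive and the class of $\bGbb_m$, this is the claimed formula.

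There is no real obstacle here beyond the bookkeeping: the argument is literally the proof of Theorem~\ref{delconZGthm} with $q$ held fixed and the affine line in the $q$-direction removed from the ambient space. The only substantive point worth highlighting is that the case analysis (zero vs.\ nonzero $Z_{G/e}$) covers \emph{all} edges uniformly, with no separate treatment needed for bridges or loops, because of the shape of the multivariate deletion--contraction identity \eqref{multidelcon}. The replacement of $\Lbb$ by $\Tbb+1$ in the statement is purely cosmetic, but it makes the formula ready for use in conjunction with the fibration condition \eqref{ZGfibrationq}, where the class $\Tbb$ plays the natural role.
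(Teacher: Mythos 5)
Your argument is correct and is exactly the approach the paper takes: the paper's proof of Proposition~\ref{deleconPottsq} simply states that ``the argument is identical to the one used in the proof of Theorem~\ref{delconZGthm},'' which is precisely the fiberwise case analysis over $Z_{G/e}\ne 0$ versus $Z_{G/e}=0$ that you reproduce, with $q$ held fixed and the ambient space dropping by one dimension. The paper also parenthetically notes the degenerate case $q=0$ (where all classes vanish); your argument handles that case correctly as well since both sides reduce to zero.
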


The argument is identical to the one used in the proof of Theorem \ref{delconZGthm}.
(For $q=0$, all classes equal~$0$.) We then also have the analog of 
Corollary~\ref{listZGprop}.

\begin{cor}\label{listZGqprop}
For $q\ne 0$, the classes $\{ \cZ_{G,q} \}$ satisfy the
following properties:
\begin{enumerate}
\item For $G$ a single vertex, $\{ \cZ_{G,q} \}=1$.
\item For $G$ a single edge joining either one or two vertices, $\{\cZ_{G,q}\}=\Tbb$.
\item If $G'$ consists of two graphs $G_1$ and $G_2$ joined at a vertex, or 
disjoint, then $\{\cZ_{G',q}\}=\{\cZ_{G_1,q}\} \cdot \{Z_{G_2,q}\}$.
\item If $G$ is obtained by joining $G_1$ and $G_2$ with a single edge from a
vertex of $G_1$ to a vertex of $G_2$, then
$\{\cZ_{G,q}\} = \Tbb \{\cZ_{G',q}\}$.
\item Appending a single edge to a vertex of $G$ multiplies the class 
$\{\cZ_{G,q}\}$ by $\Tbb$.
\end{enumerate}
\end{cor}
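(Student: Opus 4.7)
The plan is to prove each of the five items directly, mirroring the strategy of Corollary \ref{listZGprop} but now using Proposition \ref{deleconPottsq} and the fact that, once $q$ is fixed to a nonzero value, the edge variables of disjoint pieces genuinely decouple.

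For (1), $G$ is a single vertex, so $Z_G(q)=q$, which is nonzero when $q\ne 0$; hence $\cZ_{G,q}=\emptyset \subset \A^0$ and $\{\cZ_{G,q}\}=1$. For (2), direct computation gives $Z_G(q,t)=q(q+t)$ for a non-looping single edge and $Z_G(q,t)=q(1+t)$ for a looping edge; for $q\ne 0$ each defines a single point in $\A^1$, with complement class $\Lbb-1=\Tbb$. For (3), I would use that $Z_{G_1\cup_v G_2}(q,t)=q^{-1}Z_{G_1}(q,t)Z_{G_2}(q,t)$ and $Z_{G_1\sqcup G_2}(q,t)=Z_{G_1}(q,t)Z_{G_2}(q,t)$: for $q\ne 0$ both polynomials have the same zero locus, and in the fixed-$q$ variables this locus is the union of the two cylinders $\cZ_{G_1,q}\times \A^{\#E(G_2)}$ and $\A^{\#E(G_1)}\times \cZ_{G_2,q}$ inside $\A^{\#E(G_1)+\#E(G_2)}$. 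The complement of this union is the intersection of the individual complements, so $\{\cZ_{G',q\}}=\{\cZ_{G_1,q}\}\cdot\{\cZ_{G_2,q}\}$ immediately.

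For (4), I would apply Proposition \ref{deleconPottsq} to the bridge edge $e$. Since $Z_{G/e}(q,\hat t)=q^{-1}Z_{G\smallsetminus e}(q,\hat t)$, for $q\ne 0$ the hypersurfaces $\cZ_{G/e,q}$ and $\cZ_{G\smallsetminus e,q}$ coincide, so their intersection equals either one, and the formula \eqref{delconZGq} collapses to
\[
\{\cZ_{G,q}\}=(\Tbb+1)\{\cZ_{G/e,q}\}-\{\cZ_{G/e,q}\}=\Tbb\,\{\cZ_{G',q}\},
\]
where we used part (3) to identify $\cZ_{G/e,q}$ with $\cZ_{G',q}$. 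For (5), I would compute directly: applying deletion--contraction to the appended edge (or just factoring $Z_{\overline G}$) shows that in both the looping and non-looping cases $Z_{\overline G}(q,t_e,\hat t)=(a+t_e)Z_G(q,\hat t)$ with $a=1$ or $a=q$, both nonzero for $q\ne 0$. Thus $\cZ_{\overline G,q}$ is the union of the hyperplane $\{t_e=-a\}$ and the cylinder $\A^1\times\cZ_{G,q}$, and a short inclusion--exclusion gives $\{\cZ_{\overline G,q}\}=(\Lbb-1)(\Lbb^{\#E(G)}-[\cZ_{G,q}])=\Tbb\,\{\cZ_{G,q}\}$.

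I do not expect a substantive obstacle: each item is either a direct polynomial computation or a one-line application of Proposition \ref{deleconPottsq}. The only subtlety worth flagging is the contrast with Corollary \ref{listZGprop}(3) and Corollary \ref{disjunfibrcor}: fixing $q\ne 0$ removes the single shared variable between disjoint subgraphs, which is precisely what makes the product formula in (3) here clean, as opposed to the $(\Tbb-1)$-divided expression that arises in the full $\cZ_G$ setting.
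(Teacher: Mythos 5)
Your proof is correct and follows essentially the same strategy as the paper's: items (1)--(3) are identical, and the key observation in (3) (that fixing $q\ne 0$ removes the shared variable, so the fixed-$q$ complements genuinely multiply) is exactly what the paper leans on. For (4) and (5) the paper simply reduces to item (3) together with the single-edge computation (viewing the bridge or the dangling edge as an extra vertex-joined piece), whereas you route (4) through Proposition~\ref{deleconPottsq} together with the bridge identity $Z_{G/e}=q^{-1}Z_{G\smallsetminus e}$, and (5) through a direct factorization and inclusion--exclusion; both of these are valid and equally short, so the difference is cosmetic rather than substantive.
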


\proof (1) For $G$ a single vertex, $Z_G(q,t)=q\ne 0$; thus 
$\cZ_{G,q}=\emptyset\subset \Abb^0$, and $\{ \cZ_{G,q} \}=1$.

(2) For $G$ a single edge joining two distinct vertices, $Z_G(q,t)=qt+q^2$ defines
a point in $\Abb^1$; thus $\{\cZ_{G,q}\}=\Tbb$.
For $G$ a single looping edge, $Z_G(q,t)=qt+q$ and again $\{\cZ_{G,q}\}=\Tbb$.

(3)  This case is simpler than the case for $\{ \cZ_G \}$ with indeterminate $q$.
It follows, as in \eqref{Feyruleq}, from the fact that the polynomials $Z_{G_1}(q,t)$
and $Z_{G_2}(q,t)$ have none of the variables other than the fixed $q$ in common.

(4)  This again follows from the simpler formula for unions as in (3), 
and from the computation for a single edge. 
The relation is the same as in the indeterminate-$q$ case.

(5) This is again the same formula as in the free case,
which here follows from (3) and the case of a single edge.
\endproof

\subsection{Deletion--contraction for the tangent cone}  
  
In the case of the tangent cone $\cV_G$ at zero of the variety $\cZ_G$, which, as
we have seen interpolates between the Potts model and the graph hypersurfaces
considered in the quantum field theory context, it is convenient to introduce the
following notation for the classes in the Grothendieck ring.

We still denote by $[\cV_G]$ the class of $\cV_G$ and by $\{ \cV_G \}  =[\bA^{\#E(G)+1}\smallsetminus \cV_G ]$ the class of the complement. We also use the notation 
$\cY_G$ for the graph hypersurface given by the intersection of the component
$\cW_G=\{ Q_G(q,t)=0\}$ with the hyperplane $H=\{ q=0 \}$. 
Thus, $\cY_G$ is the locus of zeros of $Q_G(0,t)$, where 
$P_G(q,t)=q^{k(G)} Q_G(q,t)$. This gives, at the level of the classes
\begin{equation}\label{coneVWY}
 [\cV_G]= [\cW_G] + \bL^{\# E(G)} - [\cY_G], 
\end{equation}
where $\bL^{\# E(G)}=[H]$ and $\cV_G=\cW_G \cup H$ with $\cY_G =\cW_G \cap H$.
Thus, we obtain the following.

\begin{lem}\label{complVG}
The class $\{ \cV_G \}$ of the complement of the tangent cone $\cV_G$ in
$\bA^{\#E(G)+1}$ is given by the class $\{ \cV_G \}=\{ \cW_G \} - \{ \cY_G \}$.
\end{lem}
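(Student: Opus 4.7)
The plan is to derive the identity from the decomposition $\cV_G = \cW_G \cup H$ with intersection $\cY_G = \cW_G \cap H$, already recorded in Lemma \ref{normconegraph} and reflected in the class identity \eqref{coneVWY}. By inclusion--exclusion in the Grothendieck ring, and using that the hyperplane $H = \{q=0\}$ has class $[H] = \bL^{\#E(G)}$, one has
\begin{equation*}
[\cV_G] \;=\; [\cW_G] \,+\, \bL^{\#E(G)} \,-\, [\cY_G].
\end{equation*}

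Next I would pass to complements inside the ambient affine space $\bA^{\#E(G)+1}$. By definition $\{\cV_G\} = \bL^{\#E(G)+1} - [\cV_G]$ and $\{\cW_G\} = \bL^{\#E(G)+1} - [\cW_G]$, so substituting the preceding relation yields
\begin{equation*}
\{\cV_G\} \;=\; \bL^{\#E(G)+1} - [\cW_G] - \bL^{\#E(G)} + [\cY_G]
\;=\; \{\cW_G\} - \bigl(\bL^{\#E(G)} - [\cY_G]\bigr).
\end{equation*}
Since $\cY_G$ sits naturally in $H \cong \bA^{\#E(G)}$ as a graph hypersurface (in the sense of Definition \ref{GraphHypDef}), the complement class in this ambient space is $\{\cY_G\} = \bL^{\#E(G)} - [\cY_G]$, and the claimed formula $\{\cV_G\} = \{\cW_G\} - \{\cY_G\}$ follows.

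There is no real obstacle to this argument; it is just bookkeeping with inclusion--exclusion together with the geometric decomposition established in Lemma \ref{normconegraph}. The only point that requires care is the convention for the complement class $\{\cY_G\}$: since $\cY_G$ is a hypersurface in $H \cong \bA^{\#E(G)}$ rather than in the full $\bA^{\#E(G)+1}$, the symbol $\{\cY_G\}$ must be interpreted as the complement in its natural lower-dimensional ambient, otherwise the dimensions in the final identity will not balance. Once this convention is fixed, the proof is a one-line consequence of \eqref{coneVWY}.
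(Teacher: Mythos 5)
Your proof is correct and matches the paper's argument exactly: both pass from equation \eqref{coneVWY} to complements in $\bA^{\#E(G)+1}$, while noting that $\{\cY_G\}$ denotes the complement in the lower-dimensional ambient $H\cong\bA^{\#E(G)}$. (The paper also remarks, as a one-line alternative, that the complement of $\cW_G$ is the disjoint union of the complement of $\cV_G$ and of $\cY_G$ in $H$, which gives the same identity directly.)
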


\proof Using \eqref{coneVWY} and taking complements, we obtain
$$ \bL^{\#E(G)+1} - [ \cV_G ] = \bL^{\#E(G)+1} - [\cW_G] - \bL^{\# E(G)} + [\cY_G]=
\{ \cW_G \}-\{\cY_G\}. $$
Alternately, one may simply observe that the complement of $\cW_G$ is the 
disjoint union of $\cV_G$ and $cY_G$.
\endproof

We can then see directly that the class $\{ \cV_G \}$ satisfies the following simple
properties.

\begin{lem}\label{propVGlist}
The class $\{ \cV_G \}$ satisfies:
\begin{enumerate}
\item If $G'$ is obtained by attaching a looping edge to a vertex of $G$, then
$$\{ \cV_{G'} \}=(\bT+1) \{ \cV_G \}.$$
\item If $G'$ is obtained by appending a non-looping edge to a vertex of $G$, then
$$\{ \cV_{G'} \}=\bT \{ \cV_G \}.$$
\item More generally, if $G$ is obtained by connecting two disjoint graphs by a 
bridge~$e$, then $\{\cV_G\}=\Tbb \{\cV_{G/e}\}$.
\item If $G'$ is obtained by attaching an edge parallel to one of the edges of $G$,
then $$\{ \cV_{G'} \}=(\bT+1) \{ \cV_G \}.$$
\end{enumerate}
\end{lem}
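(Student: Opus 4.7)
The plan hinges on the leading-term deletion--contraction identity
\[
P_G(q,t) = P_{G\setminus e}(q,\hat t) + t_e\, P_{G/e}(q,\hat t),
\]
valid for any non-loop edge $e$ of $G$; this is immediate from $Z_G = Z_{G\setminus e} + t_e Z_{G/e}$ combined with the degree computation in the proof of Lemma~\ref{normconegraph}, which shows that both summands on the right contribute leading parts of the common minimal degree $\#V(G)$. When $e$ is a loop, $\#V(G/e) = \#V(G)$, so $t_e P_{G/e}$ is of strictly higher degree; in that case $P_G = P_{G\setminus e}$ and, viewed in the ambient space for $G$, does not depend on $t_e$, so $\cV_{G'} = \cV_{G}\times \bA^1_{t_e}$ whenever $G'$ is obtained from $G$ by adjoining a looping edge.

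From the non-loop identity, the verbatim argument of Theorem~\ref{delconZGthm} yields the auxiliary workhorse relation
\[
\{\cV_G\} = \bL\,\{\cV_{G\setminus e}\cap \cV_{G/e}\} - \{\cV_{G/e}\},
\]
valid for any non-loop edge $e$. I would establish this first, then deduce the four items. Item~(1) is the loop observation above: $\{\cV_{G'}\}=\bL\{\cV_G\}=(\bT+1)\{\cV_G\}$. For items~(2) and~(3) (the pendant edge being the bridge case with $G_2$ a single vertex), if $e$ is a bridge with $G\setminus e = G_1\sqcup G_2$ and $G/e = G_1\cup_v G_2$, the identities $Z_{G_1\sqcup G_2} = Z_{G_1}Z_{G_2}$ and $Z_{G_1\cup_v G_2} = q^{-1}Z_{G_1}Z_{G_2}$ force $P_{G\setminus e} = q\, P_{G/e}$. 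Hence $\cV_{G\setminus e} = \{q=0\}\cup \cV_{G/e}$, so $\cV_{G\setminus e}\cap \cV_{G/e} = \cV_{G/e}$, and the workhorse relation collapses to $\{\cV_G\} = (\bL-1)\{\cV_{G/e}\} = \bT\,\{\cV_{G/e}\}$.

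For item~(4), if the doubled edge is already a loop then the claim reduces to item~(1). Otherwise let $e$ be a non-loop edge of $G$ and $e'$ its new parallel mate in $G'$; then $G'\setminus e' = G$, while $H := G'/e'$ is $G/e$ with a loop at the contracted vertex (the image of $e$). By the loop observation $\cV_H = \cV_{G/e}\times \bA^1_{t_e}$, and using $P_G = P_{G\setminus e} + t_e P_{G/e}$ one sees that on the locus $\{P_{G/e}=0\}$ the polynomial $P_G$ reduces to $P_{G\setminus e}$, so $\cV_H\cap \cV_G = (\cV_{G\setminus e}\cap \cV_{G/e})\times \bA^1_{t_e}$. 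Substituting into the workhorse relation applied to $G'$ with edge $e'$, and pulling out a factor of $\bL$, recovers the workhorse relation for $G$ with edge $e$:
\[
\{\cV_{G'}\} = \bL\bigl(\bL\,\{\cV_{G\setminus e}\cap \cV_{G/e}\} - \{\cV_{G/e}\}\bigr) = \bL\,\{\cV_G\} = (\bT+1)\{\cV_G\}.
\]

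The only delicate point is the loop/non-loop dichotomy: the leading-term identity for $P_G$ genuinely fails in the loop case, so each deletion or contraction step must be checked against the type of edge involved (and in item~(4) one performs this check on both $e'$ in $G'$ and $e$ inside $G$). Once this bookkeeping is kept straight and the workhorse relation is secured, each of the four items reduces to a short manipulation with no real analytical obstacle.
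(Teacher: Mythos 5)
Your proof is correct, but you route items (2)--(4) through a deletion--contraction identity for $\cV_G$ (your ``workhorse relation'' $\{\cV_G\} = \bL\{\cV_{G\setminus e}\cap\cV_{G/e}\} - \{\cV_{G/e}\}$, valid for any non-loop edge since $P_G = P_{G\setminus e} + t_e P_{G/e}$ holds in that generality), whereas the paper argues directly from factorizations of $P_G$. Concretely, for (2)--(3) the paper simply observes $P_{G'} = (q+t_e)\,P_G$, so the complement of $\cV_{G'}$ fibers over the complement of $\cV_G$ with $\Gbb_m$-fibers; and for (4) it observes $P_{G'}$ is $P_G$ with $t_e \mapsto t_e + t_f$, so $\cV_{G'}$ is a cylinder on $\cV_G$. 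You instead feed the same auxiliary facts ($P_{G\setminus e} = q\,P_{G/e}$ for bridges; $\cV_{G'/e'} = \cV_{G/e}\times\Abb^1$ and $\cV_{G'\setminus e'}\cap\cV_{G'/e'} = (\cV_{G\setminus e}\cap\cV_{G/e})\times\Abb^1$ for parallel edges) into the workhorse relation and simplify. Both approaches are sound; the paper's is shorter, while yours makes explicit that the $\cV$-version of the deletion--contraction relation holds for any non-loop edge (including bridges) --- slightly more general than what Proposition~\ref{deleconPottstc} states for $Q_G$, where the regular-edge hypothesis is genuinely needed because of the $q^{k(G)}$ normalization --- and demonstrates how that single identity can be used to deduce all the special-edge cases. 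Your care in item~(4) to check that the edge being doubled might itself be a loop, reducing to item~(1), is a detail the paper leaves implicit.
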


\proof (1) The polynomial $P_G$ is only counting forests and loops are excluded
from forests, so the polynomial $P_{G'}$ equals
$P_G$, but it is viewed in one dimension higher. This simply multiplies
everything by $\Lbb=\Tbb+1$.

(2) Attaching an unconnected edge $e$ multiplies $P_G$ by $(q+t_e)$.
The condition $(t+q)P_G\ne 0$ implies $P_G\ne 0$ and
$t+q\ne 0$. This says that the complement to $P_{G'}=0$ fibers over
the complement to $P_G=0$, with $\Tbb$ fibers.

(3) The same argument proves this assertion.

(4) Let $e$ be the edge of $G$ that we are doubling,
and call $f$ the new parallel edge. Then $P_{G'}$ is obtained from 
$P_G$ by replacing $t_e$ by $t_e+t_f$. This operation
amounts to taking a cylinder, hence multiplying everything by 
$\Lbb=\Tbb+1$.
\endproof

We then look at the deletion contraction formula. In the cases of $\{ \cZ_G \}$
and $\{ \cZ_{G,q} \}$ considered above, we did not have to make a special
case for bridges and looping edges because the combinatorial deletion--contraction
formula for the multivariate Tutte polynomial does not make such a distinction.
However, in the case of the tangent cone, as in the case of the graph hypersurfaces
of Feynman diagrams of \cite{AluMa3}, we need to take these two special cases
into account separately. 

When the edge $e$ is neither a bridge nor a looping edge, 
the deletion-contraction formula for the multivariate Tutte polynomial
specializes to one for $P_G$:
\begin{equation}\label{PGdelcon}
P_G(q,t)=P_{G\smallsetminus e}(q,\hat t^{(e)})
+ t_e\, P_{G/e}(q,\hat t^{(e)}).
\end{equation}
In this case the numbers 
of connected components of $G$, $G\smallsetminus e$, $G/e$ are all 
equal. Therefore, the same formula holds for~$Q_G$. It also holds once 
$q$ is set to $0$ in the latter. 

We say that an edge is a {\em regular edge} if it is neither a bridge
nor a looping edge.

\begin{prop}\label{deleconPottstc}
Assume $e$ is a regular edge of $G$. Then
\begin{gather*}
\{ \cW_G \}=\Lbb \cdot [\Abb^{\#E(G)}-(\cW_{G\smallsetminus e}\cap \cW_{G/e})]
-\{ \cW_{G/e} \}\quad,\\
\{ \cY_G\} =\Lbb \cdot [\Abb^{\#E(G)-1}
-(\cY_{G\smallsetminus e}\cap \cY_{G/e})]
-\{ \cY_{G/e}\}.
\end{gather*}
and the formula for $\{ \cV_G \}$ is then the difference of these.
\end{prop}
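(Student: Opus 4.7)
The plan is to mimic the proof of Theorem~\ref{delconZGthm}, but applied to the leading-term polynomial instead of the full multivariate Tutte polynomial, and then to deduce the $\{\cV_G\}$ identity as a formal consequence via Lemma~\ref{complVG}. The first observation is that, because $e$ is regular (neither a bridge nor a looping edge), the numbers of connected components of $G$, $G\smallsetminus e$ and $G/e$ all coincide, so the deletion--contraction identity \eqref{PGdelcon} for $P_G$ factors through the common prefactor $q^{k(G)}$ and yields the completely analogous relation
\[
Q_G(q,t)=Q_{G\smallsetminus e}(q,\hat t^{(e)})+t_e\,Q_{G/e}(q,\hat t^{(e)}).
\]
Since $\cW_G$ is cut out by $Q_G$, this places us in the exact set-up of the Theorem~\ref{delconZGthm} proof with $Z_G$ replaced by $Q_G$.

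I would then split the class $\{\cW_G\}$ according to the fiber of the projection $(q,t)\mapsto (q,\hat t^{(e)})$ onto $\bA^{\#E(G)}$. When $Q_{G/e}(q,\hat t^{(e)})\neq 0$, exactly one value of $t_e$ satisfies $Q_G=0$, so the contribution to the complement is a $\bG_m$--worth of $t_e$ over the complement of $\cW_{G/e}$, giving $(\Lbb-1)\cdot\{\cW_{G/e}\}$. When $Q_{G/e}(q,\hat t^{(e)})=0$, the fiber is either empty or the whole affine line depending on whether $Q_{G\smallsetminus e}(q,\hat t^{(e)})=0$, producing the term $\Lbb\cdot[\cW_{G/e}\smallsetminus(\cW_{G/e}\cap \cW_{G\smallsetminus e})]$. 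Summing the two contributions and invoking the bookkeeping identity
\[
[\cW_{G/e}]-[\cW_{G/e}\cap \cW_{G\smallsetminus e}]
=\{\cW_{G/e}\cap \cW_{G\smallsetminus e}\}-\{\cW_{G/e}\}
\]
(as in the proof of Theorem~\ref{delconZGthm}) yields the first formula of the proposition.

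The $\{\cY_G\}$ identity is proved by running exactly the same argument on the specialisation $Q_G(0,t)=Q_{G\smallsetminus e}(0,\hat t^{(e)})+t_e\,Q_{G/e}(0,\hat t^{(e)})$, viewed now in $\bA^{\#E(G)}$ rather than $\bA^{\#E(G)+1}$; the drop by one in the exponent of $\Lbb$ in the second formula reflects precisely this smaller ambient space. Finally, the formula for $\{\cV_G\}$ is obtained simply by subtracting the two identities, using Lemma~\ref{complVG} which asserts $\{\cV_G\}=\{\cW_G\}-\{\cY_G\}$.

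The only genuinely delicate step is the descent from \eqref{PGdelcon} to the analogous relation for $Q_G$ (and its $q=0$ specialisation), since this is where the regularity hypothesis on $e$ is used in an essential way: in the bridge or looping-edge cases the three numbers $k(G)$, $k(G\smallsetminus e)$, $k(G/e)$ disagree, the factorisation through $q^{k(G)}$ fails, and separate formulas (analogous to those in \cite{AluMa3}) would be needed. Once this reduction is secured, the remainder of the argument is a transcription of the Theorem~\ref{delconZGthm} proof applied to two different polynomials in two different ambient spaces.
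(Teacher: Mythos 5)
Your proposal is correct and follows the same route as the paper, which simply states that the proof is ``entirely analogous to the one given for $\{\cZ_G\}$, applied to the polynomials $Q_G$ and $Q_G|_{q=0}$.'' You have correctly filled in the details that make that analogy work: the regularity of $e$ equalizes the connected-component counts so the deletion--contraction identity descends from $P_G$ to $Q_G$ and its $q=0$ specialization, the fibering argument over $(q,\hat t^{(e)})$ (resp.\ $\hat t^{(e)}$) runs verbatim, and the $\{\cV_G\}$ identity is the difference via Lemma~\ref{complVG}.
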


\proof 
The proof is entirely analogous to the one given for $\{ \cZ_G \}$, 
applied to the polynomials $Q_G$ and $Q_G|_{q=0}$. 
\endproof

The cases of bridges and looping edges are already dealt with in 
Lemma~\ref{propVGlist}; we repeat them here for clarity:

\begin{prop}\label{deleconPottstcbloop} 
If $e$ is a looping edge of $G$, then 
$$ \{\cV_G\}=(\Tbb+1) \{\cV_{G\smallsetminus e}\} .$$
If $e$ is a bridge, then 
$$ \{\cV_G\}=\Tbb\,\{\cV_{G/e}\} .$$
\end{prop}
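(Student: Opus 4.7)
The plan is to simply invoke the earlier Lemma \ref{propVGlist}, as the author signals in the paragraph preceding the statement. Both cases are direct specializations of parts of that lemma, once the hypotheses are matched up properly.

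For the looping-edge case, suppose $e$ is a looping edge of $G$. Then $G$ itself is obtained from $G\smallsetminus e$ by attaching a looping edge at one of its vertices. This is precisely the setup of Lemma \ref{propVGlist}(1), applied with $G$ playing the role of $G'$ and $G\smallsetminus e$ playing the role of $G$. That lemma then gives $\{\cV_G\}=(\Tbb+1)\{\cV_{G\smallsetminus e}\}$. The underlying geometric reason, which is worth recalling: loops never appear in spanning forests, so $P_G$ and $P_{G\smallsetminus e}$ are the same polynomial, merely viewed in affine spaces of dimensions differing by one. Passing to complements multiplies by $\bL=\Tbb+1$.

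For the bridge case, suppose $e$ is a bridge of $G$. Then $G\smallsetminus e$ is the disjoint union of two subgraphs, and $G$ is obtained from these two components by reconnecting them with the single edge $e$. This is exactly the situation of Lemma \ref{propVGlist}(3), which yields $\{\cV_G\}=\Tbb\,\{\cV_{G/e}\}$. The underlying reason is that $e$ must belong to every maximal spanning forest of $G$, so $P_G(q,t) = t_e \cdot P_{G/e}(q,\hat t^{(e)})$; the complement of $P_G=0$ therefore fibers over the complement of $P_{G/e}=0$ with $\Tbb$-fibers (excluding both $t_e=0$ and the locus where $P_{G/e}$ vanishes).

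Thus no new argument is required; the proof consists of citing the two relevant parts of Lemma \ref{propVGlist}. There is no real obstacle, since the substantive work (tracking how the polynomial $P_G$ transforms under bridges and loops, and then passing to the complement in the Grothendieck ring) was already carried out in the proof of that lemma.
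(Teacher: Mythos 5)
Your proposal takes essentially the same approach as the paper: the paper's entire "proof" of Proposition~\ref{deleconPottstcbloop} is the sentence preceding it, which simply notes that both cases were already handled in Lemma~\ref{propVGlist} and that the proposition merely restates them for clarity. Your citations of items (1) and (3) of that lemma are exactly right.

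There is, however, a factual error in your supplementary "underlying reason" for the bridge case. You assert that since a bridge $e$ belongs to every \emph{maximal} spanning forest, one has $P_G(q,t)=t_e\cdot P_{G/e}(q,\hat t^{(e)})$. This confuses $P_G$ (the leading homogeneous term of $Z_G$, which by the computation in Lemma~\ref{normconegraph} sums over \emph{all} spanning forests, not just maximal ones) with the polynomial $\Phi_G=Q_G(0,t)$, which does sum only over maximal spanning forests. Forests not containing $e$ still contribute to $P_G$; they correspond bijectively to forests of $G/e$ but contribute an extra factor of $q$ because $G\smallsetminus e$ has one more vertex-class to cover, so $P_{G\smallsetminus e}=q\,P_{G/e}$ and hence $P_G=t_e\,P_{G/e}+P_{G\smallsetminus e}=(q+t_e)\,P_{G/e}$. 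This matches what the paper actually uses in the proof of Lemma~\ref{propVGlist}(2)-(3): the multiplying factor is $(q+t_e)$, not $t_e$. Fortunately the conclusion you draw is unaffected, since $\{t_e : q+t_e\ne 0\}$ is still a $\Tbb$-torsor over the complement of $P_{G/e}=0$; but the stated intermediate identity is wrong, and the appeal to maximal spanning forests is a red herring in this context.
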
 

\section{Edge splitting}

We now use the deletion--contraction formula of Theorem~\ref{delconZGthm}
to describe the effect of splitting an edge in a graph.

\begin{defn}\label{splitndef}
Given a finite graph $G$ and an edge $e\in E(G)$, let $\bc 0G$ denote the
contraction $G/e$; let $\bc 1G=G$; and more generally let
$\bc kG$ be the graph obtained by replacing the edge $e$ with a chain of $k\geq 2$
edges. 
\end{defn} 

\subsection{Splitting an edge}

We have the following formula for the class of the hypersurface
complement of the graph $\bc 2G$.
Here and in the following, we denote by $V(f_1,\dots,f_k)$ the zero locus
$\{f_1=\cdots=f_k=0\}$ of the ideal generated by $(f_1,\dots,f_k)$.

\begin{thm}\label{resplitPotts}
For $G$ a finite graph and $\bc 2G$ the graph obtained by splitting an edge $e$
in $G$, the class $\{Z_{\bc 2G}\}$ satisfies
\begin{equation}\label{Z2Gformula}
\{\cZ_{\bc 2G}\} = 
(\Tbb-2) \{\cZ_{\bc 1G}\} + (\Tbb-1) \{ \cZ_{\bc 0G}\} + 
(\Tbb+1) \left(\{\cZ_{G\smallsetminus e}\}+\{A^e_G\}\right),
\end{equation}
where $A^e_G=V(q+t_e,Z_{G\smallsetminus e}-q Z_{G/e})$.
\end{thm}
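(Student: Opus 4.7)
The plan is to apply the algebro-geometric deletion--contraction identity of Theorem~\ref{delconZGthm} to the edge $e_1$ of $\bc 2G$, which yields
\[
\{\cZ_{\bc 2G}\}=\Lbb\,\{\cZ_{(\bc 2G)/e_1}\cap \cZ_{(\bc 2G)\smallsetminus e_1}\}-\{\cZ_{(\bc 2G)/e_1}\}.
\]
Since contracting $e_1$ merges the new degree-$2$ vertex back into an endpoint of $e$, one has $(\bc 2G)/e_1=\bc 1G=G$ (with $t_{e_2}$ playing the role of $t_e$), so the subtracted term is immediately $\{\cZ_{\bc 1G}\}$. The real task is the computation of the intersection class.

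I would first extract the two polynomials cutting out the hypersurfaces. Deleting $e_1$ leaves $e_2$ as a pendant edge at the isolated new vertex, and a direct computation from \eqref{multiTutte} (grouping subgraphs by whether $e_2$ is included) gives
\[
Z_{(\bc 2G)\smallsetminus e_1}=(q+t_{e_2})\,Z_{G\smallsetminus e}(q,\hat t^{(e)}),
\]
while the combinatorial deletion--contraction \eqref{multidelcon} applied to $G$ gives
\[
Z_{(\bc 2G)/e_1}=Z_G=Z_{G\smallsetminus e}(q,\hat t^{(e)})+t_{e_2}\,Z_{G/e}(q,\hat t^{(e)}).
\]
Call these $F_2$ and $F_1$. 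Their common zero locus in $\Abb^{\#E(G)+1}$ splits via the factorization of $F_2$ into two overlapping pieces: on the hyperplane $V(q+t_{e_2})$ the equation $F_1=0$ becomes $Z_{G\smallsetminus e}-q\,Z_{G/e}=0$, which under the identification $t_e\leftrightarrow t_{e_2}$ is exactly $A^e_G$; on the cylinder over $\cZ_{G\smallsetminus e}$, $F_1=0$ reduces to $t_{e_2}\,Z_{G/e}=0$, producing the union $(\cZ_{G\smallsetminus e}\cap V(t_{e_2}))\cup(\cZ_{G\smallsetminus e}\cap \cZ_{G/e})$.

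Assembling these pieces by inclusion--exclusion, the decisive observation is that $Z_{G\smallsetminus e}$ and $Z_{G/e}$ vanish identically at $q=0$, since for a nonempty graph every subgraph $G'$ contributes a factor $q^{k(G')}$ with $k(G')\ge 1$. Consequently the hyperplane $\{q=0\}$ sits inside $\cZ_{G\smallsetminus e}\cap \cZ_{G/e}$, and this containment forces the overlaps meeting $V(q+t_{e_2})\cap V(t_{e_2})=\{q=t_{e_2}=0\}$ to collapse into $\cZ_{G\smallsetminus e}\cap \cZ_{G/e}$, so that the auxiliary intersection $\cZ_{G\smallsetminus e}\cap V(t_{e_2})$ drops out of the alternating sum. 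Converting the remaining hypersurface classes to complement classes, and then using Theorem~\ref{delconZGthm} applied to $G$ itself in the form $\Lbb\,\{\cZ_{G\smallsetminus e}\cap \cZ_{G/e}\}=\{\cZ_G\}+\{\cZ_{G/e}\}$ to eliminate the last intersection, collapses the expression to $(\Tbb-2)\{\cZ_{\bc 1G}\}+(\Tbb-1)\{\cZ_{\bc 0G}\}+(\Tbb+1)(\{\cZ_{G\smallsetminus e}\}+\{A^e_G\})$.

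The main obstacle is the ambient--dimension bookkeeping for $A^e_G$: it is a codimension--$2$ complete intersection in $\Abb^{\#E(G)+1}$, but projects isomorphically onto the hypersurface $V(Z_{G\smallsetminus e}-q\,Z_{G/e})\subset\Abb^{\#E(G)}$, and one must be consistent about which ambient is used when passing between $[A^e_G]$ and $\{A^e_G\}$; the clean coefficient $(\Tbb+1)$ appearing in front of $\{A^e_G\}$ is precisely the trace of this dimension shift. Once the convention is fixed, the remainder is a structured inclusion--exclusion in which the various $\Lbb$--powers telescope to leave exactly the stated formula.
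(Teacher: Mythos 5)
Your approach matches the paper's proof almost step for step: apply Theorem~\ref{delconZGthm} to the new edge of $\bc 2G$, compute the intersection locus $V\bigl(Z_{G\smallsetminus e}(q+t_e), Z_G\bigr)$, decompose it into the three pieces $V(Z_{G\smallsetminus e},t_e)$, $V(Z_{G\smallsetminus e},Z_{G/e})$, and $V(q+t_e, Z_{G\smallsetminus e}-qZ_{G/e})=A^e_G$, exploit the fact that $Z_{G\smallsetminus e}$ and $Z_{G/e}$ are multiples of $q$ to simplify the overlaps, and finally use Theorem~\ref{delconZGthm} on $G$ itself to eliminate $\{\cZ_{G\smallsetminus e}\cap \cZ_{G/e}\}$.

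One caution on the prose of your inclusion--exclusion, however: the piece $\cZ_{G\smallsetminus e}\cap V(t_{e_2})$ does \emph{not} drop out of the alternating sum. It contributes $[\cZ_{G\smallsetminus e}]$, and this is precisely the origin of the $(\Tbb+1)\{\cZ_{G\smallsetminus e}\}$ term in the final formula; if it truly vanished from the computation the theorem could not be recovered. What actually cancels is the double intersection $V(q+t_{e_2})\cap V(t_{e_2})\cap V(Z_{G\smallsetminus e},\dots)=V(q,t_{e_2})$ against the triple intersection $V(q,t_{e_2},Z_{G\smallsetminus e},Z_{G/e})$: both collapse to the coordinate plane $V(q,t_{e_2})$ because $Z_{G\smallsetminus e}$ and $Z_{G/e}$ vanish on $\{q=0\}$, and these two enter inclusion--exclusion with opposite signs. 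After this cancellation the remaining classes are $[\cZ_{G\smallsetminus e}]$, $\Lbb\cdot[\cZ_{G\smallsetminus e}\cap\cZ_{G/e}]$, $[A^e_G]$, minus twice $[\cZ_{G\smallsetminus e}\cap\cZ_{G/e}]$, which assembles to $(\Lbb-2)[\cZ_{G\smallsetminus e}\cap\cZ_{G/e}]+[\cZ_{G\smallsetminus e}]+[A^e_G]$ as in the paper. With that correction the rest of your argument goes through as written.
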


The term $\{A^e_G\}$ appears to be difficult to evaluate geometrically.
Equation~\ref{Z2Gformula} gives a relation between this term and the
terms $\{Z_{\bc mG}\}$, which will allow us to obtain recursive formulas
for these classes which are independent of $\{A^e_G\}$.

\proof
First observe that the effect of attaching an edge $e$ to a vertex of a graph $G$ 
is to multiply the polynomial $Z_G(q,t)$ by $(t_e+q)$ in the
case of a non-looping edge, and by $(t_e+1)$ in the case of a looping
edge.  This has the effect, in both cases, of multiplying the class $\{ \cZ_G \}$ 
by the factor $\Lbb-1$ as seen in Corollary \ref{listZGprop}, (5).

Applying Theorem~\ref{delconZGthm} to the graph $\bc 2G$ requires handling
the contraction, $\bc 1G$ in this case, and the deletion, which is obtained
from $G\smallsetminus e$ by adding a non-looping edge. In terms of 
equations, these are given respectively by the vanishing of $Z_G(q,t)$ and of
$Z_{G\smallsetminus e}(q,\hat t^{(e)})\cdot (q+t_e)$. The most interesting term 
is, of course, the class of the intersection of these two hypersurfaces, with ideal
\begin{equation}\label{int2Gideal}
(Z_{G\smallsetminus e}(q,\hat t^{(e)}) (q+t_e), Z_G(q,t)) .
\end{equation}
This ideal defines a subscheme of $\Abb^{\# E(G)+1}$, while $\cZ_{\bc 2G}$
lives in $\Abb^{\#E(G)+2}$. Deletion-contraction applied to $G$ gives
\[
Z_G=Z_{G\smallsetminus e}+t_e Z_{G/e},
\]
therefore the zero locus $V(Z_{G\smallsetminus e} (q+t_e), Z_G)$ equals
\begin{align*}
V(Z_{G\smallsetminus e} (q+t_e), & Z_{G\smallsetminus e}+t_e Z_{G/e})
=V(Z_{G\smallsetminus e},t_e Z_{G/e})\cup 
V(q+t_e,Z_{G\smallsetminus e}+t_e Z_{G/e}) \\
&=V(Z_{G\smallsetminus e},t_e)\cup V(Z_{G\smallsetminus e},Z_{G/e})
\cup V(q+t_e,Z_{G\smallsetminus e}-q Z_{G/e}) .
\end{align*}
To apply inclusion-exclusion, we need the double and triple intersections of
these components:
\begin{align*}
&V(Z_{G\smallsetminus e},t_e,Z_{G\smallsetminus e},Z_{G/e})
=V(t_e, Z_{G\smallsetminus e},Z_{G/e}) \\
&V(Z_{G\smallsetminus e},t_e,q+t_e,Z_{G\smallsetminus e}-q Z_{G/e})
=V(q,t_e,Z_{G\smallsetminus e})=V(q,t_e) \\
&V(Z_{G\smallsetminus e},Z_{G/e},q+t_e,Z_{G\smallsetminus e}-q Z_{G/e})
=V(q+t_e,Z_{G\smallsetminus e},Z_{G/e})
\end{align*}
and
\[
V(Z_{G\smallsetminus e},t_e,Z_{G\smallsetminus e},Z_{G/e},
q+t_e,Z_{G\smallsetminus e}-q Z_{G/e})
=V(q,t_e,Z_{G\smallsetminus e},Z_{G/e})=V(q,t_e) ,
\]
where we used the fact that $Z_{G\smallsetminus e}$ and $Z_{G/e}$ 
are multiples of $q$. 
This implies that the triple intersection
is in fact a double intersection, causing a useful cancellation at the level of Grothendieck
classes:
\begin{multline*}
[V(Z_{G\smallsetminus e} (q+t_e), Z_G)]
=
[V(Z_{G\smallsetminus e},t_e)]+[V(Z_{G\smallsetminus e},Z_{G/e})]
+[ V(q+t_e,Z_{G\smallsetminus e}-q Z_{G/e})] \\
-[V(t_e, Z_{G\smallsetminus e},Z_{G/e})] - [V(q+t_e,Z_{G\smallsetminus e},Z_{G/e})] .
\end{multline*}

All but one of the classes on the right-hand side have a clear interpretation:
\[
[V(Z_{G\smallsetminus e},t_e)] = [\cZ_{G\smallsetminus e}] ,
\]
where we view $\cZ_{G\smallsetminus e}$ as a hypersurface of $\Abb^{\# E(G)}$;
\[
[V(Z_{G\smallsetminus e},Z_{G/e})] = \Lbb\cdot [\cZ_{G\smallsetminus e}\cap \cZ_{G/e}] ,
\]
where the intersection is again viewed in $\Abb^{\# E(G)}$ and the factor of $\Lbb$ is due to
the free variable~$t_e$;
\[
[V(t_e, Z_{G\smallsetminus e},Z_{G/e})] = [V(q+t_e,Z_{G\smallsetminus e},Z_{G/e})]
=[\cZ_{G\smallsetminus e}\cap \cZ_{G/e}],
\]
still in $\Abb^{\# E(G)}$: indeed, $t_e$ and $q+t_e$ may be used to eliminate $t_e$ in both
cases, and both ideals define the same locus in $\Abb^{\# E(G)}$ (with variables 
$(q,\hat t^{(e)})$) after this projection.

The remaining term is 
\[
[V(q+t_e,Z_{G\smallsetminus e}-q Z_{G/e})]  = [ V(Z_{G\smallsetminus e}-q Z_{G/e})]  ,
\]
where again we use $q+t_e=0$ to eliminate $t_e$, and view the right-hand-side as
the class of a locus in $\Abb^{\# E(G)}$, with variables $(q,\hat t^{(e)})$.

Let $A^e_G$ denote the locus determined by this ideal, as a subset of $\Abb^{\# E(G)}$. 
We have obtained that
\[
[V(Z_{G\smallsetminus e} (q+t_e), Z_G)]
=(\Lbb-2)\cdot [Z_{G\smallsetminus e}\cap Z_{G/e}]
+[Z_{G\smallsetminus e}]+[ A^e_G] .
\]
This can be equivalently stated in terms of classes of hypersurface complements as
\begin{equation}\label{int2G}
\{V(Z_{G\smallsetminus e}(q+t_e),Z_G)\}
=(\Lbb-2)\cdot \{Z_{G\smallsetminus e}\cap Z_{G/e}\} + \{Z_{G\smallsetminus e}\}
+\{A^e_G\} .
\end{equation}
Indeed, we have
\[
\{V(Z_{G\smallsetminus e}(q+t_e),Z_G)\} = \Lbb^{|E|+1}-
[V(Z_{G\smallsetminus e}(q+t_e),Z_G)]\quad,
\]
while the complements of the other loci are taken in $\Abb^{|E|}$. Thus,
\begin{multline*}
\{V(Z_{G\smallsetminus e}(q+t_e),Z_G)\} = 
\Lbb^{\# E(G)+1}-\left((\Lbb-2)\cdot [\cZ_{G\smallsetminus e}\cap \cZ_{G/e}]
+[\cZ_{G\smallsetminus e}]+[ A^e_G]\right) \\
= \Lbb (\Lbb^{\# E(G)}-[\cZ_{G\smallsetminus e}\cap \cZ_{G/e}])
-2 (\Lbb^{\# E(G)}-[\cZ_{G\smallsetminus e}\cap \cZ_{G/e}])+2 \Lbb^{\# E(G)} -
([\cZ_{G\smallsetminus e}]+[ A^e_G])
\end{multline*}
with the stated result \eqref{int2G}.

Thus, we have obtained in this way an explicit calculation of the intersection term 
needed to apply Theorem~\ref{delconZGthm} to the graph $\bc 2G$ 
obtained by splitting an edge of $G$ into two. We obtain
\begin{equation}\label{Z2GYeG}
\{\cZ_{\bc 2G}\} = \Lbb\left( (\Lbb-2)\{\cZ_{G\smallsetminus e}\cap \cZ_{G/e}\} 
+ \{\cZ_{G\smallsetminus e}\}+\{A^e_G\} \right)-\{\cZ_{\bc 1G}\}\quad.
\end{equation}

We then apply again Theorem~\ref{delconZGthm} to $G$ to provide
an alternative expression for the intersection term 
$\{\cZ_{G\smallsetminus e}\cap \cZ_{G/e}\}$ and we obtain
\begin{equation}\label{Z2Gintterm}
\Lbb\cdot \{\cZ_{G\smallsetminus e}\cap \cZ_{G/e}\}
=\{\cZ_{G/e}\}+\{\cZ_G\}= \{\cZ_{\bc 0G}\}+\{ \cZ_{\bc 1G}\} .
\end{equation}
\endproof

The locus $A^e_G$ determined by the ideal $(q+t_e,Z_{G\smallsetminus e}-q Z_{G/e})$
has an interpretation in terms of the combinatorics of the graph $G$.
Let $Z'$ denote the sum
\begin{equation}\label{Zprimedefeq}
\sum_{A\subseteq E} q^{k(A)} \prod_{a\in A} t_a
\end{equation}
where the sum is restricted to the subgraphs not including $e$ and connecting the
endpoints of $e$; and let $Z''$ denote the same expression, where the sum is restricted
to the subgraphs not including $e$ and {\em not\/} connecting the endpoints of $e$.
Notice that there is a bijection 
between the monomial of $Z_{G/e}$ and the monomials of $Z_{G\smallsetminus e}$;
in fact,
\begin{equation}\label{bijectioneq}
Z_{G\smallsetminus e}=Z' + Z''\quad, \quad Z_{G/e}=Z' + \frac {Z''}q .
\end{equation}
Indeed, the graphs in $Z''$ lose one connected component when $e$ is
contracted. 

\begin{lem}\label{YeGlocus}
The locus $A^e_G$ may be described as 
$V(q+t_e, (1+t_e) Z')$, where the polynomial $(1+t_e) Z'$ is the
sum over all subgraphs of $G$ (including $e$ or not) which connect the
endpoints of $e$ in some way other than through $e$.
\end{lem}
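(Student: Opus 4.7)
The plan is to identify the locus $A^e_G = V(q + t_e,\, Z_{G\smallsetminus e} - q Z_{G/e})$ with $V(q + t_e,\, (1+t_e) Z')$ by a short algebraic manipulation, and then to read off the combinatorial description of $(1+t_e) Z'$ directly from the definitions of $Z'$ and $Z''$ in \eqref{Zprimedefeq}.

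For the first step, I would invoke the decomposition \eqref{bijectioneq}: $Z_{G\smallsetminus e} = Z' + Z''$ and $Z_{G/e} = Z' + Z''/q$. A direct subtraction gives
$$Z_{G\smallsetminus e} - q\, Z_{G/e} = (Z' + Z'') - q\left(Z' + \frac{Z''}{q}\right) = (1 - q)\, Z'.$$
Since the difference $(1 - q)Z' - (1 + t_e)Z' = -(q + t_e)\, Z'$ lies in the ideal generated by $q + t_e$, the ideals $(q + t_e,\, Z_{G\smallsetminus e} - q Z_{G/e})$ and $(q + t_e,\, (1 + t_e)\, Z')$ coincide, and therefore $A^e_G = V(q + t_e,\, (1 + t_e)\, Z')$. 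Note that this even holds scheme-theoretically, not merely set-theoretically.

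For the combinatorial description, I would expand $(1 + t_e)\, Z' = Z' + t_e\, Z'$ and identify each piece. By definition, $Z'$ sums $q^{k(A)} \prod_{a \in A} t_a$ over subgraphs $A \subseteq E(G) \smallsetminus \{e\}$ (with $V(A) = V(G)$) that connect the two endpoints of $e$, which contributes precisely the subgraphs of $G$ joining these endpoints without using $e$. The summand $t_e\, Z'$ attaches a factor $t_e$ to each such monomial, which corresponds to adjoining the edge $e$ itself to $A$; the essential point is that since $A$ already connects the endpoints of $e$, the edge $e$ becomes a chord inside a single connected component of $A \cup \{e\}$, so $k(A) = k(A \cup \{e\})$ and the exponent is exactly the one attached to the subgraph $A \cup \{e\}$ in $Z_G$. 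Combining the two pieces, $(1 + t_e)\, Z'$ enumerates all subgraphs $B \subseteq E(G)$ --- whether containing $e$ or not --- whose non-$e$ edges already connect the two endpoints of $e$, which is the stated description. There is no substantial obstacle beyond this chord observation $k(A) = k(A \cup \{e\})$; the lemma reduces to a transparent bookkeeping check once \eqref{bijectioneq} is in hand.
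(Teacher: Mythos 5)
Your proposal is correct and takes essentially the same route as the paper: both invoke \eqref{bijectioneq} to compute $Z_{G\smallsetminus e}-qZ_{G/e}=(1-q)Z'$, then reduce modulo $q+t_e$ to replace $1-q$ by $1+t_e$. You add a bit more explicit bookkeeping than the paper (the scheme-theoretic remark, and the chord observation $k(A)=k(A\cup\{e\})$ justifying why $t_eZ'$ correctly accounts for subgraphs containing $e$), but these are amplifications of the paper's one-line argument rather than a different method.
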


\proof 
By \eqref{bijectioneq},
\[
Z_{G\smallsetminus e}-q Z_{G/e} = (1-q) Z' .
\]
Modulo $q+t_e$ this equals $(1+t_e) Z'$, and this term has the interpretation
detailed in the statement.
\endproof

\subsection{Multiple splittings}
We can now formulate the result for multiple splittings of an edge $e$ in a graph $G$,
that is, for all the classes $\{ \cZ_{\bc mG} \}$.

\begin{lem}\label{resplitcor}
For all $m\ge 1$,  the classes $\{ \cZ_{\bc {m+1}G} \}$ satisfy
\begin{equation}\label{ZmGYeG}
\{\cZ_{\bc {m+1}G}\} = 
(\Tbb-2) \{\cZ_{\bc mG}\} + (\Tbb-1) \{ \cZ_{\bc {m-1}G}\} + 
(\Tbb+1)\Tbb^{m-1} \left(\{\cZ_{G\smallsetminus e}\}+\{A^e_G\}\right).
\end{equation}
\end{lem}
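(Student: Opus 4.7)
The plan is to apply Theorem~\ref{resplitPotts} not to $G$ itself but to $\bc mG$, taking for the ``edge to be split'' one of the $m$ chain edges $f_1,\dots,f_m$ (with intermediate vertices $v_1,\dots,v_{m-1}$ and endpoints $v_0,v_m$ of $e$). Fix any $i\in\{1,\dots,m\}$ and set $H:=\bc mG$, $e':=f_i$. Since $H/f_i=\bc {m-1}G$, $\bc 1H=\bc mG$, and $\bc 2H=\bc {m+1}G$, Theorem~\ref{resplitPotts} immediately yields
$$
\{\cZ_{\bc {m+1}G}\} = (\Tbb-2)\{\cZ_{\bc mG}\} + (\Tbb-1)\{\cZ_{\bc {m-1}G}\} + (\Tbb+1)\bigl(\{\cZ_{\bc mG\smallsetminus f_i}\}+\{A^{f_i}_{\bc mG}\}\bigr).
$$
Matching this against the claimed formula \eqref{ZmGYeG} reduces the lemma to the two identities
$$
\{\cZ_{\bc mG\smallsetminus f_i}\}=\Tbb^{m-1}\,\{\cZ_{G\smallsetminus e}\},\qquad \{A^{f_i}_{\bc mG}\}=\Tbb^{m-1}\,\{A^e_G\}.
$$

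The first identity follows at once from Corollary~\ref{listZGprop}(5). Deleting $f_i$ from $\bc mG$ breaks the chain into two pending paths, of lengths $i-1$ and $m-i$, attached to $v_0$ and $v_m$ respectively; iterating part (5) of that corollary, each of these $m-1$ appended edges contributes a factor of $\Lbb-1=\Tbb$, giving the total $\Tbb^{m-1}$.

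The second identity is the main step. By Lemma~\ref{YeGlocus}, $A^{f_i}_{\bc mG}=V(q+t_{f_i},(1+t_{f_i})\,Z'_{\bc mG,f_i})$, where $Z'_{\bc mG,f_i}$ enumerates subgraphs of $\bc mG\smallsetminus f_i$ that reconnect $v_{i-1}$ to $v_i$. The key combinatorial observation is that each intermediate chain vertex $v_j$ has degree exactly two in $\bc mG$, so any such reconnecting subgraph must use every chain edge $f_j$ with $j\neq i$, and must contain a subgraph of $G\smallsetminus e$ connecting $v_0$ to $v_m$. Matching connected-component counts (the chain segments attach to the $v_0$- and $v_m$-components, which are fused since $v_0\sim v_m$) then gives the product formula
$$
Z'_{\bc mG,f_i} \;=\; \Bigl(\prod_{j\neq i} t_{f_j}\Bigr)\,Z'_{G,e}.
$$
Eliminating $t_{f_i}$ via $t_{f_i}=-q$ as in the proof of Theorem~\ref{resplitPotts}, the locus $A^{f_i}_{\bc mG}\subset\bA^{\#E(\bc mG)}$ becomes the vanishing of $(1-q)\prod_{j\neq i}t_{f_j}\cdot Z'_{G,e}$; its complement splits as a product of the complement of $A^e_G$ in $\bA^{\#E(G)}$ with the torus $\bG_m^{m-1}$ in the remaining chain variables. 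Taking classes yields $\{A^{f_i}_{\bc mG}\}=\Tbb^{m-1}\{A^e_G\}$.

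The main obstacle is the combinatorial product formula for $Z'_{\bc mG,f_i}$; once the degree-two condition on intermediate chain vertices has been used to force all non-split chain edges into every reconnecting subgraph, the remaining argument is routine bookkeeping, exploiting the product structure of the complement and the factorization of $Z'$ to split off the torus factor $\bG_m^{m-1}$.
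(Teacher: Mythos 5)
Your proposal follows essentially the same route as the paper: apply Theorem~\ref{resplitPotts} to $\bc mG$ at a chain edge and reduce to showing that both $\{\cZ_{\bc mG\smallsetminus f_i}\}$ and $\{A^{f_i}_{\bc mG}\}$ acquire a factor $\Tbb^{m-1}$. Your product formula $Z'_{\bc mG,f_i}=\bigl(\prod_{j\neq i}t_{f_j}\bigr)Z'_{G,e}$ is in fact slightly more precise than the paper's stated factor $\prod_{i=1}^{m-1}(t_i+q)$, but either polynomial yields the same complement class $\Tbb^{m-1}\{A^e_G\}$, so the conclusion and the overall argument coincide.
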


\proof
If a polynomial
is multiplied by $t+q$, where $t$ is a new indeterminate, then the 
effect on the class $\{\cdot\}$ of the variety defined by that polynomial is 
to multiply it by $\Lbb-1=\Tbb=[\bG_m]$
(cf.~Corollary~\ref{listZGprop}, (5)).
Applying this observation to $\cZ_{G\smallsetminus e}$
and $A^e_G$ shows that
\[
\{ \cZ_{\bc mG\smallsetminus e} \}+\{A^e_{\bc mG}\}
=\Tbb^{m-1} (\{\cZ_{G\smallsetminus e}\}+\{A^e_{G}\}) .
\]
where $e$ denotes the last edge added in the splitting $\bc mG$.
Indeed, $Z_{\bc mG\smallsetminus e}$ is obtained from 
$Z_{G\smallsetminus e}$ by attaching a chain of $m-1$ edges to
$G\smallsetminus e$:
\begin{center}
\includegraphics[scale=.4]{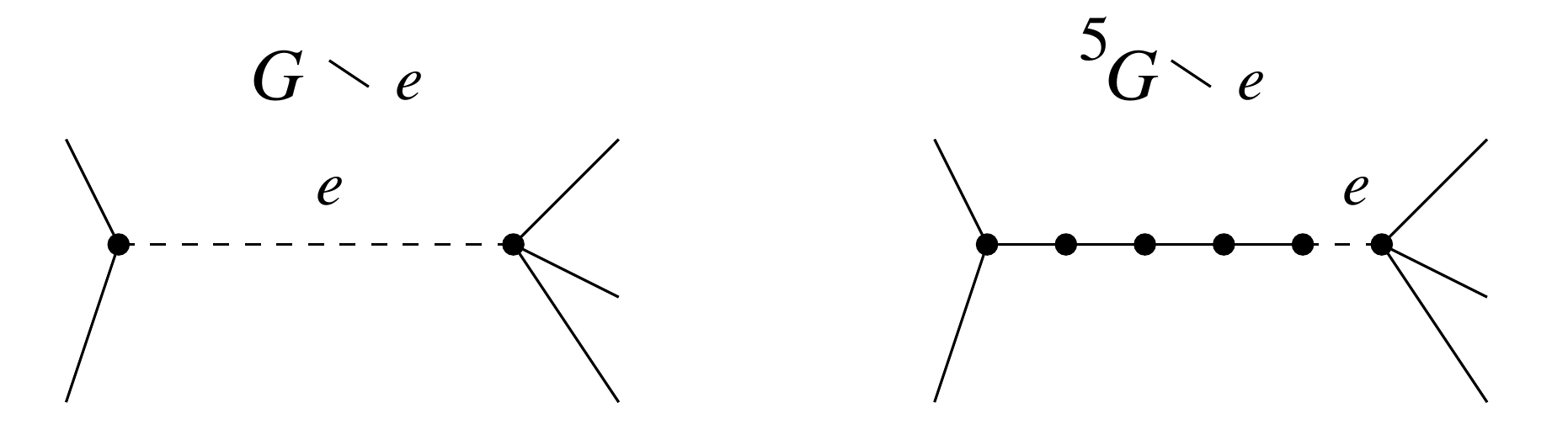}
\end{center}
and this has the effect of multiplying $Z_{G\smallsetminus e}$ by a term
$\prod_{i=1}^{m-1} (t_i+q)$, where $t_1,\dots,t_{m-1}$ are the variables
corresponding to the edges in the chain. Therefore, $\{\cZ_{\bc mG\smallsetminus
e}\}=\Tbb^{m-1} \{\cZ_{G\smallsetminus e}\}$. The effect on $A^e_G$ is precisely 
the same. Indeed, recall that, by Lemma \ref{YeGlocus},
the equation for $A^e_G$
in the hyperplane $t_e=-q$
 is $(1-q)Z'$, where 
$Z'$ is the sum corresponding to subgraphs of $G\smallsetminus e$ which 
connect the endpoints of $e$. From this description it is clear that the 
equation for $A^e_{\bc mG}$
in $\{t_e=-q\}$ 
is $(1-q)Z' \prod_{i=1}^{m-1} (t_i+q)$, and
again it follows that $\{A^e_{\bc mG}\}=\Tbb^{m-1} \{A^e_G\}$.
\endproof

Combining Theorem~\ref{resplitPotts} and Lemma~\ref{resplitcor} allows us to 
obtain a `combinatorial' expression for the class $\{A^e_G\}$.
We then have the following recursive formula for the classes $\{ \cZ_{\bc mG} \}$.

\begin{thm}\label{newdcnew}
For all $m\ge 0$:
\begin{equation}\label{ZmGrecursion}
\{ \cZ_{\bc {m+3}G}\} = 
(2\Tbb-2) \{ \cZ_{\bc {m+2}G}\} - (\Tbb^2-3\Tbb+1) \{ \cZ_{\bc {m+1}G}\} -
\Tbb (\Tbb-1) \{ \cZ_{\bc mG}\} .
\end{equation}
\end{thm}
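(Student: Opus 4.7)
\proof
The plan is to eliminate the non-combinatorial term $C := \{\cZ_{G\smallsetminus e}\}+\{A^e_G\}$ from the three-term recursion provided by Lemma~\ref{resplitcor} by taking a suitable $\Tbb$-linear combination of two consecutive instances of that recursion. Write $B_k := \{\cZ_{\bc kG}\}$ for brevity. Lemma~\ref{resplitcor} asserts that, for every $k\ge 1$,
\begin{equation}\label{recplan}
B_{k+1} = (\Tbb-2) B_k + (\Tbb-1) B_{k-1} + (\Tbb+1)\Tbb^{k-1} C .
\end{equation}

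Applying \eqref{recplan} with $k=m+2$ and $k=m+1$ (both valid once $m\ge 0$) yields
\begin{align*}
B_{m+3} &= (\Tbb-2) B_{m+2} + (\Tbb-1) B_{m+1} + (\Tbb+1)\Tbb^{m+1} C ,\\
B_{m+2} &= (\Tbb-2) B_{m+1} + (\Tbb-1) B_{m} + (\Tbb+1)\Tbb^{m} C .
\end{align*}
Multiplying the second equation by $\Tbb$ and subtracting from the first causes the $C$-terms to cancel, so that
\begin{equation*}
B_{m+3} - \Tbb\, B_{m+2} = (\Tbb-2) B_{m+2} + \bigl[(\Tbb-1)-(\Tbb-2)\Tbb\bigr] B_{m+1} - (\Tbb-1)\Tbb\, B_m .
\end{equation*}
Solving for $B_{m+3}$ and simplifying the coefficient of $B_{m+1}$ via $(\Tbb-1)-(\Tbb-2)\Tbb = -\Tbb^2 + 3\Tbb - 1$ gives precisely \eqref{ZmGrecursion}.

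There is no real obstacle here: the whole argument is the linear-algebra observation that the inhomogeneous geometric forcing term $(\Tbb+1)\Tbb^{k-1} C$ in \eqref{recplan} is annihilated by the first-order operator $E - \Tbb$ (where $E$ is the shift $B_k\mapsto B_{k+1}$), so composing \eqref{recplan} with $E-\Tbb$ produces a homogeneous three-step recursion in the $B_k$ alone. The only work is the arithmetic simplification just performed, and the only point requiring attention is verifying that both instances of \eqref{recplan} are available in the intended range of $m$, which is the case because $m+1\ge 1$ whenever $m\ge 0$.
\endproof
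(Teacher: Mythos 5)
Your proposal is correct and uses essentially the same mechanism as the paper: eliminating the forcing term $(\Tbb+1)\Tbb^{k-1}\bigl(\{\cZ_{G\smallsetminus e}\}+\{A^e_G\}\bigr)$ between two instances of Lemma~\ref{resplitcor}. The only difference is organizational: the paper first establishes the $m=0$ case (by substituting the rearranged Theorem~\ref{resplitPotts}, which is the $k=1$ instance, into the $k=2$ instance) and then reparametrizes by applying the resulting identity to $\bc{m+1}G$ in place of $G$, whereas you apply the shift operator $E-\Tbb$ directly at levels $m+1$ and $m+2$, obtaining the general recursion in one step without the reparametrization argument.
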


\proof Starting with \eqref{ZmGYeG}, Theorem~\ref{resplitPotts} allows 
us to express the term 
$\{\cZ_{G\smallsetminus e}\}+\{A^e_G\}$ in terms of splittings:
\begin{equation}\label{YeGZ210G}
(\Tbb+1) (\{\cZ_{G\smallsetminus e}\}+\{A^e_G\})
=\{\cZ_{\bc 2G}\}-(\Tbb-2) \{\cZ_{\bc 1G}\} -(\Tbb-1) \{\cZ_{\bc 0G}\} .
\end{equation}

Using the case $m=2$ in Lemma~\ref{resplitcor} we then get
\[ \{\cZ_{\bc 3G}\} = 
(\Tbb-2) \{\cZ_{\bc 2G}\} + (\Tbb-1) \{ \cZ_{\bc 1G}\} + 
\Tbb (\{\cZ_{\bc 2G}\}-(\Tbb-2) \{\cZ_{\bc 1G}\} -(\Tbb-1) \{\cZ_{\bc 0G}\}) ,
\]
that is,
\[
\{\cZ_{\bc 3G}\} = 
(2\Tbb-2) \{\cZ_{\bc 2G}\} - (\Tbb^2-3\Tbb+1) \{ \cZ_{\bc 1G}\} -
\Tbb (\Tbb-1) \{ \cZ_{\bc 0G}\} .
\]
Then applying this formula to $\bc {m+1}G$ rather than $\bc 1G$ one obtains the
recursion \eqref{ZmGrecursion}.
\endproof

One can then write a generating function for the classes $\{ \cZ_{\bc mG} \}$,
in a way similar to the corresponding result given in \cite{AluMa3} for the
graph hypersurfaces of Feynman graphs.

\begin{thm}\label{ZmGgenfunc}
The generating function of the classes $\{ \cZ_{\bc mG} \}$ is given by
\begin{multline}\label{ZmGgf}
\sum_{m\ge 0} \{Z_{\bc mG}\} \frac{s^m}{m!}
=\left( e^{(\Tbb-1)s} - (\Tbb-1)\cdot \frac{e^{\Tbb s}-e^{-s}}{\Tbb+1}\right)
\{Z_{\bc 0G}\} \\
+\left( (\Tbb-1)\cdot \frac{e^{(\Tbb-1) s}-e^{-s}}{\Tbb} 
-(\Tbb-2)\cdot \frac{e^{\Tbb s}-e^{-s}}{\Tbb+1}\right)
\{Z_{\bc 1G}\}\\
+\left(- \frac{e^{(\Tbb-1) s}-e^{-s}}{\Tbb}
+\frac{e^{\Tbb s}-e^{-s}}{\Tbb+1}\right)
\{Z_{\bc 2G}\} .
\end{multline}
\end{thm}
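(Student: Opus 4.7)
The plan is to read Theorem \ref{newdcnew} as a three-term linear recursion with constant coefficients (constant meaning elements of $K_0(\cV)$, not depending on $m$) and to solve it in closed form. Writing $a_m := \{\cZ_{\bc mG}\}$, the recursion reads
\[
a_{m+3} - (2\Tbb-2)\,a_{m+2} + (\Tbb^2-3\Tbb+1)\,a_{m+1} + \Tbb(\Tbb-1)\,a_m = 0,
\]
so the first step is to identify the roots of the characteristic polynomial
$
x^3 - (2\Tbb-2)x^2 + (\Tbb^2-3\Tbb+1)x + \Tbb(\Tbb-1).
$
I expect this to factor as $(x-(\Tbb-1))(x-\Tbb)(x+1)$; a direct expansion confirms it, since these three roots $\Tbb-1,\Tbb,-1$ match exactly the three exponents appearing in the claimed generating function.

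Second, I would write the general solution of the recursion as
\[
a_m = \alpha (\Tbb-1)^m + \beta\,\Tbb^m + \gamma\,(-1)^m,
\]
for some $\alpha,\beta,\gamma$ determined by the three initial values $a_0,a_1,a_2$. Multiplying by $s^m/m!$ and summing, the exponential generating function becomes
\[
\sum_{m\ge 0} a_m\frac{s^m}{m!} = \alpha\,e^{(\Tbb-1)s} + \beta\,e^{\Tbb s} + \gamma\,e^{-s}.
\]
The third step is to solve the $3\times 3$ Vandermonde system
\[
\begin{pmatrix} 1 & 1 & 1 \\ \Tbb-1 & \Tbb & -1 \\ (\Tbb-1)^2 & \Tbb^2 & 1 \end{pmatrix}\begin{pmatrix}\alpha\\\beta\\\gamma\end{pmatrix} = \begin{pmatrix} a_0\\a_1\\a_2\end{pmatrix}
\]
for $\alpha,\beta,\gamma$ as linear combinations of $a_0,a_1,a_2$. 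The Vandermonde determinant is $(\Tbb-(\Tbb-1))(-1-(\Tbb-1))(-1-\Tbb) = \Tbb(\Tbb+1)$, so after Cramer's rule the coefficients of $a_0,a_1,a_2$ carry only $\Tbb$ and $\Tbb+1$ in their denominators. Substituting into $\alpha e^{(\Tbb-1)s}+\beta e^{\Tbb s}+\gamma e^{-s}$ and grouping by $a_0,a_1,a_2$ should yield precisely the three bracketed expressions claimed in \eqref{ZmGgf}, once one recognizes the recurrent combinations $(e^{(\Tbb-1)s}-e^{-s})/\Tbb$ and $(e^{\Tbb s}-e^{-s})/(\Tbb+1)$.

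The only genuine subtlety is that these denominators $\Tbb$ and $\Tbb+1$ need not be invertible in $K_0(\cV)$. This is not a real obstruction: each such quotient is shorthand for the formal power series
\[
\frac{e^{as}-e^{bs}}{a-b} \;=\; \sum_{m\ge 1}\frac{a^m-b^m}{a-b}\frac{s^m}{m!},
\]
whose $m$th coefficient $\tfrac{a^m-b^m}{a-b}$ is an honest polynomial in $a,b$; so the whole identity lives in $K_0(\cV)[\![s]\!]$ without localization. The only nontrivial step is the algebraic matching of the three coefficient expressions after Cramer's rule, and I would finish by checking the identity at $s=0$ and comparing the coefficients of $s$ and $s^2$ to pin down the match (these are equivalent to the $3\times 3$ Vandermonde calculation).
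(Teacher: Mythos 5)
Your proposal is correct and is essentially the paper's own argument: the authors likewise pass from the recursion of Theorem~\ref{newdcnew} to a third-order linear ODE for the generating function, recognize the three exponentials $e^{-s}$, $e^{\Tbb s}$, $e^{(\Tbb-1)s}$ (i.e.\ the roots $-1$, $\Tbb$, $\Tbb-1$ of the same characteristic polynomial you factor), and solve a linear system for the three coefficients in terms of $\{\cZ_{\bc 0G}\}$, $\{\cZ_{\bc 1G}\}$, $\{\cZ_{\bc 2G}\}$. Your extra remark that the apparent denominators $\Tbb$ and $\Tbb+1$ should be read as formal divided differences, so that the identity holds in $K_0(\cV)[\![s]\!]$ without localization, is a genuine and welcome clarification that the paper leaves implicit.
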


\proof Putting
\begin{equation}\label{Ggf}
G_e(s):= \sum_{m\ge 0} \{\cZ_{\bc mG}\}\frac{s^m}{m!}\quad,
\end{equation}
the recursion formula \eqref{ZmGrecursion} translates into the differential equation
\begin{equation}\label{eqdiff}
G'''_e(s)=(2\Tbb-2)\, G''_e(s)-(\Tbb^2-3\Tbb+1)\, 
G'_e(s)-\Tbb(\Tbb-1)\, G_e(s) ,
\end{equation}
with solution
\begin{equation}\label{soleqdiff}
G_e(s)=A\, e^{-s}+B\, e^{\Tbb s}+C\,e^{(\Tbb-1) s} .
\end{equation}
We can impose that this series begins with three undetermined
coefficients, and solving for $A$, $B$, $C$ gives
\begin{align*}
A &=\{\cZ_{\bc 0G}\}
+\frac{\{\cZ_{\bc 2G}\}+\{\cZ_{\bc 1G}\}}{\Tbb}
-\frac{\{\cZ_{\bc 2G}\}+3\{\cZ_{\bc 1G}\}+2\{\cZ_{\bc 0G}\}}{\Tbb+1}\\
B &=-\{\cZ_{\bc 1G}\}-\{\cZ_{\bc 0G}\}
+\frac{\{\cZ_{\bc 2G}\}+3\{\cZ_{\bc 1G}\}+2\{\cZ_{\bc 0G}\}}{\Tbb+1}\\
C &=\{\cZ_{\bc 1G}\}+\{\cZ_{\bc 0G}\}
-\frac{\{\cZ_{\bc 2G}\}+\{\cZ_{\bc 1G}\}}{\Tbb}
\end{align*}
This then gives the form \eqref{ZmGgf} for the generating function $G_e(s)$.
\endproof

One can also write \eqref{ZmGgf} in a form that involves explicitly the
term  $A^e_G$, with slightly simpler form of the coefficients, as follows.

\begin{cor}\label{ZmGgenfuncYeG}
The generating function \eqref{ZmGgf} can be equivalently written as
\begin{multline}\label{ZmGgfYeG}
\sum_{m\ge 0} \{Z_{\bc mG}\} \frac{s^m}{m!}
=e^{-s}\left(
\left(1+\frac{e^{\Tbb s}-1}{\Tbb}\right) \{Z_{\bc 0G}\}
+\frac{e^{\Tbb s}-1}{\Tbb} \{Z_{\bc 1G}\}\right. \\
\left. 
+ \left(e^{(\Tbb+1)s}-e^{\Tbb s}-\frac{e^{\Tbb s}-1}\Tbb\right)
(\{Z_{G\smallsetminus e}\} +\{A^e_G\})\right).
\end{multline}
\end{cor}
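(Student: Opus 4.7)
\medskip

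\noindent\textbf{Plan.} The statement is an algebraic rearrangement of Theorem~\ref{ZmGgenfunc}. I see two natural routes; I will describe the more economical one first, then indicate the alternative.

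\noindent\emph{Route 1 (direct substitution).} The key auxiliary identity is \eqref{YeGZ210G}, which expresses the ``problematic'' class $(\Tbb+1)(\{\cZ_{G\smallsetminus e}\}+\{A^e_G\})$ as a linear combination of $\{\cZ_{\bc 0G}\},\{\cZ_{\bc 1G}\},\{\cZ_{\bc 2G}\}$. Solving it for $\{\cZ_{\bc 2G}\}$ yields
\[
\{\cZ_{\bc 2G}\} = (\Tbb-2)\,\{\cZ_{\bc 1G}\} + (\Tbb-1)\,\{\cZ_{\bc 0G}\} + (\Tbb+1)\bigl(\{\cZ_{G\smallsetminus e}\}+\{A^e_G\}\bigr).
\]
I would insert this into \eqref{ZmGgf} wherever $\{\cZ_{\bc 2G}\}$ appears and regroup. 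Setting $C=\{\cZ_{G\smallsetminus e}\}+\{A^e_G\}$, the coefficient of $C$ becomes $(\Tbb+1)\bigl(-\tfrac{e^{(\Tbb-1)s}-e^{-s}}{\Tbb}+\tfrac{e^{\Tbb s}-e^{-s}}{\Tbb+1}\bigr)$, which after collecting simplifies to $e^{\Tbb s}-e^{(\Tbb-1)s}\tfrac{\Tbb+1}{\Tbb}+\tfrac{e^{-s}}{\Tbb}$, exactly $e^{-s}\bigl(e^{(\Tbb+1)s}-e^{\Tbb s}-\tfrac{e^{\Tbb s}-1}{\Tbb}\bigr)$. The contribution of $\{\cZ_{\bc 1G}\}$ collapses (the $(\Tbb-2)\tfrac{e^{\Tbb s}-e^{-s}}{\Tbb+1}$ terms cancel) to $\tfrac{e^{(\Tbb-1)s}-e^{-s}}{\Tbb}=e^{-s}\tfrac{e^{\Tbb s}-1}{\Tbb}$. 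The same cancellation produces the asserted coefficient of $\{\cZ_{\bc 0G}\}$.

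\noindent\emph{Route 2 (solve the recursion).} Alternatively, Lemma~\ref{resplitcor} asserts that $y_m:=\{\cZ_{\bc mG}\}$ satisfies, for all $m\ge 1$, the inhomogeneous linear recursion
\[
y_{m+1}=(\Tbb-2)\,y_m+(\Tbb-1)\,y_{m-1}+(\Tbb+1)\Tbb^{m-1}C.
\]
The associated characteristic polynomial $x^2-(\Tbb-2)x-(\Tbb-1)$ factors as $(x-(\Tbb-1))(x+1)$, and the ansatz $y_m=c\,\Tbb^{m}$ gives the particular solution $y_m=\Tbb^{m}C$ (the constant $c=1$ emerges from $\Tbb^2-(\Tbb-2)\Tbb-(\Tbb-1)=\Tbb+1$). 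The general solution is therefore $y_m=a(\Tbb-1)^m+b(-1)^m+\Tbb^m C$, whence
\[
G_e(s)=a\,e^{(\Tbb-1)s}+b\,e^{-s}+C\,e^{\Tbb s}.
\]
Using the initial data $y_0$ and $y_1$ to solve $a+b+C=y_0$ and $a(\Tbb-1)-b+\Tbb C=y_1$ yields $a=\tfrac{y_0+y_1-(\Tbb+1)C}{\Tbb}$ and $b=\tfrac{(\Tbb-1)y_0-y_1+C}{\Tbb}$. Substituting these back and factoring $e^{-s}$ produces the right-hand side of \eqref{ZmGfYeG} verbatim.

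\noindent\emph{Main obstacle.} There is no conceptual difficulty, only careful bookkeeping: one must correctly separate the three independent exponential modes $e^{-s}$, $e^{(\Tbb-1)s}$, $e^{\Tbb s}$ (and $e^{(\Tbb+1)s}$ implicit in $e^{-s}\cdot e^{(\Tbb+1)s}$) and track the $\Tbb$ and $\Tbb+1$ denominators so that they cancel as advertised. I prefer Route~2 because it displays more transparently why the generating function involves precisely the combinations $\tfrac{e^{\Tbb s}-1}{\Tbb}$ and $e^{(\Tbb+1)s}-e^{\Tbb s}$: the first reflects the characteristic roots $\Tbb-1,-1$ of the homogeneous part, and the second reflects the inhomogeneous forcing at rate $\Tbb$. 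Route~1 is somewhat shorter but obscures this structural meaning.
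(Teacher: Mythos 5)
Both routes are correct, and Route~1 is precisely the paper's argument: the proof in the paper consists of a single line citing \eqref{ZmGgf} and \eqref{YeGZ210G}, which is exactly your substitution of the solved form of \eqref{YeGZ210G} into the generating function \eqref{ZmGgf} followed by the regrouping you describe. I verified the three coefficient simplifications (of $\{\cZ_{\bc 0G}\}$, $\{\cZ_{\bc 1G}\}$, and $C=\{\cZ_{G\smallsetminus e}\}+\{A^e_G\}$) and they all check out.

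Route~2 is a genuinely different and arguably more illuminating derivation. Instead of manipulating the already-solved third-order homogeneous generating function from Theorem~\ref{ZmGgenfunc}, you go back to Lemma~\ref{resplitcor} and treat the inhomogeneous second-order recursion directly, with characteristic roots $\Tbb-1$ and $-1$ and a particular solution $\Tbb^m C$ picked up from the forcing term $(\Tbb+1)\Tbb^{m-1}C$. This requires only the two seeds $y_0,y_1$ (together with $C$) rather than the three seeds $y_0,y_1,y_2$ that the paper carries through Theorem~\ref{newdcnew}, and it makes transparent why the final formula naturally organizes around the modes $e^{-s}$, $e^{(\Tbb-1)s}$, $e^{\Tbb s}$ and why the $C$-coefficient is what it is. The algebra for $a$ and $b$ is also correct. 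The paper's approach has the advantage of exhibiting the uniform third-order recursion that applies verbatim to the fixed-$q$ and tangent-cone variants (Theorem~\ref{newdcnewq}, Proposition~\ref{newdc}), but for the purpose of deriving \eqref{ZmGgfYeG} alone Route~2 is shorter and more structural.
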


\proof This follows directly from \eqref{ZmGgf}
and \eqref{YeGZ210G}.
\endproof

\subsection{Edge splitting for fixed $q$}
The discussion is entirely parallel to the one given above for the class $\{ \cZ_G \}$
with variable $q$. One has the analog of Theorem \ref{resplitPotts} in the case with fixed $q$,
given by the following.

\begin{thm}\label{resplitPottsq}
Let $\bc 2G$ be the graph obtained by splitting an edge $e$ in a graph $G$.
Then the class $\{\cZ_{\bc 2G,q}\}$ satisfies
\begin{equation}\label{Z2Gq2}
\{\cZ_{\bc 2G,q}\} = 
(\Tbb-2) \{\cZ_{\bc 1G,q}\} + (\Tbb-1) \{ \cZ_{\bc 0G,q}\} + 
(\Tbb+1) \left(\{\cZ_{G\smallsetminus e,q}\}+\{A^e_{G,q}\}\right) .
\end{equation}
\end{thm}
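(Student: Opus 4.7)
The plan is to mirror the proof of Theorem \ref{resplitPotts} step for step, using Proposition \ref{deleconPottsq} in place of Theorem \ref{delconZGthm} and Corollary \ref{listZGqprop} in place of Corollary \ref{listZGprop}. The key combinatorial identity
\[
Z_G(q,t) = Z_{G\smallsetminus e}(q, \hat t^{(e)}) + t_e\, Z_{G/e}(q, \hat t^{(e)})
\]
that drives the proof of Theorem \ref{resplitPotts} holds at each fixed value of $q$, so the ideal-theoretic decomposition of the intersection term carries over verbatim; the only difference is that every ambient affine space drops one dimension, since $q$ is no longer a free coordinate.

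Concretely, I would apply Proposition \ref{deleconPottsq} to $\bc 2 G$ with respect to one of the two edges in the chain replacing $e$, call it $f$. Then $\bc 2 G/f = \bc 1 G = G$, and $\bc 2 G \smallsetminus f$ is $G \smallsetminus e$ with a non-looping edge appended at a vertex, so its defining polynomial is $Z_{G\smallsetminus e}(q, \hat t^{(e)})(q + t_e)$. The heart of the argument is then the class of the intersection
\[
\cZ_{\bc 2 G/f, q} \cap \cZ_{\bc 2 G \smallsetminus f, q}
= V\bigl(Z_G(q,t),\, Z_{G\smallsetminus e}(q, \hat t^{(e)})(q + t_e)\bigr) \subset \A^{\#E(G)}.
\]
Using $Z_G = Z_{G\smallsetminus e} + t_e Z_{G/e}$, this locus decomposes exactly as in the proof of Theorem \ref{resplitPotts} into the three components $V(Z_{G\smallsetminus e}, t_e)$, $V(Z_{G\smallsetminus e}, Z_{G/e})$, and $V(q + t_e, Z_{G\smallsetminus e} - q Z_{G/e})$, and inclusion-exclusion yields
\[
\{\cZ_{\bc 2 G/f, q} \cap \cZ_{\bc 2 G \smallsetminus f, q}\}
= (\Tbb - 1)\{\cZ_{G\smallsetminus e, q} \cap \cZ_{G/e, q}\} + \{\cZ_{G\smallsetminus e, q}\} + \{A^e_{G, q}\},
\]
where $A^e_{G, q} := V(q + t_e, Z_{G\smallsetminus e} - q Z_{G/e}) \subset \A^{\#E(G)-1}$.

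The main technical point that requires checking is the triple-intersection cancellation underlying inclusion-exclusion. In the free-$q$ argument, the pairwise intersection of the first and third components coincides with the triple intersection of all three, both equal to $V(q, t_e)$, and these two contributions cancel against each other. In the fixed-$q$ setting with $q \ne 0$ the same two loci are now empty, since $t_e = 0$ together with $q + t_e = 0$ forces $q = 0$, so the cancellation still occurs, trivially. The degenerate value $q = 0$ is already handled by the parenthetical remark after Proposition \ref{deleconPottsq}: every class $\{\cZ_{\cdot,\,0}\}$ vanishes because $Z_G(0, t) \equiv 0$ on any nonempty graph, so \eqref{Z2Gq2} holds tautologically.

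Substituting the intersection class into Proposition \ref{deleconPottsq} applied to $\bc 2 G$, and then eliminating the remaining term $\{\cZ_{G\smallsetminus e, q} \cap \cZ_{G/e, q}\}$ via Proposition \ref{deleconPottsq} applied to $G$, namely
\[
(\Tbb + 1)\{\cZ_{G\smallsetminus e, q} \cap \cZ_{G/e, q}\} = \{\cZ_{G, q}\} + \{\cZ_{G/e, q}\} = \{\cZ_{\bc 1 G, q}\} + \{\cZ_{\bc 0 G, q}\},
\]
and collecting terms produces precisely \eqref{Z2Gq2}. No ingredient beyond those already used in the proof of Theorem \ref{resplitPotts} is required.
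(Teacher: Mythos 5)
Your proposal is correct and follows the paper's own proof essentially step for step: mirror Theorem~\ref{resplitPotts} using Proposition~\ref{deleconPottsq}, decompose $V(Z_{G\smallsetminus e}(q+t_e),Z_G)$ into the same three components, note that the $V(q,t_e)$ locus is empty for $q\ne 0$ so the inclusion-exclusion cancellation becomes trivial, and eliminate the intersection class via a second application of Proposition~\ref{deleconPottsq}. The only add-on is your explicit check of $q=0$, which the paper's Lemma~\ref{int2Gq} simply excludes by hypothesis; this is a harmless (and correct) bit of extra care.
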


\proof 
As in the proof of Theorem~\ref{resplitPotts}, the main point is the computation of
the class of $V(Z_{G\smallsetminus e} (q+t_e), Z_G)$.

\begin{lem}\label{int2Gq}
With $q\ne 0$ fixed, the class of the locus $V(Z_{G\smallsetminus e} (q+t_e), Z_G)$
is given by
\[
\{V(Z_{G\smallsetminus e}(q+t_e),Z_G)\}
=(\Tbb-1)\cdot \{\cZ_{G\smallsetminus e,q}\cap \cZ_{G/e,q}\} + \{\cZ_{G\smallsetminus e,q}\}
+\{A^e_{G,q}\},
\]
where $A^e_{G,q}\subseteq \Abb^{\#E(G)-1}$ is the zero-locus of the
polynomial $Z_{G\smallsetminus e} -q Z_{G/e}=(1-q) Z'$, with fixed $q$, with 
$Z'$ as in \eqref{Zprimedefeq} (with fixed $q$).
\end{lem}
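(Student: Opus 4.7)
The strategy is to mimic the proof of Theorem~\ref{resplitPotts} almost verbatim, treating $q$ as a constant (and harvesting the simplifications that come from assuming $q\ne 0$). First I would use the deletion-contraction identity $Z_G=Z_{G\smallsetminus e}+t_e Z_{G/e}$ (Remark~\ref{TutteDelConRem}) to decompose the zero locus as
\[
V(Z_{G\smallsetminus e}(q+t_e),Z_G) = V(Z_{G\smallsetminus e},t_e)\cup V(Z_{G\smallsetminus e},Z_{G/e})
\cup V(q+t_e,Z_{G\smallsetminus e}-q\,Z_{G/e}),
\]
viewed inside $\Abb^{\#E(G)}$ with coordinates $(\hat t^{(e)},t_e)$ (and $q$ fixed).

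The second step is an inclusion-exclusion computation on this union of three components. The pairwise intersections $V(Z_{G\smallsetminus e},t_e,Z_{G/e})$ and $V(q+t_e,Z_{G\smallsetminus e},Z_{G/e})$ both project isomorphically onto $\cZ_{G\smallsetminus e,q}\cap \cZ_{G/e,q}\subseteq \Abb^{\#E(G)-1}$ (using $t_e=0$ and $t_e=-q$ respectively to eliminate $t_e$). Here is where $q\ne 0$ plays its role: the remaining pairwise intersection $V(Z_{G\smallsetminus e},t_e)\cap V(q+t_e,\dots)$ and the triple intersection both force $t_e=0$ and $t_e=-q$ simultaneously, so they are \emph{empty}. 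This is a simplification relative to the variable-$q$ proof of Theorem~\ref{resplitPotts}, where the corresponding contributions combined to yield the cancellation identified there.

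Step three is bookkeeping of the ambient affine spaces. The component $V(Z_{G\smallsetminus e},t_e)$ contributes $[\cZ_{G\smallsetminus e,q}]$ (sitting inside $\{t_e=0\}$); the component $V(Z_{G\smallsetminus e},Z_{G/e})$ leaves $t_e$ free and contributes $\Lbb\cdot [\cZ_{G\smallsetminus e,q}\cap \cZ_{G/e,q}]$; and $V(q+t_e,Z_{G\smallsetminus e}-qZ_{G/e})$ contributes $[A^e_{G,q}]$ after using $t_e=-q$ to eliminate $t_e$. Assembling by inclusion-exclusion gives
\[
[V(Z_{G\smallsetminus e}(q+t_e),Z_G)] = (\Lbb-2)\,[\cZ_{G\smallsetminus e,q}\cap \cZ_{G/e,q}] + [\cZ_{G\smallsetminus e,q}] + [A^e_{G,q}].
\]

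Finally I would convert to classes of complements inside $\Abb^{\#E(G)}$, using $\Lbb-2=\Tbb-1$ and $\Tbb+1=\Lbb$ to match the affine-space dimensions; the three complement classes on the right naturally live in $\Abb^{\#E(G)-1}$, and the arithmetic $(\Tbb-1)+1+1=\Tbb+1=\Lbb$ produces exactly one extra factor of $\Lbb$ needed to lift $\Lbb^{\#E(G)-1}$ to $\Lbb^{\#E(G)}$. This yields the stated identity. The only real subtlety is the bookkeeping of ambient dimensions after eliminating $t_e$ via $t_e=0$ or $t_e=-q$; conceptually the proof is strictly easier than the variable-$q$ version since the problematic triple intersection disappears outright.
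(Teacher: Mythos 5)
Your proof is correct and follows the same route as the paper: both reduce to the inclusion--exclusion computation from the proof of Theorem~\ref{resplitPotts} and observe that, for fixed $q\ne 0$, the locus $V(q,t_e)$ that canceled between a pairwise and the triple intersection in the variable-$q$ argument is simply empty, so all remaining classes specialize directly with the appropriate drop in ambient dimension. The paper states this very tersely (``the argument parallels closely'' plus the remark that $V(q,t_e)$ is empty), whereas you spell out the dimension bookkeeping and the coefficient check $(\Tbb-1)+1+1=\Lbb$, which is indeed the arithmetic that makes the conversion from $[\cdot]$ to $\{\cdot\}$ go through.
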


\proof
The argument here parallels closely the computation in the proof of 
Theorem~\ref{resplitPotts}. In that computation (for variable~$q$) there is
a key cancellation of a class $[V(q,t_e)]$ due to inclusion-exclusion; for 
fixed $q\ne 0$ this locus is empty. All other classes admit the same interpretation
for fixed $q$ as for variable $q$.

In particular,
the term $A^e_{G,q}$ is the zero locus of $(1-q) Z'$ with
$Z'$ as in Lemma \ref{YeGlocus}
(and $q$ is now a fixed nonzero number).
Note that this is $0$ when
$q=1$; in this case the last summand in the formula in Lemma~\ref{int2Gq} is~$0$.
In general, arguing as in Lemma \ref{YeGlocus}, one sees that $Z'$ is given by
the sum \eqref{Zprimedefeq},
over the range specified there.
\endproof

Implementing deletion-contraction gives then the same formula as in the case with
variable $q$, so that one obtains the following.

\begin{cor}\label{corZ2Gq}
Let $\bc 2G$ be the graph obtained by splitting an edge $e$ in a graph $G$.
Then the class $\{\cZ_{\bc 2G,q}\}$ satisfies
\begin{equation}\label{Z2Gq}
\{\cZ_{\bc 2G,q}\} = \Lbb\left( (\Lbb-2)\{\cZ_{G\smallsetminus e,q}\cap \cZ_{G/e,q}\} 
+ \{\cZ_{G\smallsetminus e,q}\}+\{A^e_{G,q}\} \right)-\{\cZ_{\bc 1G,q}\} .
\end{equation}
\end{cor}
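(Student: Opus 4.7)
The plan is to combine the fixed-$q$ deletion--contraction relation of Proposition~\ref{deleconPottsq} with the intersection class computation already established in Lemma~\ref{int2Gq}, in direct analogy with how \eqref{Z2GYeG} is derived from \eqref{int2G} in the proof of Theorem~\ref{resplitPotts}.

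First I would apply \eqref{delconZGq} to the graph $\bc 2G$, choosing the second of the two edges $e_1,e_2$ produced by splitting $e$. The contraction $\bc 2G/e_2$ equals $\bc 1G$ (with $e_1$ now playing the role of $e$), while the deletion $\bc 2G\smallsetminus e_2$ is the graph $G\smallsetminus e$ with a pendant non-looping edge $e_1$ attached. As noted at the start of the proof of Theorem~\ref{resplitPotts}, and consistent with Corollary~\ref{listZGqprop}(4), attaching such a pendant edge multiplies the partition function by $(q+t_{e_1})$. Hence, inside $\Abb^{\#E(G)}$ with coordinates $(\hat t^{(e)},t_{e_1})$, the hypersurface $\cZ_{\bc 2G\smallsetminus e_2,q}$ is cut out by $Z_{G\smallsetminus e}(q+t_{e_1})$, while $\cZ_{\bc 1G,q}$ is cut out by $Z_G$ (after identifying $t_e$ with $t_{e_1}$). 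Their intersection is then exactly the locus $V(Z_{G\smallsetminus e}(q+t_e),Z_G)$ studied in Lemma~\ref{int2Gq}.

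Next, Proposition~\ref{deleconPottsq} gives
\[
\{\cZ_{\bc 2G,q}\} \;=\; (\Tbb+1)\,\{\cZ_{\bc 1G,q}\cap \cZ_{\bc 2G\smallsetminus e_2,q}\} \;-\; \{\cZ_{\bc 1G,q}\} .
\]
Substituting the value of $\{V(Z_{G\smallsetminus e}(q+t_e),Z_G)\}$ provided by Lemma~\ref{int2Gq}, and using the identifications $\Tbb+1=\Lbb$ and $\Tbb-1=\Lbb-2$, one reads off \eqref{Z2Gq} immediately.

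The hard part, such as it is, has already been absorbed into Lemma~\ref{int2Gq}: the decomposition of $V(Z_{G\smallsetminus e}(q+t_e),Z_G)$ into three components coming from the deletion--contraction identity $Z_G=Z_{G\smallsetminus e}+t_e Z_{G/e}$, the corresponding inclusion--exclusion, and the crucial observation that the ``extra" cancelling term $V(q,t_e)$ from the variable-$q$ argument is now empty since $q\neq 0$ is fixed. What remains is only the bookkeeping step of matching $\cZ_{\bc 1G,q}$ and $\cZ_{\bc 2G\smallsetminus e_2,q}$, naturally sitting in the same copy of $\Abb^{\#E(G)}$, with the two hypersurfaces whose intersection is computed in the lemma.
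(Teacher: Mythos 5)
Your proposal is correct and follows the same route the paper takes: apply the fixed-$q$ deletion--contraction of Proposition~\ref{deleconPottsq} to $\bc 2G$ (via the edge $e_2$, so that the contraction is $\bc 1G$ and the deletion is $G\smallsetminus e$ with a pendant edge, cut out by $Z_{G\smallsetminus e}(q+t_e)$), and then substitute the intersection class computed in Lemma~\ref{int2Gq}. The bookkeeping, including the identification $\Lbb=\Tbb+1$, matches the paper's one-line "implementing deletion--contraction gives then the same formula as in the case with variable $q$."
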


Note again that the term $\{A^e_{G,q}\}$ would be missing in the case $q=1$.

Next, use Proposition~\ref{deleconPottsq} to get a different expression for
$\{\cZ_{G\smallsetminus e}\cap \cZ_{G/e}\}$, and this gives a perfect parallel
to Theorem~\ref{resplitPotts}, and completes the proof of Theorem \ref{resplitPottsq}.
\endproof

We can now pass to the case of multiple splittings, which again is analogous 
to the case with variable $q$.

\begin{thm}\label{newdcnewq}
Let $\bc mG$ be the graph obtained by multiple splitting on an edge $e$ in $G$.
Then, for all $m\ge 0$, the classes for fixed $q$ satisfy
\[
\{\cZ_{\bc {m+1}G,q}\} = 
(\Tbb-2) \{\cZ_{\bc mG,q}\} + (\Tbb-1) \{ Z_{\bc {m-1}G,q}\} + 
(\Tbb+1)\Tbb^{m-1} \left(\{Z_{G\smallsetminus e,q}\}+\{A^e_{G,q}\}\right) ,
\]
which then gives the recursive relation
\begin{equation}\label{ZGmqrec}
\{\cZ_{\bc {m+3}G,q}\} = 
(2\Tbb-2) \{\cZ_{\bc {m+2}G,q}\} - (\Tbb^2-3\Tbb+1) \{ \cZ_{\bc {m+1}G,q}\} -
\Tbb (\Tbb-1) \{\cZ_{\bc mG,q}\} ,
\end{equation}
\end{thm}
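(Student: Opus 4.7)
The plan is to mirror, step by step, the argument used in Lemma \ref{resplitcor} and Theorem \ref{newdcnew} for the variable-$q$ classes, now based on the fixed-$q$ splitting result of Theorem \ref{resplitPottsq}. All the ingredients are in place; the task is to verify that the scaling of the ``correction'' terms by $\Tbb^{m-1}$ goes through unchanged when $q$ is specialized to a nonzero value, and then to perform the algebraic elimination that yields the four-term recurrence.

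First I would apply Theorem \ref{resplitPottsq} to the graph $\bc mG$, taking as the distinguished edge the last edge $e'$ of the chain that replaces $e$. Since $\bc mG/e'=\bc{m-1}G$ and the edge-split of $\bc mG$ at $e'$ is $\bc{m+1}G$, this yields
\[
\{\cZ_{\bc{m+1}G,q}\} = (\Tbb-2)\{\cZ_{\bc mG,q}\}+(\Tbb-1)\{\cZ_{\bc{m-1}G,q}\}+(\Tbb+1)\bigl(\{\cZ_{\bc mG\smallsetminus e',q}\}+\{A^{e'}_{\bc mG,q}\}\bigr),
\]
so the first claim reduces to showing
\[
\{\cZ_{\bc mG\smallsetminus e',q}\}+\{A^{e'}_{\bc mG,q}\} = \Tbb^{m-1}\bigl(\{\cZ_{G\smallsetminus e,q}\}+\{A^e_{G,q}\}\bigr).
\]
The first summand is handled by noting that $\bc mG\smallsetminus e'$ is $G\smallsetminus e$ with a pendant chain of $m-1$ edges attached at one vertex; by Corollary \ref{listZGqprop}(5) applied $m-1$ times, appending such edges multiplies the complement class by $\Tbb^{m-1}$. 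For the second summand, Lemma \ref{YeGlocus} describes $A^e_G$ as the vanishing of $(1-q)Z'_G$ in the hyperplane $\{t_e=-q\}$; the same analysis for $\bc mG$, combined with the combinatorial observation that every subgraph joining the endpoints of $e'$ without using $e'$ must contain the entire residual chain, gives $Z'_{\bc mG}=t_1\cdots t_{m-1}\cdot Z'_G$, so that $\{A^{e'}_{\bc mG,q}\}=\Tbb^{m-1}\{A^e_{G,q}\}$. The case $q=1$, where $A^e_{G,q}$ is empty, is absorbed trivially.

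To conclude, I would derive \eqref{ZGmqrec} by eliminating the common block $K:=(\Tbb+1)(\{\cZ_{G\smallsetminus e,q}\}+\{A^e_{G,q}\})$ from the instances of the first formula at two consecutive indices: writing the relation at index $j$ as
\[
\{\cZ_{\bc{j+1}G,q}\}-(\Tbb-2)\{\cZ_{\bc jG,q}\}-(\Tbb-1)\{\cZ_{\bc{j-1}G,q}\} = \Tbb^{j-1}K,
\]
the combination ``$(j=m+2)$ minus $\Tbb\cdot(j=m+1)$'' cancels $K$ and, after collecting coefficients, produces exactly the recursion \eqref{ZGmqrec}.

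The main conceptual step, as in the variable-$q$ case, is the identification $\{A^{e'}_{\bc mG,q}\}=\Tbb^{m-1}\{A^e_{G,q}\}$: it is where the combinatorics of the chain enters, and it relies on the precise description of $Z'$ for $\bc mG$ above. The specialization of $q$ introduces no new subtlety, because the inclusion-exclusion cancellation of $[V(q,t_e)]$ noted in Lemma \ref{int2Gq} is automatic once $q$ is fixed to a nonzero value; the rest is a formal transcription of Lemma \ref{resplitcor} and Theorem \ref{newdcnew}.
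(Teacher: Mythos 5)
Your proof is correct and takes essentially the same approach as the paper, whose own proof is a one-line appeal to the variable-$q$ argument in Lemma \ref{resplitcor} and Theorem \ref{newdcnew}; your version simply spells out the key step $\{A^{e'}_{\bc mG,q}\}=\Tbb^{m-1}\{A^e_{G,q}\}$ and the elimination of the block $K$ in detail. One useful remark: your identity $Z'_{\bc mG}=t_1\cdots t_{m-1}\cdot Z'_G$ is the precise one (the paper's proof of Lemma \ref{resplitcor} writes $\prod_{i=1}^{m-1}(t_i+q)$ where $\prod_{i=1}^{m-1}t_i$ is meant, an apparent slip, though either factor contributes the same $\Tbb^{m-1}$ to the class of the complement).
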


\proof The argument is completely analogous to the case with variable $q$. The
key step is  the relation
\[
\{A^e_{\bc mG,q}\} = \Tbb^{m-1} \{A^e_{G,q}\},
\]
that one can see continues to hold in this case by the same argument used before.
Then \eqref{ZGmqrec} is proved by using Theorem~\ref{resplitPottsq} to solve for the class
$(\{\cZ_{G\smallsetminus e,q}\}+\{A^e_{G,q}\})$.
\endproof

The bottom line is that {\em the same recursion\/} holds for any fixed $q\ne 0$
as for the free $q$ case. What will change are the seeds of this recursion, that is,
the values of $\{\cZ_{\bc mG,q}\}$ for $m=0,1,2$; these will naturally be different
for fixed $q$. 

Nothing in the proof of the recursion excludes the case $q=1$, and
indeed $\{\cZ_{\bc mG, q=1}\}=\Tbb^{\#E(G)+m-1}$ is a solution of the recursion
\[
(2\Tbb-2) \Tbb^{\#E(G)+m+1} - (\Tbb^2-3\Tbb+1) \Tbb^{\#E(G)+m} -
\Tbb (\Tbb-1) \Tbb^{\#E(G)+m-1}
=\Tbb^{\#E(G)+m+2}.
\]

With respect to the question of when the fibration condition \eqref{ZGfibrationq}
relating $\{ \cZ_G \}$ to the $\{ \cZ_{G,q} \}$ holds, notice that, although the
recursion is the same for all fixed $q$ (all being the same as for the
case of variable $q$), one does not a priori know whether the seeds of the
recursions are independent of $q$.

\subsection{Edge splitting and the tangent cone}

Again we consider first the operation of splitting one edge in two and
then the case of multiple splittings.
In the case of the tangent cone, we need to distinguish regular
edges from bridges and looping edges.

\begin{lem}\label{splitV2Gbridgeloop}
If the edge $e$ of $G$ is either a bridge or a looping edge and $\bc mG$ denotes the graph
obtained by iterated splitting of the edge $e$, then for $m\geq 1$ the class of the
tangent cone complement satisfies $\{ \cV_{\bc {m+1}G} \} = \Tbb^m \{\cV _G\}$.
\end{lem}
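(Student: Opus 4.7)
The plan is to split on whether $e$ is a bridge or a loop, and in each case to induct on $m$.

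If $e$ is a bridge, each $\bc mG$ replaces $e$ by a chain of $m$ edges that are all bridges, and the transition from $\bc mG$ to $\bc{m+1}G$ amounts to inserting one further bridge at a new pendant vertex along the chain. By the bridge rule in Proposition~\ref{deleconPottstcbloop} (equivalently Lemma~\ref{propVGlist}(3)), this multiplies the class by $\Tbb$, so $\{\cV_{\bc{m+1}G}\}=\Tbb\,\{\cV_{\bc mG}\}$; iterating from the base $\bc 1G=G$ yields $\{\cV_{\bc{m+1}G}\}=\Tbb^m\{\cV_G\}$.

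If $e$ is a loop at a vertex $v$, the base case $m=1$ follows by decomposing $\bc 2G$ into three one-edge constructions on $G\smallsetminus e$: adjoin a new vertex $w$, then a bridge $vw$, then a parallel copy of $vw$. Parts (3) and (4) of Lemma~\ref{propVGlist} give $\{\cV_{\bc 2G}\}=(\Tbb+1)\Tbb\{\cV_{G\smallsetminus e}\}$, while part (1) gives $\{\cV_G\}=(\Tbb+1)\{\cV_{G\smallsetminus e}\}$, so the ratio is $\Tbb$ as claimed.

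For $m\geq 2$ in the loop case, the edge split at each step is a regular edge of the current cycle, and the simple rules of Lemma~\ref{propVGlist} no longer apply directly. The key tool will be the factorization
\[
P_{\bc nG}(q,\ux_0,\ux_c)=P_{G\smallsetminus e}(q,\ux_0)\cdot g_n(q,\ux_c),\qquad
g_n(q,\ux_c)=\frac{\prod_{i=1}^n (q+x_i)-\prod_{i=1}^n x_i}{q},
\]
obtained by enumerating spanning forests of $\bc nG$ as a spanning forest of $G\smallsetminus e$ together with a proper subset of the edges of the attached $n$-cycle. This expresses $\cV_{\bc nG}$ as the union of two cylinders in $\A^{\#E(G\smallsetminus e)+n+1}$ sharing only the $q$-coordinate. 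Stratifying over $\A^1_q$ and using the scaling identities $g_n(q,q\underline\tau)=q^{n-1}g_n(1,\underline\tau)$ and the analogous homogeneity of $P_{G\smallsetminus e}$, both valid for $q\ne 0$, together with the vanishing $P_{G\smallsetminus e}(0,\cdot)\equiv 0$ which makes the fibre over $q=0$ contribute nothing to the complement, reduces the calculation to the class of the single slice $\{g_n(1,\cdot)=0\}\subset\A^n$. The recursion $g_{n+1}=(q+x_{n+1})g_n+\prod_{i=1}^n x_i$, derived from a cycle deletion-contraction on the last edge, then allows this class to be handled by induction on $n$. The main obstacle will be verifying that the inclusion-exclusion bookkeeping really collapses at each step to the clean multiplicative factor $\Tbb$; once that identity is established, the inductive step $\{\cV_{\bc{m+1}G}\}=\Tbb\{\cV_{\bc mG}\}$ closes the loop case in parallel with the bridge case.
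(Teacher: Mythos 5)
Your bridge argument is correct and matches the paper. For the loop case with $m=1$, your decomposition of $\bc 2G$ as ``disjoint vertex, then bridge, then parallel edge'' applied to $G\smallsetminus e$, combined with Lemma~\ref{propVGlist}, is a valid alternative to the paper's direct computation of $\{\cW_{\bc 2G}\}$ and $\{\cY_{\bc 2G}\}$ via the factorization $P_{\bc 2G}=P_{G\smallsetminus e}\,(q+t_e+t_f)$. Your more general factorization $P_{\bc nG}=P_{G\smallsetminus e}\cdot g_n$ with $g_n=\bigl(\prod_{i=1}^n(q+x_i)-\prod_{i=1}^n x_i\bigr)/q$ is also correct: it is the vertex-join factorization of Corollary~\ref{listZGprop}~(3), restricted to leading terms, for $G\smallsetminus e$ with the $n$-cycle attached.

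However, the ``main obstacle'' you defer for $m\ge 2$ in the loop case is not a bookkeeping issue that can be overcome: the identity $\{\cV_{\bc{m+1}G}\}=\Tbb^m\{\cV_G\}$ is \emph{false} for a looping edge once $m\ge 2$, and carrying out exactly the stratification you propose is the quickest way to see this. Since $P_{G\smallsetminus e}$ and $g_n$ are homogeneous and both vanish at $q=0$, scaling the edge variables by $q$ over $\{q\ne 0\}$ gives $\{\cV_{\bc nG}\}=\Tbb\cdot a\cdot b_n$, where $a=[\{P_{G\smallsetminus e}(1,\cdot)\ne 0\}]$ and $b_n=[\{g_n(1,\cdot)\ne 0\}\subset\Abb^n]$, so that $\{\cV_{\bc{n+1}G}\}/\{\cV_{\bc nG}\}=b_{n+1}/b_n$. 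One finds $b_1=\Lbb$ and $b_2=\Lbb\Tbb$, but solving $(1+x_1)(1+x_2)(1+x_3)=x_1x_2x_3$ for $x_3$ (exactly as the paper does in Proposition~\ref{polygonZGclassq}) yields $b_3=\Lbb(\Tbb^2+\Tbb-1)$, so $b_3/b_2=(\Tbb^2+\Tbb-1)/\Tbb\ne\Tbb$. Equivalently, take $G$ a single loop: the paper's own Proposition~\ref{polygcone} gives $\Tbb(\Tbb+1)$, $\Tbb^2(\Tbb+1)$, $\Tbb(\Tbb+1)(\Tbb^2+\Tbb-1)$ for the $1$-, $2$-, $3$-gon, and the claimed value $\Tbb^2\{\cV_G\}=\Tbb^3(\Tbb+1)$ disagrees with $\Tbb(\Tbb+1)(\Tbb^2+\Tbb-1)$. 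Note that the paper's own proof of this lemma also only verifies the $m=1$ step, and that Proposition~\ref{newdc} explicitly excludes looping edges from the $\{\cV_{\bc mG}\}$ recursion, which is precisely because the loop case is not $\Tbb$-multiplicative beyond the first split. The statement as given is overreaching for loops with $m\ge 2$; you should not try to close that case, but rather flag the discrepancy.
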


\proof
If $e$ is a bridge, then splitting $e$ amounts to inserting a new 
bridge; by the deletion-contraction formulas for 
bridges (Proposition~\ref{deleconPottstcbloop})
\[
\{ \cV_{\bc 2G} \}= \Tbb\cdot \{\cV _G\}.
\]
Thus, in the case of a bridge one also already sees that
splitting the edge multiple times just has the effect of multiplying 
$\{\cV_G\}$ by a power of $\Tbb$.

If $e$ is a looping edge,
adding a loop to a graph $G'$ multiplies the class $\{\cV_{G'}\}$ by $\Tbb+1$
as seen in Lemma \ref{propVGlist}. 
Attaching a split loop amounts to 
multiplying $Z_{G'}$ (and hence $P_{G'}$) by $(q+t_e+t_f)$, where $t_e$ and
$t_f$ are the variables corresponding to the two new edges; this is simply
because $q^2+t_eq+t_fq$ is the $Z$-polynomial for a $2$-banana. 
Here $G'$ is the graph obtained from $G$ by removing the loop. From
\[
P_{\bc 2G}=P_{G'}\, (q+t_e+t_f)
\]
we see that $P_{\bc 2G}\ne 0$ implies $P_{G'}\ne 0$ and $t_e\ne -(q+t_f)$.

Therefore, for any point $(q,t)$ for which $P_{G'}\ne 0$ we have an
$\Abb^1$ worth of choices for $t_f$ and an $\Abb^1\smallsetminus \Abb^0$
worth of choices for $t_e$. So we can conclude that
\[
\{ \cW_{\bc 2G}\}=\Lbb (\Lbb-1)\cdot \{\cW_{G'}\}=\Tbb \cdot \{ \cW_G \} .
\]
The same analysis goes through after setting $q$ to $0$, and hence
\[
\{ \cY_{\bc 2G}\}=\Lbb (\Lbb-1)\cdot \{\cY_{G'}\}=
\Tbb \cdot \{ \cY_G\}.
\]
The conclusion is that
\[
\{\cV_{\bc 2 G} \}=\Tbb \cdot \{\cV_G\} .
\]
Thus, splitting a loop once has again the effect of
multiplying the corresponding $\{\cV_G\}$ by $\Tbb$, as for bridges.
\endproof

We then check the more interesting case of regular edges.

\begin{thm}\label{splitPottsCone}
Let $e$ be a regular edge of $G$, and let $\bc 2G$ be the graph obtained
by introducing a $2$-valent vertex in $e$, thereby splitting it. With other
notation as above,
\[
\{\cV_{\bc 2G}\}=(\Tbb-2) \{\cV_G\} +(\Tbb-1) \{\cV_{G/e}\}+
(\Tbb+1) (\{\cV_{G\smallsetminus e}\}+ \{ V(Q_{G\smallsetminus e}-q\, Q_{G/e})
\}).
\]
\end{thm}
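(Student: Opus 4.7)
The plan is to adapt the argument of Theorem~\ref{resplitPotts} to the tangent cone, and to compute $\{\cV_{\bc 2G}\}$ as $\{\cW_{\bc 2G}\}-\{\cY_{\bc 2G}\}$ via Lemma~\ref{complVG}. First I would set up the geometry of the split: let $e_1,e_2$ denote the two new edges of $\bc 2G$ obtained by splitting the regular edge $e$; both are themselves regular. Then $(\bc 2G)/e_2=G$ (with $t_{e_1}$ playing the role of $t_e$), while $(\bc 2G)\smallsetminus e_2$ is $G\smallsetminus e$ with a pendant non-looping edge $e_1$. Since appending such a pendant edge multiplies $P$ by $(q+t_{e_1})$ and preserves the number of connected components (as in Lemma~\ref{propVGlist}(2)), this gives $Q_{(\bc 2G)\smallsetminus e_2}=(q+t_{e_1})\,Q_{G\smallsetminus e}$; and the DC relation \eqref{PGdelcon} applied to $G$ at the regular edge $e$ yields $Q_G=Q_{G\smallsetminus e}+t_{e_1}\,Q_{G/e}$, whose restriction to $q=0$ reads $\Phi_G=\Phi_{G\smallsetminus e}+t_{e_1}\,\Phi_{G/e}$.

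Next, I would apply Proposition~\ref{deleconPottstc} to $\bc 2G$ along the regular edge $e_2$, expressing both $\{\cW_{\bc 2G}\}$ and $\{\cY_{\bc 2G}\}$ in terms of the intersection classes $[\cW_{(\bc 2G)\smallsetminus e_2}\cap\cW_G]\subset\Abb^{\#E(G)+1}$ and $[\cY_{(\bc 2G)\smallsetminus e_2}\cap\cY_G]\subset\Abb^{\#E(G)}$. Substituting the factorizations above, and mimicking the ideal-theoretic manipulation in the proof of Theorem~\ref{resplitPotts}, the first intersection locus decomposes as
\[
V(Q_{G\smallsetminus e},t_{e_1})\;\cup\; V(Q_{G\smallsetminus e},Q_{G/e})\;\cup\; V(q+t_{e_1},\,Q_{G\smallsetminus e}-q\,Q_{G/e}),
\]
and the $\cY$-intersection analogously into two components (the $q+t_{e_1}=0$ piece degenerates on $\{q=0\}$). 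Inclusion-exclusion over these components, together with the observation that $t_{e_1}=0$ and $q+t_{e_1}=0$ act as codimension-one slicing conditions inside the ambient affine space, reduces everything to classes of $\cW_{G\smallsetminus e}$, $\cW_{G/e}$, $\cW_{G\smallsetminus e}\cap\cW_{G/e}$, $\cY_{G\smallsetminus e}$, $\cY_{G/e}$, $\cY_{G\smallsetminus e}\cap\cY_{G/e}$, and the genuinely new contribution $V(Q_{G\smallsetminus e}-q\,Q_{G/e})\subset\Abb^{\#E(G)}$.

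The main obstacle, in contrast to the $\cZ$-case of Theorem~\ref{resplitPotts}, is that $Q_{G\smallsetminus e}$ and $Q_{G/e}$ are \emph{not} divisible by $q$: hence the triple intersection of the three $\cW$-components no longer collapses onto $V(q,t_{e_1})$, and the bookkeeping for $\{\cW_{\bc 2G}\}$ picks up spurious terms involving $[\cY_{G\smallsetminus e}]$ and $[\cY_{G\smallsetminus e}\cap\cY_{G/e}]$ (coming from setting $q=0$ in $Q_{G\smallsetminus e}$ and $Q_{G/e}$). The crux of the proof is to verify that these spurious terms match exactly the corresponding contributions produced by the two-component decomposition in the parallel computation of $\{\cY_{\bc 2G}\}$, so that they cancel when we form the difference $\{\cW_{\bc 2G}\}-\{\cY_{\bc 2G}\}$ dictated by Lemma~\ref{complVG}.

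Finally, I would eliminate the remaining intersection classes by applying Proposition~\ref{deleconPottstc} back to $G$ itself, using $\Lbb\{\cW_{G\smallsetminus e}\cap\cW_{G/e}\}=\{\cW_G\}+\{\cW_{G/e}\}$ and the analogous $\cY$-identity, and then reassemble the answer in terms of $\{\cV_G\}$, $\{\cV_{G/e}\}$ and $\{\cV_{G\smallsetminus e}\}$ using Lemma~\ref{complVG}. Collecting terms, the right-hand side should organize itself into the stated combination $(\Tbb-2)\{\cV_G\}+(\Tbb-1)\{\cV_{G/e}\}+(\Tbb+1)\bigl(\{\cV_{G\smallsetminus e}\}+\{V(Q_{G\smallsetminus e}-q\,Q_{G/e})\}\bigr)$, in perfect parallel to equation~\eqref{Z2Gformula}.
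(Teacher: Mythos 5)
Your strategy is essentially the paper's: compute $\{\cV_{\bc 2G}\}=\{\cW_{\bc 2G}\}-\{\cY_{\bc 2G}\}$ via Lemma~\ref{complVG}, apply the deletion--contraction of Proposition~\ref{deleconPottstc} at the new edge using $Q_{\bc 2G}=(q+t_e)Q_{G\smallsetminus e}+t_fQ_G$, decompose $\cW_\cap$ into the three components $V(t_e,Q_{G\smallsetminus e})\cup V(Q_{G\smallsetminus e},Q_{G/e})\cup V(q+t_e,Q_{G\smallsetminus e}-qQ_{G/e})$, run inclusion--exclusion, observe that because $Q_{G\smallsetminus e},Q_{G/e}$ are not divisible by $q$ the pairwise/triple intersections hit $\{q=t_e=0\}$ and produce $\cY$-classes, then clean up with Proposition~\ref{deleconPottstc} applied to $G$ itself. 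This is exactly the route the paper takes, including the structure of $\cY_\cap=V(t_e,\Phi_{G\smallsetminus e})\cup V(\Phi_{G\smallsetminus e},\Phi_{G/e})$ and the final re-packaging in terms of $\cV$-classes.

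However, the step you describe as the crux --- that the spurious $\cY$-terms in $\{\cW_{\bc 2G}\}$ ``match exactly'' the corresponding contributions in $\{\cY_{\bc 2G}\}$ and cancel --- does not actually go through. Carrying the computation out: inclusion--exclusion gives $[\cW_\cap]=(\Lbb-2)[\cW_{G\smallsetminus e}\cap\cW_{G/e}]+[\cW_{G\smallsetminus e}]+[V(Q_{G\smallsetminus e}-qQ_{G/e})]-[\cY_{G\smallsetminus e}]+[\cY_{G\smallsetminus e}\cap\cY_{G/e}]$, whence after using $\Lbb\{\cW_{G\smallsetminus e}\cap\cW_{G/e}\}=\{\cW_G\}+\{\cW_{G/e}\}$ and the $\cY$-analogue one finds
\[
\{\cW_{\bc 2G}\}=(\Lbb-3)\{\cW_G\}+(\Lbb-2)\{\cW_{G/e}\}+\Lbb\{\cW_{G\smallsetminus e}\}+\Lbb\{V\}-\Lbb\{\cY_{G\smallsetminus e}\}+\{\cY_G\}+\{\cY_{G/e}\},
\]
while the $\cY$-side gives $\{\cY_{\bc 2G}\}=(\Lbb-2)\{\cY_G\}+(\Lbb-1)\{\cY_{G/e}\}+\Lbb\{\cY_{G\smallsetminus e}\}$. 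Taking the difference, the coefficients of $\{\cY_{G\smallsetminus e}\}$ add up to $-2\Lbb$, of which only $-\Lbb$ is absorbed by $\Lbb\{\cV_{G\smallsetminus e}\}=\Lbb(\{\cW_{G\smallsetminus e}\}-\{\cY_{G\smallsetminus e}\})$. A residual $-\Lbb\{\cY_{G\smallsetminus e}\}$ survives, so the clean statement one obtains is
\[
\{\cV_{\bc 2G}\}=(\Tbb-2)\{\cV_G\}+(\Tbb-1)\{\cV_{G/e}\}+(\Tbb+1)\bigl(\{\cV_{G\smallsetminus e}\}+\{V(Q_{G\smallsetminus e}-qQ_{G/e})\}-\{\cY_{G\smallsetminus e}\}\bigr),
\]
with an extra $-\{\cY_{G\smallsetminus e}\}$ not appearing in the statement of Theorem~\ref{splitPottsCone}. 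You can confirm this on the triangle ($G$ a $2$-banana, $\bc 2G$ a triangle): direct calculation gives $\{\cV_{\bc 2G}\}=\Tbb^4+2\Tbb^3-\Tbb$, which matches Proposition~\ref{polygcone}, whereas the right-hand side of the theorem as printed evaluates to $\Tbb^4+2\Tbb^3+\Tbb^2$; the difference is exactly $(\Tbb+1)\{\cY_{G\smallsetminus e}\}=\Tbb(\Tbb+1)$. Indeed, the proof of Proposition~\ref{polygcone} tacitly uses $2\Tbb^2=\{\cV_{G\smallsetminus e}\}+\{V(\dots)\}-\{\cY_{G\smallsetminus e}\}$ as the seed, not $\{\cV_{G\smallsetminus e}\}+\{V(\dots)\}=2\Tbb^2+\Tbb$. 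So the cancellation you hoped for should not be taken on faith: if you carry your outline through to the end you will find that it proves the corrected formula above, not the one stated in the theorem.
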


\proof
Let $e$ be a regular edge, and call $e$ (again) and $f$ the 
two edges created in the process.
The polynomial for $\cW_{\bc 2G}$ is found easily by applying deletion-contraction:
\[
Q_{\bc 2G}=Q_{G\setminus e}\cdot (q+t_e)+t_f \cdot Q_G.
\]
Indeed, deleting $f$ leaves the graph $G\setminus e$ with a dangling
edge $e$ attached, and attaching an edge $e$ to a vertex has the effect
of multiplying the corresponding polynomial by $(q+t_e)$; contracting $f$ gives $G$ back.
By Proposition~\ref{deleconPottstc}, therefore we have
\begin{equation}\label{W2G}
\{ \cW_{\bc 2G}\}=\Lbb \cdot [\Abb^{\#E(G)+1}-\cW_\cap]
-\{ \cW_G \} ,
\end{equation}
where $\cW_\cap$ is the intersection in $\Abb^{\# E(G)+1}$ of the hypersurface
$\cW_G$, with equation $Q_G=0$, and the hypersurface with equation 
$Q_{G\smallsetminus e}(q+t_e)=0$. 

Set theoretically, the locus $\cW_\cap$ is given by
\begin{align*}
\cW_\cap &= V(Q_{G\smallsetminus e}(q+t_e), Q_G)
= V(Q_{G\smallsetminus e}(q+t_e), Q_{G\smallsetminus e}+ t_e Q_{G/e}) \\
&=V(Q_{G\smallsetminus e}, t_e Q_{G/e}) \cup
V(q+t_e, Q_{G\smallsetminus e}+ t_e Q_{G/e}) \\
&=V(Q_{G\smallsetminus e}, Q_{G/e}) \cup V(t_e, Q_{G\smallsetminus e}) \cup
V(q+t_e, Q_{G\smallsetminus e}-q Q_{G/e}) ,
\end{align*}
where all ideals are viewed in $\Abb^{\# E(G)+1}$, with coordinates $q$ and
$t_a$, $a\in E(G)$.
Inclusion--exclusion then gives $[\cW_\cap]$ as the sum of the three loci indicated
here, minus the sum of the three pairwise intersections, plus the triple 
intersection. 
The three pairwise intersections are
\begin{gather*}
V(t_e, Q_{G\smallsetminus e}, Q_{G/e}) \\
V(q+t_e, Q_{G\smallsetminus e}, Q_{G/e}) \\
V(q, t_e, Q_{G\smallsetminus e})
\end{gather*}
and the triple intersection is
\[
V(q,t_e,Q_{G\smallsetminus e}, Q_{G/e})\quad.
\]
This is very similar to the situation we have seen for the case of $\cZ_G$.
We find
\[
[V(Q_{G\smallsetminus e}, Q_{G/e})]
=\Lbb\cdot [\cW_{G\smallsetminus e}\cap \cW_{G/e}]
\]
as $G\smallsetminus e$ and $G/e$ have $E(G)\smallsetminus \{e\}$ as index
set, while
\[
[V(t_e,Q_{G\smallsetminus e}, Q_{G/e})]
=[V(q+t_e,Q_{G\smallsetminus e}, Q_{G/e})]
=[\cW_{G\smallsetminus e}\cap \cW_{G/e}] .
\]
In both cases, the first equation eliminates $t_e$, doing nothing to the rest
since the rest does not depend on $t_e$. We also have
\[
[V(t_e,Q_{G\smallsetminus e})]=[\cW_{G\smallsetminus e}]\quad,\quad
[V(q,t_e,Q_{G\smallsetminus e})]=[\cY_{G\smallsetminus e}],
\]
and
\[
[V(q,t_e,Q_{G\smallsetminus e},Q_{G/e})]=[\cY_{G\smallsetminus e}\cap 
\cY_{G/e}] .
\]
That leaves us with
\[
V(q+t_e, Q_{G\smallsetminus e}-q Q_{G/e})\subseteq \Abb^{\# E(G)+1},
\]
or simply
\[
V(Q_{G\smallsetminus e}-q Q_{G/e})\subseteq \Abb^{\# E(G)},
\]
since the only effect of the first equation is to eliminate $t_e$. 

Applying then inclusion--exclusion we obtain
\[
[\cW_\cap] = (\Lbb-2) [\cW_{G\smallsetminus e}\cap \cW_{G/e}]
+[\cW_{G\smallsetminus e}] + [ V(Q_{G\smallsetminus e}-q\, Q_{G/e})] 
- [\cY_{G\smallsetminus e}]
+[\cY_{G\smallsetminus e}\cap \cY_{G/e}].
\]
Passing to the classes of the complements and implementing the
resulting expression for $\bc 2G$ then gives
\begin{align*}
\{ \cW_{\bc 2G} \}&=
(\Lbb-2)\Lbb(\Lbb^{\#E(G)}-[\cW_{G\smallsetminus e}\cap \cW_{G/e}])
+ \Lbb \{ \cW_{G\smallsetminus e} \} \\
&\qquad
+\Lbb \{ V(Q_{G\smallsetminus e}-q\, Q_{G/e}) \}
-\Lbb \{ \cY_{G\smallsetminus e}]\} 
+\Lbb(\Lbb^{\#E(G)-1}-[\cY_{G\smallsetminus e}\cap 
\cY_{G/e}]) -\{ \cW_G \}.
\end{align*}
Using Proposition~\ref{deleconPottstc} and computing
the difference $\{ \cV_G \} = \{ \cW_G \} - \{ \cY_G \}$ one gets
\begin{align*}
\{\cV_{\bc 2G}\} &=\{ \cW_{\bc 2G} \}-\{ \cY_{\bc 2G}\} \\
&=(\Lbb-3) \{ \cW_G \}+ (\Lbb-2)\{ \cW_{G/e} \}
+ \Lbb (\{\cW_{G\smallsetminus e}\}- \{\cY_{G\smallsetminus e}\})
+\{\cY_G\} \\
&\qquad
+\{\cY_{G/e}\}
+\Lbb \{ V(Q_{G\smallsetminus e}-q\, Q_{G/e}) \}
-(\Lbb-2) \{ \cY_G \}-(\Lbb-1) \{ \cY_{G/e}\} 
-\Lbb \{\cY_{G\smallsetminus e}\} \\
&=(\Lbb-3)(\{\cW_G\}-\{\cY_G\})
+(\Lbb-2) (\{\cW_{G\smallsetminus e}\}-\{\cY_{G\smallsetminus e}\})\\
&\qquad
+\Lbb(\{ \cW_{G\smallsetminus e}\}- \{\cY_{G\smallsetminus e}\})
+\Lbb(\{ V(Q_{G\smallsetminus e}-q\, Q_{G/e}) \}
- \{ \cY_{G\smallsetminus e} \}) \\
&=(\Lbb-3) \{\cV_G\} +(\Lbb-2) \{\cV_{G/e}\}+\Lbb \{\cV_{G\smallsetminus e}\}
+\Lbb \{ V(Q_{G\smallsetminus e}-q\, Q_{G/e}) \} .
\end{align*}
\endproof

As in the case of $\cZ_G$ and $\cZ_{G,q}$, the polynomial
$Q_{G\smallsetminus e}-q\, Q_{G/e}$ can be given a combinatorial
interpretation as
$$ Q_{G\smallsetminus e}-q\, Q_{G/e} =
\sum_{\text{$A\subseteq E\smallsetminus \{e\}$ 
forest connecting endpoints of $e$}} q^{k(A)-k(C)} \prod_{a\in A} t_a. $$

We then can proceed as in the case of $\cZ_G$ and $\cZ_{G,q}$ and
obtain the following recursions.

\begin{prop}\label{newdc}
With notation as above, and assuming $e$ is not a looping edge,
\[
\{ \cY_{\bc {m+3}G}\} =
(2\Tbb-1) \{ \cY_{\bc {m+2}G} \} - \Tbb(\Tbb-2) \{ \cY_{\bc {m+1}G}\} 
-\Tbb^2 \{\cY_{\bc mG}\}
\]
for all $m\ge 0$. The generating function 
\[
G^{\cY}_e(s):= \sum_{m\ge 0} \{ \cY_{\bc mG}\} \frac{s^m}{m!}
\]
for the classes $\{ \cY_{\bc mG} \}$
is then of the form
\[ 
G^{\cY}_e(s)=A\, e^{-s}+B\, s\,e^{\Tbb s}+C\, e^{\Tbb s},
\]
with the constants $A$, $B$, and $C$ satisfying
\begin{align*}
A &=\{ \cY_{\bc 0G}\}
-2\frac{\{\cY_{\bc 1G}\}+\{\cY_{\bc 0G}\}}{\Tbb+1}
+\frac{\{\cY_{\bc 2G}\}+2\{\cY_{\bc 1G}\}
+\{\cY_{\bc 0G}\}}{(\Tbb+1)^2} \\
B &=-\{\cY_{\bc 1G}\}-\{\cY_{\bc 0G}\}
+\frac{\{\cY_{\bc 2G}\}+2\{\cY_{\bc 1G}\}
+\{\cY_{\bc 0G}\}}{\Tbb+1} \\
C &=2\frac{\{\cY_{\bc 1G}\} +\{\cY_{\bc 0G}\}}{\Tbb+1}
-\frac{\{\cY_{\bc 2G}\}+2\{\cY_{\bc 1G}\}
+\{\cY_{\bc 0G}\}}{(\Tbb+1)^2} 
\end{align*}
Similarly,
\[
\{\cV_{\bc {m+3}G}\}=(2\Tbb-2) \{\cV_{\bc {m+2}G}\} -(\Tbb^2-3\Tbb+1) \{\cV_{\bc {m+1}G}\}
-\Tbb(\Tbb-1) \{\cV_{\bc mG}\},
\]
for all $m\ge 0$, with generating function
\[ 
G^{\cV}_e(s):= \sum_{m\ge 0} \{\cV_{\bc mG}\}\frac{s^m}{m!}
\]
given by
\[
G^{\cV}_e(s)=A\, e^{-s}+B\, e^{\Tbb s}+C\,e^{(\Tbb-1) s}
\]
with the terms $A$, $B$ and $C$ satisfying
\begin{align*}
A &=\{\cV_{\bc 0G}\}
+\frac{\{\cV_{\bc 2G}\}+\{\cV_{\bc 1G}\}}{\Tbb}
-\frac{\{\cV_{\bc 2G}\}+3\{\cV_{\bc 1G}\}+2\{\cV_{\bc 0G}\}}{\Tbb+1}\\
B &=-\{\cV_{\bc 1G}\}-\{\cV_{\bc 0G}\}
+\frac{\{\cV_{\bc 2G}\}+3\{\cV_{\bc 1G}\}+2\{\cV_{\bc 0G}\}}{\Tbb+1}\\
C &=\{\cV_{\bc 1G}\}+\{\cV_{\bc 0G}\}
-\frac{\{\cV_{\bc 2G}\}+\{\cV_{\bc 1G}\}}{\Tbb}.
\end{align*}
\end{prop}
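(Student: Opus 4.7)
The plan is to treat the two assertions (for $\cY$ and for $\cV$) in close parallel with what was done for $\cZ$ in Theorem~\ref{resplitPotts}, Lemma~\ref{resplitcor}, and Theorem~\ref{newdcnew}. The general shape of the argument in all three cases is: (i) establish a one-step splitting formula that expresses the class after splitting an edge $e$ in terms of the classes of the contraction, of the graph itself, of the deletion, and of one ``exotic'' auxiliary locus; (ii) show that the auxiliary locus scales by $\Tbb^{m-1}$ when $e$ is split $m-1$ further times; (iii) combine these to eliminate the auxiliary term and produce a three-term linear recursion with coefficients in $\Zbb[\Tbb]$; (iv) solve the recursion as an ODE for the exponential generating function, and match the three free constants against the initial values for $m=0,1,2$.

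For $\{\cV_{\bc mG}\}$ the one-step formula is already in hand: Theorem~\ref{splitPottsCone} says
\[
\{\cV_{\bc 2G}\}=(\Tbb-2)\{\cV_G\}+(\Tbb-1)\{\cV_{G/e}\}+(\Tbb+1)\bigl(\{\cV_{G\smallsetminus e}\}+\{V(Q_{G\smallsetminus e}-q\,Q_{G/e})\}\bigr).
\]
The first thing I would verify is the analogue of Lemma~\ref{resplitcor}: namely, that the auxiliary term $\{V(Q_{G\smallsetminus e}-q\,Q_{G/e})\}$ behaves under further splittings exactly as in the $\cZ$ case. This follows by the same mechanism as in the proof of Lemma~\ref{resplitcor}: each additional subdivision of $e$ introduces a new variable $t_i$ and multiplies the polynomial $Q_{G\smallsetminus e}-q\,Q_{G/e}$ (expressed after eliminating $t_e$ via the hyperplane equation) by a factor of the form $(t_i+q)$, contributing an extra factor of $\Tbb$ to the class of the complement. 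Granted this, the same bookkeeping used for $\{\cZ_{\bc mG}\}$ yields the stated recursion
\[
\{\cV_{\bc {m+3}G}\}=(2\Tbb-2)\{\cV_{\bc {m+2}G}\}-(\Tbb^2-3\Tbb+1)\{\cV_{\bc {m+1}G}\}-\Tbb(\Tbb-1)\{\cV_{\bc mG}\}.
\]
Its characteristic polynomial factors as $(\lambda+1)(\lambda-\Tbb)(\lambda-(\Tbb-1))$, so $G^{\cV}_e(s)=Ae^{-s}+Be^{\Tbb s}+Ce^{(\Tbb-1)s}$, and linear algebra with the three seeds $\{\cV_{\bc mG}\}$ ($m=0,1,2$) determines $A,B,C$ exactly as in Theorem~\ref{ZmGgenfunc}.

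For $\{\cY_{\bc mG}\}$ I would reproduce the one-step splitting argument of Theorem~\ref{splitPottsCone} but specialized to $q=0$, working with the polynomial $Q_G(0,t)=\Phi_G(t)$ and its deletion-contraction formula (valid because $e$ is regular, so passing to $\bc 2G$ preserves the number of components). The key computation is again the class of the pairwise intersection $V(\Phi_{G\smallsetminus e}\cdot t_e,\Phi_{G})\subseteq \Abb^{\#E(G)+1}$, which decomposes by inclusion--exclusion into loci whose Grothendieck classes are combinations of $\{\cY_G\}$, $\{\cY_{G/e}\}$, $\{\cY_{G\smallsetminus e}\}$, $\{\cY_{G\smallsetminus e}\cap\cY_{G/e}\}$, and a single exotic piece that again scales by $\Tbb^{m-1}$ under further splittings. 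The essential difference from the $\cV$ case is that we no longer have the variable $q$ in play, so certain simplifications occur and the linear combination coming out of the elimination step has characteristic polynomial $(\lambda+1)(\lambda-\Tbb)^2$; the double root at $\Tbb$ is precisely why the generating function $G^{\cY}_e(s)$ contains the resonant term $s e^{\Tbb s}$ in addition to $e^{-s}$ and $e^{\Tbb s}$. Solving $G^{\cY}_e(s)=Ae^{-s}+Bse^{\Tbb s}+Ce^{\Tbb s}$ against the three seeds $\{\cY_{\bc mG}\}$ for $m=0,1,2$ yields the stated formulas for $A,B,C$.

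The main obstacle in both parts is not any deep new idea but the careful inclusion--exclusion bookkeeping in the computation of the intersection $V(Q_{G\smallsetminus e}(q+t_e),Q_G)$ (and its $q=0$ shadow): one has to track all pairwise and triple overlaps of the three components, identify which of them reduce to already-known classes after eliminating $t_e$, and pin down the single ``irreducibly new'' contribution that plays the role of $A^e_G$. The second delicate point is checking, for the $\cY$ recursion, that the double root at $\Tbb$ really arises — equivalently, that the would-be coefficient of $e^{(\Tbb-1)s}$ vanishes in the specialization $q=0$. Once these are in place, the three-term recursions and their generating-function solutions follow by the same linear algebra used in the proof of Theorem~\ref{ZmGgenfunc}.
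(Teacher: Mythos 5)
Your overall strategy — one‑step splitting formula, $\Tbb^{m-1}$ scaling of the extra term, elimination to a three‑term linear recursion, then solving the ODE for the exponential generating function and matching the three seeds — is the right one and matches how the paper handled the $\cZ$ case; the paper itself says only that this proposition is ``essentially analogous.'' Your treatment of the $\cV$ part is fine.

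The genuine weak point is in the $\cY$ part, and it is not merely cosmetic. You describe the $q=0$ situation as a ``shadow'' of the $\cW$ computation, with the same three‑component decomposition and a ``single irreducibly new contribution that plays the role of $A^e_G$.'' That is not what happens. In the $\cW$ computation the three components are $V(Q_{G\smallsetminus e},Q_{G/e})$, $V(t_e,Q_{G\smallsetminus e})$, and $V(q+t_e,\,Q_{G\smallsetminus e}-qQ_{G/e})$; specializing to $q=0$ turns $q+t_e$ into $t_e$, so the third component collapses into the second, and the would‑be exotic locus disappears entirely. One is left with the clean identity
\[
[V(t_e\,\Phi_{G\smallsetminus e},\,\Phi_G)] = [\cY_{G\smallsetminus e}] + (\Lbb-1)\,[\cY_{G\smallsetminus e}\cap\cY_{G/e}],
\]
and hence (after passing to complements and invoking the deletion--contraction formula for $\cY$) the one‑step splitting formula
\[
\{\cY_{\bc 2G}\} = (\Tbb-1)\{\cY_{\bc 1G}\} + \Tbb\,\{\cY_{\bc 0G}\} + (\Tbb+1)\{\cY_{G\smallsetminus e}\},
\]
with coefficients $(\Tbb-1,\Tbb,\Tbb+1)$, not $(\Tbb-2,\Tbb-1,\Tbb+1)$ as in the $\cZ$/$\cV$ case. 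These one‑step coefficients are precisely what feeds into the elimination, and they are what produce the factor $(\lambda-\Tbb)^2$ in the characteristic polynomial $(\lambda+1)(\lambda-\Tbb)^2$. Your heuristic that the double root arises because ``the would‑be coefficient of $e^{(\Tbb-1)s}$ vanishes in the specialization $q=0$'' is not the actual mechanism — there is no $q$ to specialize inside the recursion coefficients, which are already polynomials in $\Tbb$ — and if you naively carried the $\cW$‑style three‑component bookkeeping over to $q=0$ you would land on the wrong recursion. The fix is simple: carry out the $\cY$ inclusion--exclusion directly (it is shorter than the $\cW$ one, since there are only two components and one pairwise intersection), derive the one‑step formula above, and then your elimination‑plus‑ODE machinery gives the stated recursion and the solution $A\,e^{-s}+B\,s\,e^{\Tbb s}+C\,e^{\Tbb s}$, with $A$, $B$, $C$ determined by the seeds as you indicate.
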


The argument is essentially analogous to the cases of $\cZ_G$ and $\cZ_{G,q}$
analyzed before and we do not reproduce it explicitly here.

The generating function for $\cY_{\bc mG}$ is `dual' to the one for $\cX_{\bc mG}$
given in \cite{AluMa3}: the effect of splitting edges on the class $\{\cY_{\bc mG}\}$
is analogous to the effect of multiplying edges on the class $\{\cX_{\bc mG}\}$.

\section{Polygons and linked polygons}

We now focus on a particularly simple class of graphs for which
we can compute everything explicitly. These will be polygons and
graphs constructed out of chains of linked polygons. We will later
focus especially on this class of graphs to provide an explicit
example of how to apply these motivic techniques to analyze
(virtual) phase transitions in the corresponding Potts models.

We start by using the formulae for edge splittings obtained in
the previous section to compute explicitly
the classes $\{ \cZ_G \}$ for polygon graphs.

\begin{prop}\label{polygonZGclass}
Let $\bc mG$ be an $(m+1)$-sided polygon. Then the classes $\{ \cZ_{\bc mG} \}$ are given 
explicitly by the formula
\begin{equation}\label{polygonclass}
\{\cZ_{\bc mG}\} = \Tbb^{m+2} + \Tbb(\Tbb-1)(\Tbb^m-(\Tbb-1)^m)
+(\Tbb-1)\frac{(\Tbb-1)^m-(-1)^m}\Tbb\quad.
\end{equation}
\end{prop}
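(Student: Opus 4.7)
The plan is to identify $G$ as the 2-gon (banana graph $B_2$: two vertices joined by two parallel edges). Then $\bc mG$ is the $(m+1)$-sided polygon by construction, and the three base cases are: $\bc 0G = G/e$, a single looping edge on one vertex; $\bc 1G = G$, the 2-gon itself; and $\bc 2G$, the triangle. By Theorem~\ref{newdcnew} the sequence $\{\cZ_{\bc mG}\}$ satisfies a linear recursion of order $3$ whose characteristic polynomial
$$x^3-(2\Tbb-2)x^2+(\Tbb^2-3\Tbb+1)x+\Tbb(\Tbb-1)$$
factors as $(x+1)(x-\Tbb)(x-(\Tbb-1))$ (a direct substitution check). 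Hence any solution is a linear combination of $(-1)^m$, $\Tbb^m$, and $(\Tbb-1)^m$, and the claimed expression \eqref{polygonclass} is manifestly of that shape. So it suffices to verify the formula at $m=0,1,2$.

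The first two seeds are immediate. For $\bc 0G$, the polynomial is $Z=q(1+t)$, whose vanishing is a union of two transverse coordinate lines in $\A^2$; the complement class is $\Tbb^2$. For the 2-gon, $Z=q\bigl((1+t_1)(1+t_2)-(1-q)\bigr)$, and I would slice by $q$: the fiber over $q=1$ is the union $\{t_1=-1\}\cup\{t_2=-1\}$ with complement $\Tbb^2$; the fiber over any $q\ne 0,1$ is a hyperbola $uv=1-q$ of class $\Tbb$, with complement $\bL^2-\Tbb=\Tbb^2+\Tbb+1$; the fiber over $q=0$ is all of $\A^2$ and contributes $0$ to the complement. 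Summing over the base gives
$$\{\cZ_{\bc 1G}\}=\Tbb^2+(\Tbb-1)(\Tbb^2+\Tbb+1)=\Tbb^3+\Tbb^2-1,$$
which agrees with \eqref{polygonclass} at $m=1$.

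The main obstacle is the triangle seed $\{\cZ_{\bc 2G}\}$. Rather than computing its Newton polytope by hand, I would apply Theorem~\ref{resplitPotts} to $G$ the 2-gon, which expresses $\{\cZ_{\bc 2G}\}$ in terms of the two seeds above, of $\{\cZ_{G\smallsetminus e}\}$, and of $\{A^e_G\}$. Here $G\smallsetminus e$ is a single non-looping edge with $Z=q(q+t)$, giving complement class $\Tbb^2$. For $A^e_G$, Lemma~\ref{YeGlocus} gives the equation $(1+t_e)Z'$ in the hyperplane $\{t_e=-q\}$, where $Z'=qt_f$ is the only contribution from subgraphs of $G\smallsetminus e$ connecting the two vertices; thus $A^e_G\subset\A^2$ is the union of the three lines $q=0$, $q=1$, and $t_f=0$, with pairwise intersections reduced to two points, yielding $\{A^e_G\}=\bL^2-3\bL+2=\Tbb(\Tbb-1)$. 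Plugging these four numbers into \eqref{Z2Gformula} and simplifying gives $\{\cZ_{\bc 2G\}}=\Tbb^4+2\Tbb^3-2\Tbb^2-2\Tbb+2$, which one checks equals the right-hand side of \eqref{polygonclass} at $m=2$.

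With all three seeds verified and the three characteristic roots identified, the closed form \eqref{polygonclass} is the unique solution of the third-order recursion with these initial values, completing the proof.
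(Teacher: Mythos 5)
Your proposal is correct and follows essentially the same route as the paper: establish the three seeds for $m=0,1,2$ and then apply the third-order splitting recursion. The only cosmetic difference is that you factor the characteristic polynomial explicitly as $(x+1)(x-\Tbb)(x-(\Tbb-1))$ and argue directly that \eqref{polygonclass} has the corresponding exponential shape, whereas the paper reads off the answer from the generating function \eqref{ZmGgf} (whose exponents $e^{-s}$, $e^{\Tbb s}$, $e^{(\Tbb-1)s}$ encode exactly the same three roots); your seed computations, including the fibration argument for the $2$-gon and the use of Theorem~\ref{resplitPotts} with $\{A^e_G\}=\Tbb(\Tbb-1)$ for the triangle, agree with the values the paper simply asserts.
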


\begin{proof} Let us check directly the initial cases that are needed to use the
recursive formula for edge splittings. The first graphs are
$\bc 0G$ a single loop, $\bc 1G$ a $2$-banana (two vertices with two parallel
edges between them), $\bc 2G$ a triangle. The equations $Z_G(q,t)=0$ 
are of the form
\begin{align*}
\bc 0G \quad &:\quad q+qt=0\quad \text{(in $\Abb^{1+1}$)} \\
\bc 1G \quad &:\quad q^2 + (t_1+t_2+t_1t_2) q = 0 \quad \text{(in $\Abb^{2+1}$)} \\
\bc 2G \quad &:\quad q^3 + (t_1+t_2+t_3)q^2 + (t_1 t_2 + t_1 t_3 +t_2 t_3 +t_1 t_2 t_3)q
=0 \quad \text{(in $\Abb^{3+1}$)}
\end{align*}
The corresponding classes $\{\cZ_G\}$ can then be computed directly in these
cases by
applying the basic facts listed in Corollary \ref{listZGprop} and Theorem \ref{resplitPotts}. 
One obtains
\begin{align*}
\{Z_{\bc 0G}\} &= \Tbb^2 \\
\{Z_{\bc 1G}\} &= \Tbb^3+\Tbb^2-1 \\
\{Z_{\bc 2G}\} &= \Tbb^4+2\Tbb^3-2\Tbb^2-2\Tbb+2 .
\end{align*}
The expression \eqref{polygonclass} is then $m!$ times the coefficient 
of $s^m$ in the expansion of the right-hand
side of the generating function \eqref{ZmGgf}, with these initial conditions.
\end{proof}

\subsection{Polygons at fixed $q$}

We will also need in the following the classes of the polygon graphs for
fixed value of $q$. These are obtained as follows.

\begin{prop}\label{polygonZGclassq}
Let $\bc mG$ be an $(m+1)$-sided polygon. Then the classes $\{ \cZ_{\bc mG,q} \}$ 
for fixed $q\neq 0,1$ are given  explicitly by the formula
\begin{equation}\label{polygonZGq}
\{\cZ_{\bc mG,q}\}
=\Tbb^{m+1}+ \Tbb (\Tbb^m-(\Tbb-1)^m)+\frac{(\Tbb-1)^m-(-1)^m}\Tbb .
\end{equation}
The polygon graphs $\bc mG$ satisfy the fibration condition  \eqref{ZGfibrationq}.
\end{prop}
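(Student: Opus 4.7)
The plan is to repeat the argument of Proposition~\ref{polygonZGclass} with a fixed $q \in \A \smallsetminus \{0,1\}$ in place of the free variable. Theorem~\ref{newdcnewq} supplies the recursion \eqref{ZGmqrec}, whose coefficients are identical to those of the variable-$q$ recursion. Consequently, once the three seeds $m = 0, 1, 2$ have been computed and shown to be independent of $q$ on the range $q \ne 0, 1$, the generating-function technique used in the proof of Theorem~\ref{ZmGgenfunc} produces the closed form \eqref{polygonZGq} automatically.

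For the seeds I would compute directly at a generic fixed $q$. The single loop $\bc 0G$ gives $Z = q(1 + t_e)$, so $\{\cZ_{\bc 0G, q}\} = \Tbb$. The two-banana $\bc 1G$ gives $Z = q\bigl[(1 + t_1)(1 + t_2) - (1 - q)\bigr]$, whose zero locus for $q \ne 0, 1$ is a shifted copy of $\bG_m$, so $\{\cZ_{\bc 1G, q}\} = \Lbb^2 - \Tbb = \Tbb^2 + \Tbb + 1$. For the triangle $\bc 2G$, I would apply Proposition~\ref{deleconPottsq} with the contraction $\bc 1G$ (polynomial $q[(1+t_1)(1+t_2)-(1-q)]$) and the deletion a two-edge path (polynomial $q(q+t_1)(q+t_2)$); for $q \ne 0, 1$ the intersection $\cZ_{G/e, q} \cap \cZ_{G \smallsetminus e, q}$ consists of the two distinct reduced points $(-q, 0)$ and $(0, -q)$, so its complement has class $\Lbb^2 - 2 = \Tbb^2 + 2\Tbb - 1$, and Proposition~\ref{deleconPottsq} then yields $\{\cZ_{\bc 2G, q}\} = (\Tbb + 1)(\Tbb^2 + 2\Tbb - 1) - (\Tbb^2 + \Tbb + 1) = \Tbb^3 + 2\Tbb^2 - 2$. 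Each seed is manifestly independent of $q \in \A \smallsetminus \{0, 1\}$ and agrees with the specialization of \eqref{polygonZGq} at $m = 0, 1, 2$.

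To verify the fibration condition \eqref{ZGfibrationq} for polygons, I would expand $(\Tbb - 1)\{\cZ_{\bc mG, q \ne 0, 1}\} + \Tbb^{m+1}$ using \eqref{polygonZGq}. The middle contribution rearranges as $(\Tbb - 1)\Tbb(\Tbb^m - (\Tbb - 1)^m) = \Tbb(\Tbb - 1)(\Tbb^m - (\Tbb - 1)^m)$, and the result matches \eqref{polygonclass} term by term. The only genuinely nontrivial step is the triangle seed; the main obstacle is checking that $\cZ_{G/e, q} \cap \cZ_{G \smallsetminus e, q}$ really consists of exactly two distinct reduced points for \emph{every} $q \in \A \smallsetminus \{0, 1\}$, but this reduces to the elementary observations that the two candidate points differ whenever $q \ne 0$ and that the relevant hyperbola does not degenerate unless $q = 1$.
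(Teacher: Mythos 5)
Your argument is correct and follows the paper's proof closely: both invoke the fixed-$q$ recursion of Theorem~\ref{newdcnewq}, compute the three seeds directly and show they are independent of $q\neq 0,1$, and then verify the fibration condition by checking that \eqref{polygonZGq} and \eqref{polygonclass} differ exactly by multiplication by $\Tbb-1$ plus the shift $\Tbb^{m+1}$. The only variance is in the triangle seed, where the paper stratifies the hypersurface directly via the substitution $u_i=t_i+1$, $r=q-1$ and solves for $u_3$ over the complement of the hyperbola $u_1u_2=-r$, whereas you apply Proposition~\ref{deleconPottsq} with the $2$-banana contraction and the two-edge-path deletion, identifying $\cZ_{G/e,q}\cap\cZ_{G\smallsetminus e,q}$ as the two reduced points $(-q,0)$ and $(0,-q)$; both computations correctly yield $\Tbb^3+2\Tbb^2-2$.
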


\proof
As above, the seeds of the recursion are a single loop,
a $2$-banana, and a triangle.

The single loop has class $\Tbb$ from Corollary \ref{listZGqprop}.

The $2$-banana hypersurface has equation $q t_1 t_2 + q t_1 + q t_2 + q^2=0$.
Factoring out a $q$ (assumed to be nonzero to begin with) this is equivalent
to
\[
(t_1+1)(t_2+1)=1-q .
\]
We are assuming that $q\ne 1$, so up to a variable change this equation is
\[
u_1 u_2 = r \ne 0 .
\]
This forces $u_1\ne 0$, and determines $u_2$ once $u_1$ is fixed; thus the
class of this locus in the Grothendieck group is $\Lbb-1$; its complement in 
$\Abb^2$ has class
\[
\Lbb^2 - \Lbb + 1 = \Tbb^2 + \Tbb + 1 .
\]
This is independent of $q\ne 0,1$.

The triangle hypersurface has equation
\[
q(t_1 t_2 t_3 + t_1 t_2 + t_1 t_3 + t_2 t_3 + q (t_1 + t_2 +t_3) + q^2) = 0 .
\]
Changing variables: $u_i=t_i+1$, $r=q-1$, and keeping in mind $q\ne 0$, this
is equivalent to
\begin{equation}\label{u1u2u3}
u_1 u_2 u_3 + (u_1+u_2+u_3)r =r(1-r) .
\end{equation}
Solving for $u_3$ gives
\[
u_3=\frac{r(1-r-u_1-u_2)}{u_1 u_2+r} .
\]
Thus, the variety contains an open subvariety isomorphic to
\[
\Abb^2 \smallsetminus V(u_1 u_2+r)\quad;
\]
this locus has class $\Lbb^2-\Lbb +1$, as we just computed above, since
$r\ne 0$ (as $q\ne 1$).
If $u_1 u_2=-r$, then \eqref{u1u2u3} is equivalent to
\[
u_1+u_2=1-r ,
\]
implying easily that $(u_1,u_2)$ equals $(1,r)$ or $(r,1)$, while $u_3$ is free
in this case. That is, the complement in \eqref{u1u2u3} of the open subvariety 
determined above consists of the loci
\[
\{(1,-r,u_3)\}\cong \Abb^1\quad, \quad \{(-r,1,u_3)\}\cong \Abb^1 .
\]
These lines are distinct, since $r\ne -1$ (as $q\ne 0$). The conclusion is that
the class of \eqref{u1u2u3} equals
\[
\Lbb^2-\Lbb+1 + 2\Lbb = \Lbb^2 + \Lbb + 1\quad,
\]
and hence its complement in $\Abb^3$ has class
\[
\Lbb^3-\Lbb^2-\Lbb-1 = \Tbb^3+2\Tbb^2-2\quad.
\]
Again this is independent of $q$.

These cases are compatible with the fibration condition 
\eqref{ZGfibrationq}. For instance, in the triangle case, we get
\[
(\Tbb-1)(\Tbb^3+2\Tbb^2-2)+\Tbb^3 = \Tbb^4+2\Tbb^3-2\Tbb^2-2\Tbb+2
\]
in agreement with the class for a triangle in the free $q$ case, used in
Proposition \ref{polygonZGclass}. 

The recursion then gives the classes \eqref{polygonZGq}, 
\begin{align*}
\{\cZ_{\bc mG,q}\}
&=\frac{\Tbb^{m+2} + \Tbb(\Tbb-1)(\Tbb^m-(\Tbb-1)^m)
+(\Tbb-1)\frac{(\Tbb-1)^m-(-1)^m}\Tbb - \Tbb^{m+1}}
{\Tbb-1} \\
&=\Tbb^{m+1}+ \Tbb (\Tbb^m-(\Tbb-1)^m)+\frac{(\Tbb-1)^m-(-1)^m}\Tbb\quad.
\end{align*}

One can
verify explicitly the compatibility of  \eqref{polygonZGq} and \eqref{polygonclass}
with the fibration condition  \eqref{ZGfibrationq}: the polynomial 
\[
\{Z_{\bc mG}\} = \Tbb^{m+2} + \Tbb(\Tbb-1)(\Tbb^m-(\Tbb-1)^m)
+(\Tbb-1)\frac{(\Tbb-1)^m-(-1)^m}\Tbb
\]
factors exactly as predicted by  \eqref{ZGfibrationq} in terms of the classes
\eqref{polygonZGq}.
\endproof

\subsection{Polygons and the class of the tangent cone}

The explicit formula for the classes $\{ \cV_G \}$ of the complement
of the tangent cone of the Potts model hypersurface for polygons is
obtained as follows.

\begin{prop}\label{polygcone}
Let $\bc mG$ be the polygon with $m+1$ edges. Then
\[
\{\cV_{\bc mG}\}=(\Tbb-1)(-1)^m + 2\Tbb^{m+2}-(\Tbb+1)(\Tbb-1)^{m+1}.
\]
\end{prop}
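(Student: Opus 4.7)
The plan is to follow the template of Propositions \ref{polygonZGclass} and \ref{polygonZGclassq}: compute the three initial classes $\{\cV_{\bc 0G}\}$, $\{\cV_{\bc 1G}\}$, $\{\cV_{\bc 2G}\}$ (for the single loop, the $2$-banana, and the triangle) and then read off the formula by applying the generating function of Proposition \ref{newdc}.

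The first two seeds are immediate from the definitions. For $\bc 0G$, one has $Z = q(1+t)$ with leading term $P = q$, so $\cV_{\bc 0G} = \{q=0\}\subset \Abb^2$ and $\{\cV_{\bc 0G\}} = \Lbb(\Lbb-1) = \Tbb^2 + \Tbb$. For $\bc 1G$, the polynomial $P = q(q+t_1+t_2)$ defines two hyperplanes in $\Abb^3$ meeting along a line, giving $\{\cV_{\bc 1G}\} = \Lbb(\Lbb-1)^2 = \Tbb^3+\Tbb^2$. These match the $m=0$ and $m=1$ values of the stated formula.

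The triangle case requires a genuine computation, which I would carry out via the deletion--contraction recursions of Proposition \ref{deleconPottstc} applied to $\bc 2G$ along one of its edges $e$. The deletion is a $2$-path with $Q_{G\smallsetminus e} = (q+t_1)(q+t_2)$, the contraction is a $2$-banana with $Q_{G/e} = q+t_1+t_2$, and a direct inclusion--exclusion shows that $\cW_{G\smallsetminus e} \cap \cW_{G/e}$ is the union of two lines through the origin in $\Abb^3$ meeting at a point, of class $2\Lbb-1$. The $\cW$-recursion gives $\{\cW_{\bc 2G}\} = \Lbb(\Lbb^3 - 2\Lbb + 1) - (\Lbb^3 - \Lbb^2) = \Lbb^4 - \Lbb^3 - \Lbb^2 + \Lbb$, while the parallel computation at $q=0$ gives $\{\cY_{\bc 2G}\} = \Lbb^3 - \Lbb^2$. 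Lemma \ref{complVG} then yields $\{\cV_{\bc 2G}\} = \{\cW_{\bc 2G}\} - \{\cY_{\bc 2G}\} = \Lbb^4 - 2\Lbb^3 + \Lbb$, which rewrites as $\Tbb^4 + 2\Tbb^3 - \Tbb$ and matches the $m=2$ value of the claimed formula.

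With these three seeds, the constants $A$, $B$, $C$ of Proposition \ref{newdc} simplify (by direct substitution and division by $\Tbb$ and $\Tbb+1$) to $A = \Tbb - 1$, $B = 2\Tbb^2$, and $C = -(\Tbb+1)(\Tbb-1)$; extracting the coefficient of $s^m/m!$ from $A e^{-s} + B e^{\Tbb s} + C e^{(\Tbb-1) s}$ produces exactly the stated expression. The hardest step will be the triangle: a direct analysis of $\cW_{\bc 2G} = \{q^2 + q e_1 + e_2 = 0\}$ as a quadratic in $q$ runs into complications over the discriminant locus $e_1^2 = 4 e_2$ and can even become field-dependent, but the deletion--contraction route only needs the class of an intersection of two hyperplanes in $\Abb^3$ and bypasses these subtleties cleanly.
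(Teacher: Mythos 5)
Your proposal is correct, and the strategy is the one the paper uses: compute the three seeds $\{\cV_{\bc 0G}\}$, $\{\cV_{\bc 1G}\}$, $\{\cV_{\bc 2G}\}$ and feed them into the generating function of Proposition~\ref{newdc}. Your seeds ($\Tbb^2+\Tbb$, $\Tbb^3+\Tbb^2$, $\Tbb^4+2\Tbb^3-\Tbb$) and your algebra for the constants $A=\Tbb-1$, $B=2\Tbb^2$, $C=-(\Tbb+1)(\Tbb-1)$ all check out.

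The one genuine deviation is how you obtain the triangle seed. The paper observes that the non-hyperplane component $\cW_{\bc 2G}=\{q^2+qe_1+e_2=0\}$ is a cone over a nonsingular quadric in $\Pbb^3$ and reads the class off from that, whereas you run the deletion--contraction recursion of Proposition~\ref{deleconPottstc}, which only requires the class of $\cW_{G\smallsetminus e}\cap \cW_{G/e}$ (two lines through the origin, class $2\Lbb-1$) and of $\cY_{G\smallsetminus e}\cap \cY_{G/e}$ (a point). Your route is somewhat more robust: it stays entirely within the recursive machinery already developed, and the pieces involved are manifestly cellular, so the computation is transparently field-independent. The paper's shortcut is quicker once one recognizes the quadric structure, but it quietly uses that a smooth quadric surface has class $\Lbb^2+2\Lbb+1$, which is automatic over $\Cbb$ but, as you note, would require a little more care to justify over an arbitrary base field. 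Both routes arrive at the same class, so the distinction is only in exposition and level of self-containment.
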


\proof
For the seed of the recursion we have in this case
\begin{align*}
\bc 0G \quad &:\quad q=0\quad \text{(in $\Abb^{1+1}$)} \\
\bc 1G \quad &:\quad q^2 + (t_1+t_2) q = 0 \quad \text{(in $\Abb^{2+1}$)} \\
\bc 2G \quad &:\quad q^3 + (t_1+t_2+t_3)q^2 + (t_1 t_2 + t_1 t_3 +t_2 t_3)q
=0 \quad \text{(in $\Abb^{3+1}$)}
\end{align*}
for which we then get
\begin{align*}
\{\cV_{\bc 0G}\} &= \Lbb^2-\Lbb^1 = \Tbb(\Tbb+1) \\
\{\cV_{\bc 1G}\} &= \Lbb^3-2\Lbb^2+\Lbb^1 = \Tbb^2(\Tbb+1) \\
\{\cV_{\bc 2G}\} &= \Tbb^4+2\Tbb^3-\Tbb = \Tbb(\Tbb+1)(\Tbb^2+\Tbb-1).
\end{align*}
The first two expressions are immediate; for the third, note that $\cV_{\bc 2G}$
is a cone over a nonsingular quadric in $\Pbb^3$; the computation is then
straightforward. We then have
\begin{multline*}
G^{\cV}_e(s)=e^{-s}\left(
\left(1+\frac{e^{\Tbb s}-1}{\Tbb}\right) \Tbb(\Tbb+1)
+\frac{e^{\Tbb s}-1}{\Tbb} \Tbb^2(\Tbb+1) \right. \\
+ \left(e^{(\Tbb+1)s}-e^{\Tbb s}-\frac{e^{\Tbb s}-1}\Tbb\right)\cdot 2 \Tbb^2
\bigg)\quad.
\end{multline*}
that is
\[
G^{\cV}_e(s)=(\Tbb-1) e^{-s}
+2\Tbb^2 e^{\Tbb s}
-(\Tbb^2-1) e^{(\Tbb-1)s}\quad.
\]
Reading off the coefficient of $s^m/m!$ then gives the result.
\endproof

\subsection{Chains of linked polygons}\label{chainmkNsec}

A class of graphs that have been extensively studied from the point of view of
the statistical mechanics of Potts models is the case of chains of linked polygons.
These are graphs consisting of $N$ equal polygons $\bc mG$, each with $m+1$ edges,
attached to one another by chains of $k\geq 0$ edges. The case $k=0$ corresponds to
polygons joined at vertices. The corresponding Potts models were studied, from the
point of view of the properties of ground state entropy, in \cite{ShTs}: see also the
references therein for several other results on this class of Potts models.
\begin{center}
\includegraphics[scale=.5]{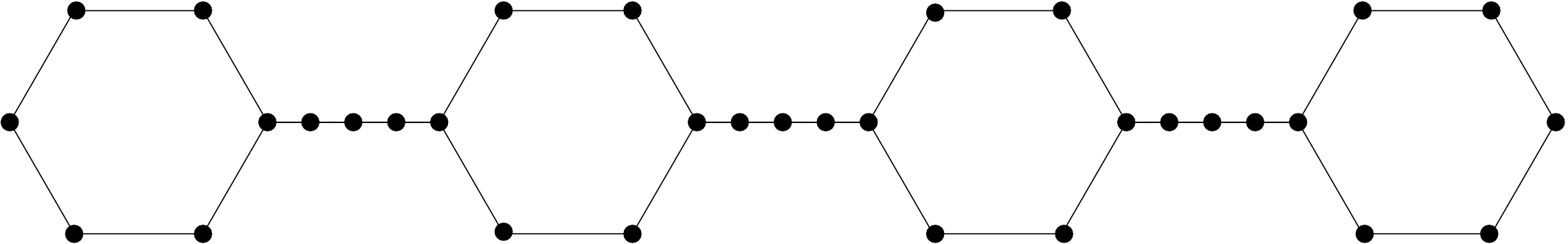}
\end{center}

\begin{defn}\label{GmkN}
Let $\bc {(m,k)}G^N$ denote the graph obtain by joining $N$
polygons, each with $m+1$ sides, with every two nearby polygons
connected by a chain of $k\geq 0$ edges.
\end{defn}

This family of graphs has three parameters $m$, $k$, and $N$, each of
which can be independently sent to $\infty$ to create an infinite graph.
Thus, we will use this family as our main example on which to analyze
the topological complexity of the corresponding set of virtual phase
transitions.

The classes $\{ \cZ_{G,q} \}$ can be computed explicitly for this family from
the result for polygons in Proposition \ref{polygonZGclassq} and the other 
basic properties of Corollary \ref{listZGqprop}.

\begin{prop}\label{classGmkNq}
For $\bc {(m,k)}G^N$ as in Definition \ref{GmkN}, the classes 
$\{\cZ_{G,q}\}$ with fixed $q\ne 0,1$
are given by
\begin{equation}\label{ZGmkNq}
\left( \Tbb^{m+1}+ \Tbb (\Tbb^m-(\Tbb-1)^m)+\frac{(\Tbb-1)^m-(-1)^m}\Tbb \right)^N \bT^{k(N-1)}.
\end{equation}
\end{prop}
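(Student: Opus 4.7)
The plan is to build the graph $\bc{(m,k)}G^N$ inductively from its $N$ polygon blocks and the $N-1$ connecting chains, using Proposition~\ref{polygonZGclassq} for the class of a single polygon and parts (3)--(5) of Corollary~\ref{listZGqprop} to track the effect of each assembly step on the Grothendieck class at fixed $q\ne 0,1$.

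First I would establish the following assembly lemma: joining two graphs $G_1$ and $G_2$ by a chain of $k\ge 0$ new edges (a path $v_1 - w_1 - \cdots - w_{k-1} - v_2$ with $v_i\in V(G_i)$) multiplies the product of their classes by $\Tbb^k$, i.e.\ the resulting graph $G$ satisfies
\[
\{\cZ_{G,q}\} = \Tbb^k\,\{\cZ_{G_1,q}\}\cdot\{\cZ_{G_2,q}\}.
\]
The case $k=0$ is Corollary~\ref{listZGqprop}~(3). For $k\ge 1$, attach a tail of $k-1$ edges at $v_1$ of $G_1$; by $(k-1)$ applications of part~(5) this multiplies $\{\cZ_{G_1,q}\}$ by $\Tbb^{k-1}$. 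Now the tail-extended $G_1$ and $G_2$ are joined by a single edge from the free endpoint of the tail to $v_2$, so part~(4) (combined with part~(3) for the underlying disjoint/vertex-identified graph) contributes another factor of $\Tbb$, giving $\Tbb^k$ in total.

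Next I would iterate this assembly over the $N$ polygons of $\bc{(m,k)}G^N$. Writing $P:=\{\cZ_{\bc mG,q}\}$ for the class of a single $(m+1)$-gon, and taking $G_1$ to be a single polygon and proceeding to add one polygon at a time via a chain of $k$ edges, the assembly lemma produces a factor of $\Tbb^k\cdot P$ at each of the $N-1$ joining steps. Induction on $N$ therefore yields
\[
\{\cZ_{\bc{(m,k)}G^N,q}\} = \Tbb^{k(N-1)}\,P^N.
\]
Substituting the explicit value
\[
P = \Tbb^{m+1}+\Tbb(\Tbb^m-(\Tbb-1)^m)+\frac{(\Tbb-1)^m-(-1)^m}{\Tbb}
\]
from Proposition~\ref{polygonZGclassq} gives formula~\eqref{ZGmkNq}.

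There is no real obstacle here: both ingredients are already in place, and the argument is just bookkeeping. The only point that requires a moment of care is the verification of the assembly lemma for $k\ge 1$, since Corollary~\ref{listZGqprop}~(4) is stated for single connecting edges; one must check that an intermediate dangling tail can be built by repeated application of (5) before invoking (4) for the final closing edge, which is immediate because (5) applies at any vertex of the current graph.
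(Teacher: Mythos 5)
Your proof is correct and follows essentially the same route as the paper's: decompose $\bc{(m,k)}G^N$ into its $N$ polygon blocks, use multiplicativity (Corollary~\ref{listZGqprop}~(3)) together with the $\Tbb$-factor for each bridge/pendant edge (parts~(4)--(5)) to get $\{\cZ_{\bc{(m,k)}G^N,q}\}=\{\cZ_{\bc mG,q}\}^N\Tbb^{k(N-1)}$, then plug in Proposition~\ref{polygonZGclassq}. The paper's proof simply asserts this product formula citing (3) and (4) of the corollary; your write-up fills in the bookkeeping (repeated application of (5) for a length-$k$ chain before invoking (4)) but is otherwise the identical argument.
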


\proof By (3) and (4) of Corollary \ref{listZGqprop}, we have
$$ \{ \cZ_{\bc {(m,k)}G^N,q} \} = \{ \cZ_{\bc mG,q} \}^N \,\, \bT^{k(N-1)}. $$
The result then follows from \eqref{polygonZGq}.
\endproof

In this case, since the graphs involved satisfy the fibration condition \eqref{ZGfibrationq},
one can also obtain an explicit formula for the classes $\{ \cZ_{\bc {(m,k)}G^N} \}$
with variable $q$, from \eqref{disjunfibr} of Corollary \ref{disjunfibrcor}.

\section{Multiple edge formula} 

The operation of doubling edges is dual to splitting edges. However, while in the
case of the graph hypersurfaces of Feynman graphs analyzed in \cite{AluMa3}
the corresponding operation on the hypersurface complement classes in the
Grothendieck ring is very simple, this is not the case when one considers the
Potts model hypersurface.

In fact, the combinatorial deletion--contraction formula for the multivariate
Tutte polynomial shows that the polynomial for the graph obtained from $G$
by doubling an edge $e$ is of the form
\begin{equation}\label{2edgeZGqt}
Z_{G\smallsetminus e} + (t_e+t_f+t_e t_f) Z_{G/e} ,
\end{equation}
and it is the presence of the extra term $t_e t_f$ here that complicates the matter.

\subsection{Edge doubling}
We derive here a formula for the class $\{ \cZ_G \}$ under the operation of
doubling an edge, and then we obtain a recursive formula for the iteration 
of this operation.

\begin{thm}\label{doubPotts}
Let $G'$ be the graph obtained by doubling the edge $e$ in a graph $G$.
Then
\begin{equation}\label{2edgeZG}
\{\cZ_{G'}\} = \Tbb\cdot \{\cZ_G\} + (\Tbb +1 )\cdot \{B^e_G\} ,
\end{equation}
where $B^e_G$ is the locus of zeros of $Z_{G\smallsetminus e}-Z_{G/e}$.
\end{thm}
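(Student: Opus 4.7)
My plan is to compute $\{\cZ_{G'}\}$ by fibering the ambient affine space $\A^{\#E(G)+2}$ over the base $\A^{\#E(G)}$ with coordinates $(q,\hat t^{(e)})$, the fiber being the $(t_e,t_f)$-plane. Writing $A:=Z_{G\smallsetminus e}(q,\hat t^{(e)})$ and $B:=Z_{G/e}(q,\hat t^{(e)})$, the defining equation of $\cZ_{G'}$ given by \eqref{2edgeZGqt} is $A+(t_e+t_f+t_et_f)\,B=0$. The affine change of variables $u:=1+t_e$, $v:=1+t_f$ preserves Grothendieck classes and rewrites the equation as $uvB=B-A$, which is well-suited to a fiberwise case analysis.

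The core step is to stratify the base $\A^{\#E(G)}$ into four locally closed pieces according to whether $B$ vanishes and whether $B-A$ vanishes, and to read off the class of the fiber complement on each piece. On the open stratum $\{B\ne0,\ B\ne A\}$ the fiber equation is $uv=c$ with $c\ne0$, cutting out a copy of $\Gbb_m\cong\{(u,c/u):u\in\Gbb_m\}$, so the fiber complement has class $\bL^2-\Tbb=\Tbb^2+\Tbb+1$. On $\{B\ne0,\ B=A\}$ the equation degenerates to $uv=0$, and the fiber complement has class $(\bL-1)^2=\Tbb^2$. On $\{B=0,\ A\ne0\}$ the fiber equation has no solution, contributing $(\Tbb+1)^2$, while on $\{B=0,\ A=0\}$ it is trivially satisfied and contributes $0$. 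Each of these families is Zariski-locally trivial (in the first two cases one parametrizes explicitly by $u$), so motivic multiplicativity gives the total contribution on each stratum as a product. The classes of the four strata are then expressed by inclusion--exclusion in terms of $[\cZ_{G/e}]$, $[\cZ_{G\smallsetminus e}\cap\cZ_{G/e}]$, and $[B^e_G]$, using the identity $B^e_G\cap\cZ_{G/e}=\cZ_{G\smallsetminus e}\cap\cZ_{G/e}$ (forced by $A=B=0$ on the intersection).

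Summing the four weighted contributions and collecting terms, the pure powers of $\bL^{\#E(G)}$ cancel and one arrives at
$$\{\cZ_{G'}\}=-\Tbb\,\{\cZ_{G/e}\}+\Tbb(\Tbb+1)\,\{\cZ_{G\smallsetminus e}\cap\cZ_{G/e}\}+(\Tbb+1)\,\{B^e_G\}.$$
To conclude, I invoke Theorem~\ref{delconZGthm} applied to $G$ at the edge $e$, which rearranges to $(\Tbb+1)\{\cZ_{G\smallsetminus e}\cap\cZ_{G/e}\}=\{\cZ_G\}+\{\cZ_{G/e}\}$; multiplying by $\Tbb$ and substituting then collapses the middle term into $\Tbb\{\cZ_G\}+\Tbb\{\cZ_{G/e}\}$, which cancels the first summand and yields the desired identity $\{\cZ_{G'}\}=\Tbb\{\cZ_G\}+(\Tbb+1)\{B^e_G\}$. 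The main obstacle is simply keeping the inclusion--exclusion bookkeeping straight across the four strata; the fortunate feature is that the delicate intersection class $\{\cZ_{G\smallsetminus e}\cap\cZ_{G/e}\}$ appears in precisely the combination $\Tbb(\Tbb+1)\{\cZ_{G\smallsetminus e}\cap\cZ_{G/e}\}$ that is cleared by one application of deletion--contraction, so no further geometric input beyond Theorem~\ref{delconZGthm} is needed.
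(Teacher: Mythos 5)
Your proof is correct and follows essentially the same route as the paper's: the same substitution $u=1+t_e$, $v=1+t_f$, the same fiberwise case analysis according to the vanishing of $Z_{G/e}$ and of $Z_{G\smallsetminus e}-Z_{G/e}$ (the paper presents this as three cases rather than four, but the fourth stratum contributes $0$ in both accounts), and the same final reduction via Theorem~\ref{delconZGthm}. Your intermediate identity $\{\cZ_{G'}\}=-\Tbb\{\cZ_{G/e}\}+\Tbb(\Tbb+1)\{\cZ_{G\smallsetminus e}\cap\cZ_{G/e}\}+(\Tbb+1)\{B^e_G\}$ checks out and cleanly isolates the single application of deletion--contraction needed to finish.
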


\proof
It is convenient to change variables, letting $u_e=1+t_e$, $u_f=1+t_f$.
Then the class \eqref{2edgeZGqt} for the double edged graph is given by
\begin{equation}\label{2edgeZGqt2}
Z_{G\smallsetminus e} + (u_e u_f-1) Z_{G/e}\ne 0 .
\end{equation}

If $Z_{G/e}=0$, then necessarily $Z_{G\smallsetminus e}\ne 0$; $u_e$
and $u_f$ are free, so this accounts for a class
\[
(\Tbb+1)^2\cdot [\cZ_{G/e}\smallsetminus (\cZ_{G\smallsetminus e}\cap \cZ_{G/e})] .
\]

If $Z_{G/e}\ne 0$, then the condition amounts to
\[
u_e u_f \ne 1 -\frac{Z_{G\smallsetminus e}}{Z_{G/e}} .
\]
This in turn leads to to two possibilities:
\begin{itemize}
\item Either $\frac{Z_{G\smallsetminus e}}{Z_{G/e}} = 1$, and then 
$u_e u_f\ne 0$; this accounts for $\Lbb^2-2\Lbb +1=\Tbb^2$;
\item Or $\frac{Z_{G\smallsetminus e}}{Z_{G/e}} \ne 1$, and then 
$u_e u_f\ne c$ for some $c\ne 0$. For $c\ne 0$, $u_e u_f=c$ necessarily
gives $u_f\ne 0$, $u_e=c/u_f$; this accounts for $\Lbb-1$. Thus the class
of $u_e u_f\ne c$ is $\Lbb^2-\Lbb+1=\Tbb^2+\Tbb+1$.
\end{itemize}

In total, the class of the complement for the double-edged graph is
\begin{multline*}
(\Tbb+1)^2\cdot [\cZ_{G/e}\smallsetminus (\cZ_{G\smallsetminus e}\cap \cZ_{G/e})]
+\Tbb^2 [(\Abb^{|E|}\smallsetminus \cZ_{G/e}) \cap 
V(Z_{G\smallsetminus e}-Z_{G/e})] \\
+(\Tbb^2+\Tbb+1) [(\Abb^{|E|}\smallsetminus \cZ_{G/e}) \smallsetminus 
V(Z_{G\smallsetminus e}-Z_{G/e})] .
\end{multline*}

This expression can be further simplified in the following way:
\begin{multline*}
(\Tbb+1)^2\cdot [\cZ_{G/e}\smallsetminus (\cZ_{G\smallsetminus e}\cap \cZ_{G/e})]
+(\Tbb^2+\Tbb+1) [\Abb^{\#E(G)}\smallsetminus \cZ_{G/e}] \\
-(\Tbb+1) [(\Abb^{\#E(G)}\smallsetminus \cZ_{G/e}) \cap 
V(Z_{G\smallsetminus e}-Z_{G/e})] 
\end{multline*}
\begin{multline*}
=(\Tbb^2+2\Tbb+1)\cdot [\cZ_{G/e}]-(\Tbb+1)^2 [\cZ_{G\smallsetminus e}\cap \cZ_{G/e}]
+(\Tbb^2+\Tbb+1) [\Abb^{\#E(G)}\smallsetminus \cZ_{G/e}] \\
-(\Tbb+1) [(\Abb^{\#E(G)}\smallsetminus \cZ_{G/e}) \cap 
V(Z_{G\smallsetminus e}-Z_{G/e})] 
\end{multline*}
\begin{multline*}
=\Tbb \cdot [\cZ_{G/e}]-(\Tbb+1)^2 [\cZ_{G\smallsetminus e}\cap \cZ_{G/e}]
+(\Tbb^2+\Tbb+1) \Lbb^{\#E(G)} \\
-(\Tbb+1) [(\Abb^{\#E(G)}\smallsetminus \cZ_{G/e}) \cap 
V(Z_{G\smallsetminus e}-Z_{G/e})] 
\end{multline*}
\begin{multline*}
=(\Tbb+1)^2 ([\Lbb^{\#E(G)}-[\cZ_{G\smallsetminus e}\cap \cZ_{G/e}])
-\Tbb ([\Lbb^{\#E(G)}-[\cZ_{G/e}]) \\
-(\Tbb+1) [(\Abb^{\#E(G)}\smallsetminus \cZ_{G/e}) \cap 
V(Z_{G\smallsetminus e}-Z_{G/e})] .
\end{multline*}
We then use Theorem~\ref{delconZGthm} to express $\{\cZ_{G\smallsetminus e}
\cap \cZ_{G/e}\}$, obtaining
\[
(\Tbb+1) (\{\cZ_G\}+\{\cZ_{G/e}\})
-\Tbb (\{\cZ_{G/e}\})
-(\Tbb+1) [(\Abb^{\#E(G)}\smallsetminus \cZ_{G/e}) \cap 
V(Z_{G\smallsetminus e}-Z_{G/e})] 
\]
and hence simply 
\[
(\Tbb+1) \{\cZ_G\}+\{\cZ_{G/e}\} - (\Tbb+1)  [(\Abb^{\# E(G)}\smallsetminus \cZ_{G/e}) 
\cap V(Z_{G\smallsetminus e}-Z_{G/e})] .
\]
Now
\[
(\Abb^{\#E(G)}\smallsetminus \cZ_{G/e}) \cap V(Z_{G\smallsetminus e}-Z_{G/e})
=V(Z_{G\smallsetminus e}-Z_{G/e}) \smallsetminus
(\cZ_{G\smallsetminus e}\cap \cZ_{G/e}),
\]
and again using Theorem~\ref{delconZGthm}  we get that the class of the
complement for the double-edged graph is
\[
\Tbb\cdot \{\cZ_G\}- (\Tbb+1)\cdot \{ V(Z_{G\smallsetminus e}-Z_{G/e}) \} .
\]
\endproof

As in the case of the term $A^e_G$ in the formula for edge splitting, the term $B^e_G$
here also has an interpretation in terms of the combinatorics of the graph.

\begin{lem}\label{WeGlocus}
The locus $B^e_G$ of zeros of $Z_{G\smallsetminus e}-Z_{G/e}$ is
equivalently the locus of zeros of a polynomial $(q-1) \overline Z''$,
where $Z''$ is the sum of the monomials $\prod_{e\in A} t_e$ corresponding to 
subgraphs $A$ of $G/e$ that acquire an additional connected component
when they are viewed in $G\smallsetminus e$,
and $\overline Z''=Z''/q$.
\end{lem}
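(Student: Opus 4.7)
The plan is to read off the claim directly from the decomposition of $Z_{G\smallsetminus e}$ and $Z_{G/e}$ given in \eqref{bijectioneq}, together with the combinatorial characterization of the summands of $Z''$.

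First, I would recall from the discussion preceding Lemma~\ref{YeGlocus} the polynomials $Z'$ and $Z''$: these are the partial sums of $Z_G$ over subsets $A\subseteq E(G)$ not containing $e$, with $Z'$ collecting those $A$ for which the endpoints of $e$ lie in the same connected component (when $A$ is viewed in $G\smallsetminus e$) and $Z''$ collecting those for which they do not. By \eqref{bijectioneq},
$$
Z_{G\smallsetminus e}=Z'+Z'',\qquad Z_{G/e}=Z'+\frac{Z''}{q},
$$
so subtracting yields
$$
Z_{G\smallsetminus e}-Z_{G/e}=\Bigl(1-\frac{1}{q}\Bigr)Z''=(q-1)\,\frac{Z''}{q}=(q-1)\,\overline{Z''},
$$
which is the claimed factorization; the zero locus $B^e_G$ is then the union of $\{q-1=0\}$ and $\{\overline{Z''}=0\}$.

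Next, I would verify the combinatorial description of $\overline{Z''}$. A subset $A\subseteq E(G)\smallsetminus\{e\}$ can be viewed either as a subgraph of $G\smallsetminus e$, on the vertex set $V(G)$, or as a subgraph of $G/e$, on the vertex set obtained from $V(G)$ by identifying the endpoints $\partial(e)$; the edge set is the same. The number of connected components either stays the same or drops by one under the contraction, and it drops by one precisely when the two endpoints of $e$ belong to distinct components of $A$ inside $G\smallsetminus e$. Thus the $A$'s summed in $Z''$ are exactly those subgraphs of $G/e$ that acquire an additional connected component when viewed back in $G\smallsetminus e$, and dividing by $q$ shifts each prefactor $q^{k_{G\smallsetminus e}(A)}$ down to $q^{k_{G\smallsetminus e}(A)-1}=q^{k_{G/e}(A)}$, so $\overline{Z''}$ collects these monomials with the $q$-exponent given by the component count in $G/e$.

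I do not anticipate a serious obstacle here: the identity \eqref{bijectioneq} already packages the relevant combinatorics of edge contraction, so the lemma reduces to a short algebraic manipulation plus bookkeeping of connected-component counts. The only point that requires care is to remain consistent about whether the $q$-powers are included in $Z''$ (they are, as in \eqref{Zprimedefeq}) and to interpret $Z''/q$ as the reindexing of the summands by $k_{G/e}(A)$ rather than $k_{G\smallsetminus e}(A)$, which is precisely the shift that makes the two descriptions of $\overline{Z''}$ match.
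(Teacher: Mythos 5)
Your proof is correct and follows essentially the same route as the paper: subtract the two identities of \eqref{bijectioneq} to get $Z_{G\smallsetminus e}-Z_{G/e}=(1-1/q)Z''=(q-1)\overline Z''$, then read off the combinatorial meaning of $\overline Z''$. Your extra remarks about the $q$-exponent shift from $k_{G\smallsetminus e}(A)$ to $k_{G/e}(A)$, and about when the component count drops under contraction, are sound and merely make explicit bookkeeping that the paper leaves implicit.
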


\proof
Recall from \eqref{bijectioneq} that
\[
Z_{G\smallsetminus e}=Z' + Z''\quad, \quad Z_{G/e}=Z' + \frac {Z''}q\quad.
\]
and hence
\[
Z_{G\smallsetminus e}-Z_{G/e} = (q-1) \overline Z''\quad,
\]
where $\overline Z''=Z''/q$ is indeed the sum of the standard monomials over the
subgraphs of $G/e$ which acquire an additional connected component
when they are viewed in $G\smallsetminus e$.
\endproof

Thus, the description of $B^e_G$ is somewhat complementary
to that of $A^e_G$: both are $(q-1)$ times a sum of terms having to do
with subgraphs of $G/e$. For $A$, one looks at graphs for which the
number of connected components is the same when the subgraph is
viewed in $G\smallsetminus e$; for $B$, one looks at the graphs for
which the number of connected components increases.

\begin{ex}{\rm
When $e$ is a looping edge, then $B^e_G=0$: indeed, $G\smallsetminus e
=G/e$ in this case. Thus $\{B^e_G\}=0$, and the formulas simplifies to
$\{Z_{G'}\}= \Tbb\cdot \{Z_G\}$, which recovers the case (5) of Corollary \ref{listZGprop},
namely attaching a new looping edge to $G$. }
\end{ex}

\subsection{Multiple edge formulas}
We now consider the case where the operation of doubling an edge
is iterated, that is, where an edge $e$ in a graph is replaced by $m$
parallel edges, between the same endpoints. This can be seen also
as replacing an edge $e$ with the $m$-th banana graph (see \cite{AluMa}).

\begin{thm}\label{reremuled}
Let $G^{(m)}$ denote the graph obtained by adding $m$ edges parallel
to $e$ in $G$. (So $G=G^{(0)}$, $G'=G^{(1)}$.) Then, for $m\ge 0$, the
classes $\{ \cZ_{G^{(m)}} \}$ satisfy
\begin{equation}\label{medgesZG}
\{\cZ_{G^{(m+2)}}\} = (2\Tbb+1) \{\cZ_{G^{(m+1)}}\} - \Tbb(\Tbb+1) \{\cZ_{G^{(m)}}\} .
\end{equation}
\end{thm}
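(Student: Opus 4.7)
The plan is to apply Theorem~\ref{doubPotts} twice --- once to pass from $G^{(m)}$ to $G^{(m+1)}$ and once to pass from $G^{(m+1)}$ to $G^{(m+2)}$ --- and then to eliminate the resulting ``$B$-terms'' by observing that they scale by a factor of $\Lbb=\Tbb+1$ each time an additional parallel edge is added. Writing $e_0,\dots,e_{m+1}$ for the parallel copies of $e$ sitting inside $G^{(m+1)}$, so that $G^{(i+1)}$ is obtained from $G^{(i)}$ by doubling $e_i$, two applications of Theorem~\ref{doubPotts} give
\begin{equation*}
\{\cZ_{G^{(m+1)}}\} = \Tbb\,\{\cZ_{G^{(m)}}\} + (\Tbb+1)\,\{B^{e_m}_{G^{(m)}}\},
\end{equation*}
\begin{equation*}
\{\cZ_{G^{(m+2)}}\} = \Tbb\,\{\cZ_{G^{(m+1)}}\} + (\Tbb+1)\,\{B^{e_{m+1}}_{G^{(m+1)}}\}.
\end{equation*}
With the scaling identity $\{B^{e_{m+1}}_{G^{(m+1)}}\} = (\Tbb+1)\,\{B^{e_m}_{G^{(m)}}\}$ in hand, solving the first identity for $(\Tbb+1)\{B^{e_m}_{G^{(m)}}\}$ and substituting into the second will immediately produce \eqref{medgesZG}.

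The substantive step is this scaling identity, and here I would invoke Lemma~\ref{WeGlocus}: up to the factor $(q-1)$, the equation defining $B^{e_m}_{G^{(m)}}$ is the polynomial $\overline Z''=Z''/q$, where $Z''$ is the sum of the standard monomials over subgraphs $A\subseteq E(G^{(m)})\smallsetminus\{e_m\}$ that do \emph{not} connect the two endpoints of $e_m$. Since each of the $m$ remaining parallel edges $e_0,\dots,e_{m-1}$ shares both endpoints with $e_m$, any subgraph that uses one of them already connects those endpoints and is therefore absent from $Z''$. Consequently the variables $t_{e_0},\dots,t_{e_{m-1}}$ do not appear in $\overline Z''$, and the polynomial cutting out $B^{e_m}_{G^{(m)}}$ coincides, as a polynomial, with the one cutting out $B^e_G$. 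Viewed in $\Abb^{\#E(G)+m}$ this exhibits $B^{e_m}_{G^{(m)}}$ as a cylinder $B^e_G\times \Abb^m$, whence
\begin{equation*}
\{B^{e_m}_{G^{(m)}}\} = \Lbb^m\,\{B^e_G\} = (\Tbb+1)^m\,\{B^e_G\},
\end{equation*}
and comparing consecutive values of $m$ yields the required scaling identity.

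Putting everything together, the two applications of Theorem~\ref{doubPotts} combined with the identity $(\Tbb+1)\{B^{e_{m+1}}_{G^{(m+1)}}\} = (\Tbb+1)^2\{B^{e_m}_{G^{(m)}}\} = (\Tbb+1)\bigl(\{\cZ_{G^{(m+1)}}\}-\Tbb\{\cZ_{G^{(m)}}\}\bigr)$ give
\begin{equation*}
\{\cZ_{G^{(m+2)}}\} = \Tbb\,\{\cZ_{G^{(m+1)}}\} + (\Tbb+1)\bigl(\{\cZ_{G^{(m+1)}}\}-\Tbb\,\{\cZ_{G^{(m)}}\}\bigr) = (2\Tbb+1)\,\{\cZ_{G^{(m+1)}}\} - \Tbb(\Tbb+1)\,\{\cZ_{G^{(m)}}\},
\end{equation*}
which is \eqref{medgesZG}. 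The only point with real content is the combinatorial observation that $Z''$ omits the parallel edges; the rest is short algebraic manipulation. As a consistency check, the characteristic polynomial $x^2-(2\Tbb+1)x+\Tbb(\Tbb+1)=(x-\Tbb)(x-(\Tbb+1))$ of the recursion has roots $\Tbb$ and $\Lbb$, which match precisely the two cases $u_eu_f=0$ versus $u_eu_f\ne 0$ that appeared in the proof of Theorem~\ref{doubPotts}.
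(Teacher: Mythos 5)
Your proof is correct and follows essentially the same route as the paper's: both apply Theorem~\ref{doubPotts} twice and then use Lemma~\ref{WeGlocus} to observe that, since the parallel edges connect the endpoints of $e$, they cannot appear in $Z''$, so the $B$-locus for the graph with an extra parallel edge is a cylinder over the original $B$-locus, with class multiplied by $\Lbb=\Tbb+1$; eliminating the $B$-terms then yields the recursion. The only cosmetic difference is that you establish the scaling $\{B^{e_m}_{G^{(m)}}\}=(\Tbb+1)^m\{B^e_G\}$ uniformly in $m$, while the paper does the single-step case ($m=0\to 1$) and remarks that the same computation applies with $G^{(m)}$ in place of $G$.
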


\proof
The key case to consider
is that in which we {\em triple\/} a given edge of $G$: let $G'$ denote
(as in Proposition~\ref{doubPotts}) the graph obtained from $G$ by doubling
$e$, and let $G''$ be the graph obtained from $G'$ by doubling $e$ again. Applying
Theorem~\ref{doubPotts} yields
\[
\{\cZ_{G''}\} = \Tbb\cdot \{\cZ_{G'}\} + (\Tbb +1 )\cdot \{B^e_{G'}\} .
\] 
Thus, we have to understand $\{B^e_{G'}\}$. According to Lemma \ref{WeGlocus},
this hypersurface has equation $(q-1) \overline Z''$, where $\overline Z''$
collects monomial according to the subgraphs of $G'/e$ which acquire
a component when viewed in $G'\smallsetminus e$.
The new edge in $G'$
parallel to $e$ cannot be part of any such subgraphs, since it does join
the endpoints of $e$, so it prevents a new component from forming as
we remove $e$. Therefore, the $\overline Z''$ for~$G'$ actually equals on the nose
the $\overline Z''$ for $G$; the only difference between $B^e_G$ and $B^e_{G'}$
is that the latter is contained in a space of dimension one higher, and it
may be described as a cylinder on $B^e_G$. 

Thus, we have
\[
\{\cZ_{G''}\} = \Tbb\cdot \{\cZ_{G'}\} + (\Tbb +1 )\cdot \{B^e_{G'}\}
\]
and $\{W^e_{G'}\}=(\Tbb+1) \{W^e_G\}$.  By Theorem~\ref{doubPotts},
\[
(\Tbb+1) \{W^e_G\} = \{\cZ_{G'}\}-\Tbb \{\cZ_G\} .
\]
Thus, we obtain
\begin{align*}
\{\cZ_{G''}\} &= \Tbb \{\cZ_{G'}\} + (\Tbb +1 ) (\{\cZ_{G'}\}-\Tbb \{\cZ_G\}) \\
&= (2\Tbb+1) \{\cZ_{G'}\} - \Tbb(\Tbb+1) \{\cZ_{G}\} .
\end{align*}
The stated formula follows by applying this formula to $G^{(m)}$ instead of
$G$.
\endproof

We can form a generating function for the classes 
of graphs with multiple edges.

\begin{thm}\label{medgeZGgenfunc}
The generating function of the classes $\{Z_{G^{(m)}}\}$ is given by
\begin{equation}\label{medgesZGgf}
\sum_{m\ge 0} \{\cZ_{G^{(m)}}\}\frac{s^m}{m!} 
= \left((\Tbb+1) \{\cZ_G\}-\{\cZ_{G'}\}\right) e^{\Tbb s} 
+ \left(\{\cZ_{G'}\}-\Tbb \{\cZ_G\}\right) e^{(\Tbb+1) s} .
\end{equation}
\end{thm}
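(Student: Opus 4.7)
The plan is to convert the recursion of Theorem~\ref{reremuled} into an ordinary differential equation satisfied by the generating function, solve the resulting constant-coefficient ODE explicitly, and then fix the two undetermined constants from the initial data $\{\cZ_{G^{(0)}}\}=\{\cZ_G\}$ and $\{\cZ_{G^{(1)}}\}=\{\cZ_{G'}\}$.

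More concretely, set $F(s):=\sum_{m\ge 0}\{\cZ_{G^{(m)}}\}\,s^m/m!$. Differentiating term-by-term, $F'(s)=\sum_{m\ge 0}\{\cZ_{G^{(m+1)}}\}\,s^m/m!$ and $F''(s)=\sum_{m\ge 0}\{\cZ_{G^{(m+2)}}\}\,s^m/m!$. The recursion \eqref{medgesZG} then translates directly (coefficient-by-coefficient in $s^m/m!$) into the linear ODE
\[
F''(s)=(2\Tbb+1)\,F'(s)-\Tbb(\Tbb+1)\,F(s).
\]
The characteristic polynomial $x^2-(2\Tbb+1)x+\Tbb(\Tbb+1)$ factors as $(x-\Tbb)(x-(\Tbb+1))$, so the general solution (in the ring of power series in $s$ with coefficients in the Grothendieck ring, localized so that these exponentials make sense) is
\[
F(s)=A\,e^{\Tbb s}+B\,e^{(\Tbb+1)s}
\]
for constants $A,B\in K_0(\cV)[\Tbb^{-1},(\Tbb+1)^{-1}]$ (in fact here we only need $K_0(\cV)$ itself, as the solution will show).

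Finally, to pin down $A$ and $B$ I impose the two initial conditions $F(0)=\{\cZ_G\}$ and $F'(0)=\{\cZ_{G'}\}$, i.e.\
\[
A+B=\{\cZ_G\},\qquad \Tbb\,A+(\Tbb+1)\,B=\{\cZ_{G'\}}.
\]
Subtracting $\Tbb$ times the first from the second yields $B=\{\cZ_{G'}\}-\Tbb\{\cZ_G\}$, and then $A=(\Tbb+1)\{\cZ_G\}-\{\cZ_{G'}\}$. Substituting these back into $F(s)=A\,e^{\Tbb s}+B\,e^{(\Tbb+1)s}$ gives the formula \eqref{medgesZGgf}.

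There is no genuine obstacle: Theorem~\ref{reremuled} has already done the algebro-geometric work, and what remains is the standard translation of a two-term linear recurrence into a second-order linear ODE. The only point to watch is to treat $\Tbb$ and $\Tbb+1$ as formal symbols while still obtaining a well-defined answer, which is automatic here since the final coefficients $A$ and $B$ are $\Z$-linear combinations of $\{\cZ_G\}$ and $\{\cZ_{G'}\}$ with coefficients that are polynomials in $\Tbb$.
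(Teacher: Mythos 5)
Your proposal is correct and follows essentially the same route as the paper: translate the recursion of Theorem~\ref{reremuled} into the second-order linear ODE for the exponential generating function, factor the characteristic polynomial as $(x-\Tbb)(x-(\Tbb+1))$, and determine the constants $A$ and $B$ from the seeds $\{\cZ_{G^{(0)}}\}$ and $\{\cZ_{G^{(1)}}\}$.
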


\proof
The recurrence relation \eqref{medgesZG} of
Theorem~\ref{reremuled} translates into the
differential equation
\begin{equation}\label{medgeseqdiff}
g''(s)=(2\Tbb+1) g'(s)-\Tbb(\Tbb+1) g(s)
\end{equation}
for the generating function 
\begin{equation}\label{medgesZGgf2}
g(s)=\sum \{\cZ_{G^{(m)}}\} \frac{s^m}{m!}, 
\end{equation}
solving which reveals that
\[
\sum_{m\ge 0} \{\cZ_{G^{(m)}}\} \frac{s^m}{m!} 
= A e^{\Tbb s} + B e^{(\Tbb+1) s} .
\]
Solving for the constants $A$, $B$ in terms of $\{\cZ_G\}$ and $\{\cZ_{G'}\}$ gives
\eqref{medgesZGgf}.
\endproof

\begin{ex}\label{exbanana}{\rm 
The $m$-th banana graph is a graph with two vertices and $m$ parallel edges
between them.
To compute the class for bananas, we can start with $G$ a single non-looping
edge, for which $\{\cZ_G\}=\Tbb^2$, and $G'$ a $2$-banana, for which
$\{\cZ_{G'}\}=\Tbb^3 + \Tbb^2-1$. Then from \eqref{medgesZGgf} we have
\[
\sum_{m\ge 0} \{\cZ_{G^{(m)}}\} \frac{s^m}{m!} 
= e^{\Tbb s} + (\Tbb^2-1) e^{(\Tbb+1)s}\quad,
\]
from which we obtain
\begin{equation}\label{ZGmbanana}
\{\cZ_{G^{(m)}}\} = \Tbb^m + (\Tbb-1)(\Tbb+1)^{m+1} 
\end{equation}
for the class of the $m$-th banana graph.
A generating function for the analogous class for graph hypersurfaces $\cX_G$ 
is given in \cite{AluMa3}.}
\end{ex}

\subsection{Multiple edges for fixed $q$}

Again the argument for variable $q$ carries over almost identically to cover
the case with fixed $q\neq 0,1$. We obtain the following results.

\begin{prop}\label{doubPottsq}
Let $G'$ be the graph obtained by doubling the edge $e$ in a graph $G$.
Then
\[
\{\cZ_{G',q}\} = \Tbb\cdot \{\cZ_{G,q}\} + (\Tbb +1 )\cdot \{W^e_{G,q}\} ,
\]
where the locus $W^e_{G,q}\subset \Abb^{\#E(G)-1}$, 
for $q\ne 0,1$, is given by the vanishing of the polynomial $Z''$
adding monomials over the subgraphs of $G/e$ which 
acquire an additional connected component when they are viewed in 
$G\smallsetminus e$.
\end{prop}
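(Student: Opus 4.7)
The plan is to mirror the proof of Theorem~\ref{doubPotts} almost verbatim, replacing invocations of Theorem~\ref{delconZGthm} with Proposition~\ref{deleconPottsq}, working in $\Abb^{\#E(G)+1}$ rather than $\Abb^{\#E(G)+2}$ (since $q$ is fixed), and tracking what becomes of the term $B^e_G$ when $q \ne 0,1$ is held fixed. Specifically, I would begin with the combinatorial deletion--contraction identity for the doubled edge
\[
Z_{G'}(q,t) = Z_{G\smallsetminus e}(q,\hat t^{(e)}) + (t_e + t_f + t_e t_f)\, Z_{G/e}(q,\hat t^{(e)}),
\]
make the change of variables $u_e = 1+t_e$, $u_f = 1+t_f$, and recast the condition $Z_{G'} \ne 0$ as $Z_{G\smallsetminus e} + (u_e u_f - 1) Z_{G/e} \ne 0$ with $q$ fixed.

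Next I would stratify the base $\Abb^{\#E(G)-1}$ (parametrizing $\hat t^{(e)}$) into three loci exactly as in the proof of Theorem~\ref{doubPotts}: (a)~the locus $Z_{G/e,q}=0$, where $u_e, u_f$ are free and $Z_{G\smallsetminus e,q} \ne 0$ is forced; (b)~the locus where $Z_{G/e,q} \ne 0$ and $Z_{G\smallsetminus e,q}/Z_{G/e,q} = 1$, where the constraint becomes $u_e u_f \ne 0$, contributing $\Tbb^2$; and (c)~the locus where $Z_{G/e,q} \ne 0$ and the ratio differs from $1$, where $u_e u_f \ne c$ for some $c \ne 0$, contributing $\Tbb^2 + \Tbb + 1$. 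Summing these, using that the class of $\{u_e u_f = c\}$ for $c \ne 0$ equals $\Lbb - 1$, reproduces the same algebraic identity as in the variable-$q$ case, in terms of $\{\cZ_{G/e,q}\}$, $\{\cZ_{G\smallsetminus e,q} \cap \cZ_{G/e,q}\}$, and the class of $V(Z_{G\smallsetminus e} - Z_{G/e})$ inside the complement of $\cZ_{G/e,q}$.

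Then I would eliminate the intersection term $\{\cZ_{G\smallsetminus e,q} \cap \cZ_{G/e,q}\}$ by applying Proposition~\ref{deleconPottsq} in place of Theorem~\ref{delconZGthm}, obtaining after the same cancellations the formula
\[
\{\cZ_{G',q}\} = \Tbb \cdot \{\cZ_{G,q}\} + (\Tbb+1)\cdot \{V(Z_{G\smallsetminus e} - Z_{G/e}) \text{ at fixed } q\}.
\]
Finally, I would identify the remaining locus with $W^e_{G,q}$: by the same bijection $Z_{G\smallsetminus e} = Z' + Z''$, $Z_{G/e} = Z' + Z''/q$ used in Lemma~\ref{WeGlocus}, we have $Z_{G\smallsetminus e} - Z_{G/e} = (q-1)\overline{Z}''$ where $\overline{Z}'' = Z''/q$. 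Since $q \ne 0,1$ is fixed, $(q-1)/q$ is a nonzero scalar, so the zero locus of $Z_{G\smallsetminus e} - Z_{G/e}$ at fixed $q$ coincides with the zero locus of $Z''$, i.e.\ with $W^e_{G,q}$, giving the stated formula.

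The main obstacle is the case analysis for the stratification over $\hat t^{(e)}$: one must verify that no new phenomena appear compared to the variable-$q$ proof, in particular that no locus collapses or acquires extra components at the excluded values $q = 0, 1$. The exclusion $q \ne 0$ is essential because at $q = 0$ all relevant $Z$-polynomials vanish; the exclusion $q = 1$ is essential because $Z_{G\smallsetminus e} - Z_{G/e}$ then vanishes identically, so the description of $W^e_{G,q}$ as the zero locus of $Z''$ alone (rather than $(q-1)\overline{Z}''$) would no longer be equivalent. Once these boundary cases are set aside, every computation transcribes directly from the proof of Theorem~\ref{doubPotts}.
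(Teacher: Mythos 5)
Your proof is correct and follows exactly the route the paper intends: the paper gives no separate proof of Proposition~\ref{doubPottsq} beyond remarking that ``the argument for variable $q$ carries over almost identically to cover the case with fixed $q\neq 0,1$,'' and you have faithfully carried it over, substituting Proposition~\ref{deleconPottsq} for Theorem~\ref{delconZGthm} and observing that at fixed $q\ne 0,1$ the factor $(q-1)/q$ is a nonzero scalar so $V(Z_{G\smallsetminus e}-Z_{G/e})$ coincides with $V(Z'')$.
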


\begin{cor}\label{reremuledq}
Let $G^{(m)}$ denote the graph obtained by adding $m$ edges parallel
to $e$ in $G$. (So $G=G^{(0)}$, $G'=G^{(1)}$.) Then for $m\ge 0$
\[
\{\cZ_{G^{(m+2)},q}\} = (2\Tbb+1) \{\cZ_{G^{(m+1)},q}\} - \Tbb(\Tbb+1) \{\cZ_{G^{(m)},q}\} .
\]
\end{cor}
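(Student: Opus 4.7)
The plan is to mirror the proof of Theorem~\ref{reremuled}, substituting Proposition~\ref{doubPottsq} (the fixed-$q$ analog of Theorem~\ref{doubPotts}) everywhere Theorem~\ref{doubPotts} was used, and verifying that the combinatorial cylinder argument for $W^e_{G,q}$ still goes through verbatim. The crucial reduction is to the case $m=0$, i.e., showing that tripling the edge $e$ in $G$ yields
\[
\{\cZ_{G'',q}\} = (2\Tbb+1)\{\cZ_{G',q}\} - \Tbb(\Tbb+1)\{\cZ_{G,q}\};
\]
the general case then follows by applying this identity with $G^{(m)}$ in place of $G$.

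First I would apply Proposition~\ref{doubPottsq} to $G'$ (viewing $G''$ as $G'$ with an additional edge parallel to $e$) to obtain
\[
\{\cZ_{G'',q}\} = \Tbb\,\{\cZ_{G',q}\} + (\Tbb+1)\,\{W^e_{G',q}\}.
\]
The key step is then to show $\{W^e_{G',q}\} = (\Tbb+1)\,\{W^e_{G,q}\}$. By the combinatorial description in Proposition~\ref{doubPottsq}, $W^e_{G',q}$ is cut out by the sum of monomials $\prod_{a\in A} t_a$ over subgraphs $A$ of $G'/e$ that gain a connected component upon passing to $G'\smallsetminus e$. However, the new edge $f$ parallel to $e$ in $G'$ joins the two endpoints of $e$, so if $f\in A$ then removing $e$ does not disconnect the endpoints; hence no such $A$ can contain $f$. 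Consequently the defining polynomial of $W^e_{G',q}$ coincides with that of $W^e_{G,q}$ but is viewed in a polynomial ring with one extra free variable $t_f$, and $W^e_{G',q}$ is a cylinder over $W^e_{G,q}$. This multiplies the class of the complement by $\Lbb = \Tbb+1$.

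Next I would invoke Proposition~\ref{doubPottsq} directly for $G$ to solve for the awkward term:
\[
(\Tbb+1)\,\{W^e_{G,q}\} = \{\cZ_{G',q}\} - \Tbb\,\{\cZ_{G,q}\}.
\]
Substituting the cylinder identity and this equation into the expression for $\{\cZ_{G'',q}\}$ gives
\[
\{\cZ_{G'',q}\} = \Tbb\,\{\cZ_{G',q}\} + (\Tbb+1)\bigl(\{\cZ_{G',q}\} - \Tbb\,\{\cZ_{G,q}\}\bigr) = (2\Tbb+1)\,\{\cZ_{G',q}\} - \Tbb(\Tbb+1)\,\{\cZ_{G,q}\},
\]
which is the recursion for $m=0$, and the same derivation applied with $G^{(m)}$ replacing $G$ proves the general case.

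There is no real obstacle here, since the argument is a transparent adaptation of the variable-$q$ case; the only point requiring care is confirming the cylinder identity $\{W^e_{G',q}\} = (\Tbb+1)\{W^e_{G,q}\}$ from the combinatorial description in Proposition~\ref{doubPottsq}, and verifying that the excluded values $q=0,1$ do not create problems (for $q=1$ the $W^e_{G,q}$ term drops out, but direct inspection shows $\{\cZ_{G,q=1}\} = \Tbb^{\#E(G)}$ trivially satisfies the recursion, and $q=0$ is excluded from the hypothesis of Proposition~\ref{doubPottsq}).
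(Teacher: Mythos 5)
Your proof is correct and follows exactly the approach the paper intends: the paper dispatches Corollary \ref{reremuledq} with the remark that ``the argument for variable $q$ carries over almost identically,'' and your write-up simply spells out that transfer, reproducing the cylinder argument for $W^e_{G',q}$ from the proof of Theorem \ref{reremuled} with Proposition \ref{doubPottsq} substituted for Theorem \ref{doubPotts}. The added sanity check for $q=1$ is fine, though superfluous since the statement's hypothesis excludes $q=0,1$.
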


The general solution of this recursion also matches the one for free-$q$:
\[
\sum_{m\ge 0} \{\cZ_{G^{(m)},q}\}\frac{s^m}{m!} 
= \left((\Tbb+1) \{\cZ_{G,q}\}-\{\cZ_{G',q}\}\right) e^{\Tbb s} 
+ \left(\{\cZ_{G',q}\}-\Tbb \{Z_{G,q}\}\right) e^{(\Tbb+1) s} .
\]

\begin{ex}\label{bananasq} {\rm Consider the case of the {\em banana graphs}.
The seeds for bananas are a single non-looping edge, with class $\Tbb$
and the $2$-banana, with class $\Tbb^2+\Tbb+1$, as computed above.
Plugging into the last formula, we get the generating function for the classes 
of Potts model complements of banana graphs for fixed $q$:
\[
\sum_{m\ge 0} \{\cZ_{G^{(m)},q}\} \frac{s^m}{m!} = (\Tbb+1)e^{(\Tbb+1)s}-e^{\Tbb s}
\quad;
\]
extracting the term of degree $m$ gives the very simple class for the 
$(m+1)$-banana:
\begin{equation}\label{Zqbananam}
(\Tbb+1)^{m+1}-\Tbb^m
\end{equation}
in agreement with the fibration condition \eqref{ZGfibrationq} for \eqref{ZGmbanana}, 
and in particular independent of $q\ne 0,1$. }
\end{ex}

\subsection{Chains of linked banana graphs}

By analogy to the example of the chains of linked polygon graphs considered in
\S \ref{chainmkNsec}, we consider here a similar family but with the polygons
replaced by banana graphs.
\begin{center}
\includegraphics[scale=.5]{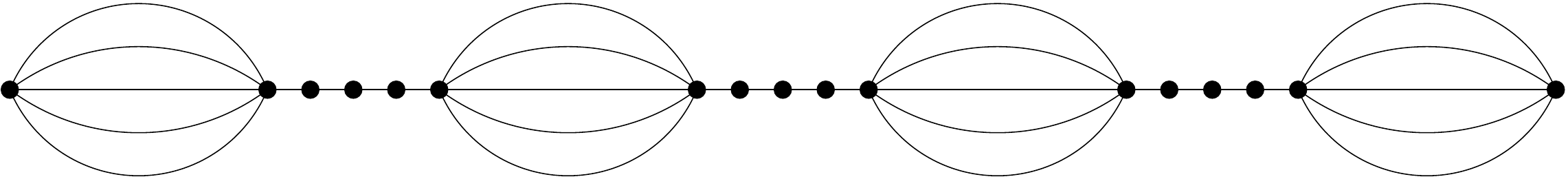}
\end{center}

\begin{defn}\label{chainedbananas}
The graphs $\bc kG^{(m),N}$ are obtained by connecting $N$ banana graphs $G^{(m)}$,
each with $m$ parallel edges, each connected to the next by a chain of $k\geq 0$ edges
(connected by joining vertices in the case $k=0$).
\end{defn}

We can compute the classes $\{ \cZ_{G,q} \}$ for this family of graphs using
the explicit formula \eqref{Zqbananam} for the banana graphs.

\begin{prop}\label{chainbananaclass}
Let $\bc kG^{(m),N}$  be the graphs of Definition \ref{chainedbananas}. Then the
corresponding classes are given by
\begin{equation}\label{ZGqchban}
\{ \cZ_{\bc kG^{(m),N},q} \} =((\bT+1)^{m+1}-\bT^m)^N  \bT^{k (N-1)}.
\end{equation}
\end{prop}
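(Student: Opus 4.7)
The plan is to follow exactly the same strategy used for the polygon chains in Proposition \ref{classGmkNq}. The key observation is that the graph $\bc kG^{(m),N}$ is built up from $N$ copies of the single banana $G^{(m)}$ by connecting consecutive copies through $N-1$ chains of $k$ edges each (for $k=0$, consecutive bananas simply share a vertex).

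First I would unravel the chain structure using the multiplicative properties of the class $\{\cZ_{-,q}\}$ established in Corollary \ref{listZGqprop}. Part (3) of that corollary tells us that joining two graphs at a single vertex (or taking a disjoint union) multiplies the classes: $\{\cZ_{G_1 \cup_v G_2,q}\} = \{\cZ_{G_1,q}\}\cdot \{\cZ_{G_2,q}\}$. Part (4) tells us that connecting two graphs by a single non-shared edge introduces an extra factor of $\Tbb$ compared with joining them at a vertex. Iterating (5) (or equivalently applying (4) edge by edge), a connecting chain of $k$ edges contributes a factor of $\Tbb^k$. Since $\bc kG^{(m),N}$ contains $N-1$ such connectors joining $N$ bananas, an inductive application on $N$ (splitting off one banana together with its connector at a time) yields
\[
\{\cZ_{\bc kG^{(m),N},q}\} = \{\cZ_{G^{(m)},q}\}^{N}\cdot \Tbb^{k(N-1)}.
\]

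The second step is to substitute the explicit expression for the class of a single banana that was computed in Example \ref{bananasq}, namely
\[
\{\cZ_{G^{(m)},q}\} = (\Tbb+1)^{m+1}-\Tbb^{m},
\]
which is independent of $q\ne 0,1$. Plugging this into the displayed identity produces the stated formula \eqref{ZGqchban}.

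I do not expect a genuine obstacle here: both ingredients are already in place, and the argument is a word-for-word transcription of the proof of Proposition \ref{classGmkNq}, with the banana class \eqref{Zqbananam} replacing the polygon class \eqref{polygonZGq}. The only small bookkeeping point to watch is the case $k=0$, where ``connecting by a chain of $0$ edges'' means gluing two bananas at a vertex; here the factor $\Tbb^{k(N-1)}=1$ matches part (3) of Corollary \ref{listZGqprop} directly, so no separate argument is needed.
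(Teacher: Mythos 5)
Your proposal is correct and follows essentially the same route as the paper: the paper's (very brief) proof also just invokes parts (3) and (4) of Corollary \ref{listZGqprop} together with the banana class \eqref{Zqbananam}, exactly as you do, with your version simply spelling out the induction on $N$ more explicitly.
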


\proof The result follows immediately by applying (3) and (4) of Corollary
\ref{listZGqprop} to the explicit formula \eqref{Zqbananam} in Example \ref{bananasq}.
\endproof

\subsection{Polygon chains}

A class of graphs that can be obtained by alternating edge splitting
and edge doubling operations in different orders are the chains of polygons of
various sizes joined along edges. 
\begin{center}
\includegraphics[scale=.5]{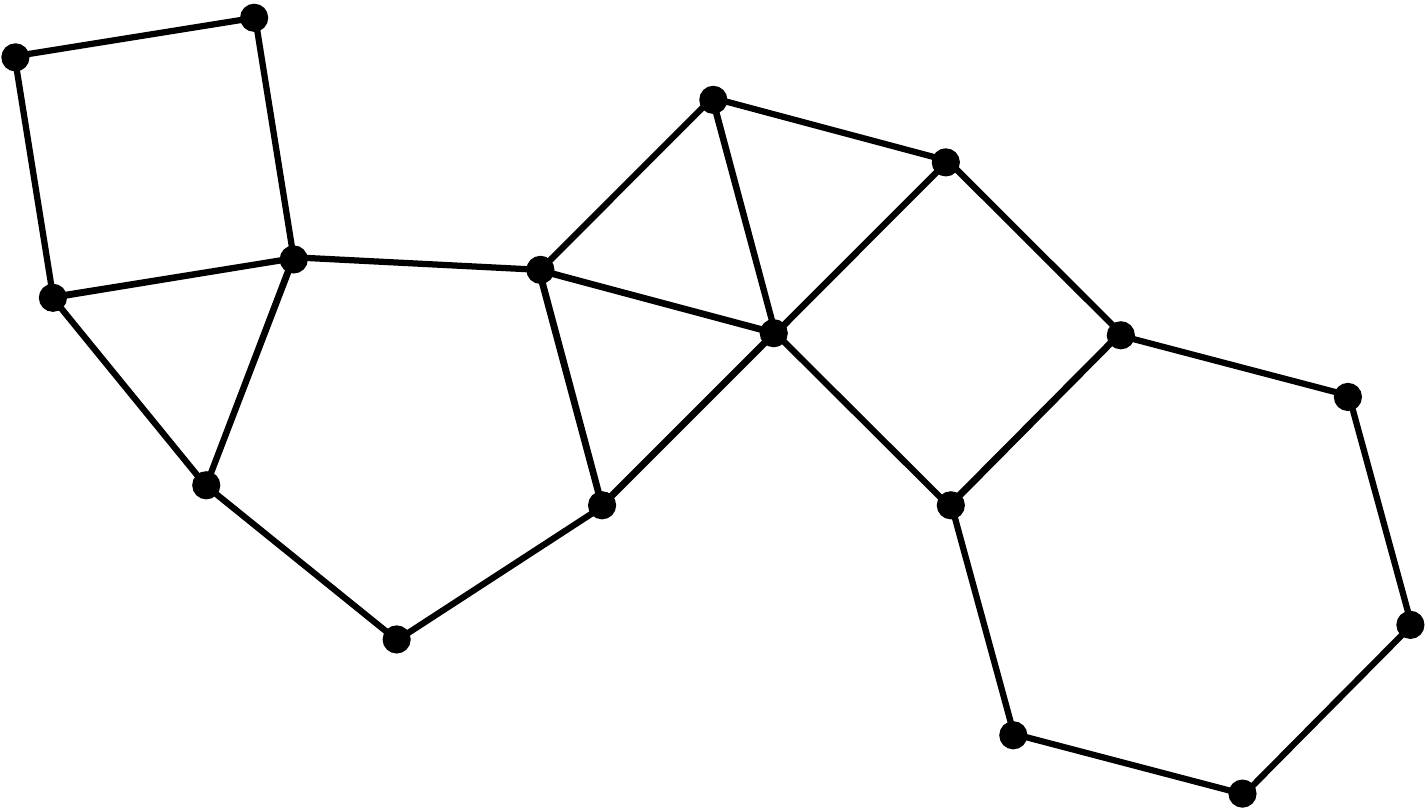}
\end{center}

In the case of the graph hypersurfaces of Feynman graphs, explicit formulae for the
Grothendieck classes of the hypersurface complements for this type of graphs were
obtained in \cite{AluMa3}. However, in the case of the Potts model hypersurfaces,
the recursion relation becomes more complicated: if one combines the formula
for doubling an edge with the formula for splitting it, the resulting class becomes
of the form
$$ (\bT^2 - \bT +1) \{ \cZ_G \} + (\bT-1) \bT \{ \cZ_{G/e} \} + (\bT+1) \{ C^e_G \} + 
(\bT -2) (\bT+1) \{ D^e_G \}, $$
where the term $C^e_G$ is obtained combinatorially from the subgraphs of $G$
that connect the endpoints of $e$, and $D^e_G$ is obtained from paths in 
$G\smallsetminus e$ which do not connect the endpoints of $e$. There is then
no obvious recursive procedure of the type used for either the splitting or the
doubling alone, which takes care of eliminating both of these additional terms.
This remains an interesting case of graphs to investigate.

\section{Estimating topological complexity of virtual phase transitions}

We show here, in the concrete example of the chains of linked polygon graphs
$\bc {(m,k)}G^N$ and in the simpler example of the banana graphs, how one
can use our calculations of classes in the Grothendieck ring to estimate how
the topological complexity of the set of virtual phase transition grows as the
graphs grow in size within the given family.

Good indicators of topological complexity are homologies and cohomologies
and the associated Euler characteristics. In the case of the real algebraic varieties
$\cZ_{G,q}(\R)$, which can in general be singular and non-compact, therefore, 
we can take as a good indicator the Euler characteristic with compact support
$\chi_c(\cZ_{G,q}(\R))$. As we discussed in \S \ref{chicSec}, this is the unique
invariant of real algebraic varieties that is both a topological invariant and a 
motivic invariant. Using the fact that it factors through the Grothendieck ring
$K_0(\cV_\R)$, we obtain the following results.

\begin{prop}\label{chicZGmkNqprop}
Let $\bc {(m,k)}G^N$ be the chain of linked polygons graphs of Definition \ref{GmkN}. Then
the Euler characteristic with compact support $\chi_c(\cZ_{\bc {(m,k)}G^N,q}(\R))$ of the
set of virtual phase transitions $\cZ_{\bc {(m,k)}G^N,q}(\R)$ of the Potts model is given by 
\begin{equation}\label{chicZGmkNq}
% \frac{1}{2} (-1)^{k(N-1)} \left(
% (-1)^{N(m+1)} 2 + (-1)^m 2^{k(N-1)} (2^{m+3} - 3^{m+1} -1) \right).
(-1)^{mN+kN-k}\left((-1)^N- 2^{kN-k-N} \left(3^{m+1}+1-2^{m+3}\right)^N\right).
\end{equation}
\end{prop}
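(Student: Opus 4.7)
The plan is to treat $\chi_c$ as a ring homomorphism $K_0(\cV_\R)\to\Z$ and evaluate it directly on the class formula given in Proposition~\ref{classGmkNq}. Since $\chi_c$ is motivic (Section~\ref{chicSec}), and by Example~\ref{chicLT} sends $\bL\mapsto-1$ and $\bT\mapsto-2$, the only real work is a careful substitution followed by housekeeping of signs and exponents.

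First, because $\chi_c$ is a ring homomorphism, applying it to the product expression
\[
\{\cZ_{\bc {(m,k)}G^N,q}\}=\{\cZ_{\bc mG,q}\}^N\,\bT^{k(N-1)}
\]
gives $\chi_c(\{\cZ_{\bc {(m,k)}G^N,q}\})=\chi_c(\{\cZ_{\bc mG,q}\})^N\cdot(-2)^{k(N-1)}$. So the task reduces to computing $\chi_c$ of a single polygon class.

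Next, I would substitute $\Tbb\mapsto -2$ into the closed form \eqref{polygonZGq}. Writing $(-2)^j=(-1)^j 2^j$ and $(-3)^j=(-1)^j 3^j$ throughout, the three summands combine into a common factor $(-1)^m/2$; a short simplification yields
\[
\chi_c(\{\cZ_{\bc mG,q}\})=\frac{(-1)^m}{2}\bigl(3^{m+1}+1-2^{m+3}\bigr).
\]
This is the key identity; everything else is arithmetic.

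Finally, since $\{\cZ_{G,q}\}=\bL^{\#E(G)}-[\cZ_{G,q}]$, we have
\[
\chi_c(\cZ_{G,q}(\R))=(-1)^{\#E(G)}-\chi_c(\{\cZ_{G,q}\}),
\]
and for $G=\bc{(m,k)}G^N$ the edge count is $\#E(G)=N(m+1)+k(N-1)=mN+N+kN-k$. Raising the polygon $\chi_c$ to the $N$-th power, multiplying by $(-2)^{k(N-1)}$, and factoring $(-1)^{mN+kN-k}$ out of both terms (the sign $(-1)^N$ coming from $(-1)^{\#E(G)}$ relative to this common sign) produces exactly \eqref{chicZGmkNq}.

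The main obstacle, if any, is bookkeeping: tracking the parity $(-1)^{mN+kN-k}$ and verifying that the apparent denominator $2^N$ in $\chi_P^N$ is absorbed cleanly into $2^{kN-k-N}$ after multiplication by $(-2)^{k(N-1)}$. No geometric input beyond Proposition~\ref{classGmkNq}, Corollary~\ref{listZGqprop}, and Example~\ref{chicLT} is needed.
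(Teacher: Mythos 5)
Your proposal is correct and follows essentially the same route as the paper: apply the ring homomorphism $\chi_c$ to the formula of Proposition~\ref{classGmkNq} via $\bT\mapsto -2$, and then recover $\chi_c(\cZ_{\bc{(m,k)}G^N,q}(\R))$ from the complement relation using $\#E = N(m+1)+k(N-1)$. The only difference is cosmetic: you explicitly spell out the intermediate identity $\chi_c(\{\cZ_{\bc mG,q}\})=\frac{(-1)^m}{2}(3^{m+1}+1-2^{m+3})$, which the paper leaves implicit, and this identity checks out.
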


\proof As we have seen in Example \ref{chicLT}, for real varieties 
$\chi_c(\bT)=\chi_c(\bG_m(\R))=-2$. Using the fact that $\chi_c$ is a ring
homomorphism $\chi_c: K_0(\cV_\R)\to \Z$ we then obtain from \eqref{ZGmkNq}
of Proposition \ref{classGmkNq},
%$$ \chi_c(\{ \cZ_{\bc {(m,k)}G^N,q} \}) = (-1)^{m+k(N-1)}  2^{-1+k(N-1)} (1-2^{m+3} + 3^{m+1}). $$
\[
(-1)^{mN+kN-k}\, 2^{kN-k-N} \left(3^{m+1}+1-2^{m+3}\right)^N.
\]
We then use additivity again to obtain 
$$ \chi_c(\cZ_{\bc {(m,k)}G^N,q}(\R)) = \chi_c(\A^{\# E(\bc {(m,k)}G^N)}) -
\chi_c(\{ \cZ_{\bc {(m,k)}G^N,q} \}) , $$
where $\# E(\bc {(m,k)}G^N) = N (m+1) + k (N-1)$ and $\chi_c(\A^1)=\chi_c(\R)=-1$, so that
we obtain \eqref{chicZGmkNq}.
\endproof

We obtain a similar result for the class of graphs $\bc kG^{(m),N}$ of Definition \ref{chainedbananas}, obtained by chains of linked banana graphs.

\begin{prop}\label{chicZGqbanNkm}
Let $\bc kG^{(m),N}$ be the graphs of Definition \ref{chainedbananas}. Then
the Euler characteristic with compact support $\chi_c(\cZ_{\bc kG^{(m),N}}(\R))$
of the set of virtual phase transitions in the model is given by
\begin{equation}\label{chicbanchain}
%(-1)^{k(N-1)+mN} - (-2)^{k(N-1)} ( (-1)^{m+1} - (-2)^m)^N. 
(-1)^{mN+kN+N-k}\left(1- 2^{k(N-1)} \left(2^m+1\right)^N\right). 
\end{equation}
\end{prop}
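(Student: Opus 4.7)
The plan is to apply the ring homomorphism $\chi_c : K_0(\cV_\R) \to \Z$ to the explicit Grothendieck class provided by Proposition \ref{chainbananaclass}, in direct analogy with the proof of Proposition \ref{chicZGmkNqprop}. Recall from Example \ref{chicLT} that $\chi_c(\bT) = -2$ (and $\chi_c(\bL) = -1$), so everything reduces to evaluating the polynomial $((\bT+1)^{m+1} - \bT^m)^N \bT^{k(N-1)}$ at $\bT = -2$ and then using additivity to pass from the complement class back to the variety itself.

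First I would simplify the inner factor: substituting $\bT = -2$ gives
\[
(\bT+1)^{m+1} - \bT^m \;\longmapsto\; (-1)^{m+1} - (-2)^m = (-1)^{m+1}(2^m + 1),
\]
after pulling out the common sign $(-1)^m$. Taking the $N$-th power and multiplying by $\chi_c(\bT^{k(N-1)}) = (-1)^{k(N-1)} 2^{k(N-1)}$ then yields
\[
\chi_c\bigl(\{\cZ_{\bc kG^{(m),N},q}\}\bigr) = (-1)^{(m+1)N + k(N-1)} (2^m+1)^N \, 2^{k(N-1)} = (-1)^{mN+kN+N-k} (2^m+1)^N \, 2^{k(N-1)}.
\]

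Next I would count edges. Since $G^{(m)}$ is the $(m+1)$-banana (cf.\ Example \ref{bananasq}), the graph $\bc kG^{(m),N}$ has $\# E = (m+1)N + k(N-1)$ edges, so $\chi_c(\A^{\# E}) = (-1)^{(m+1)N + k(N-1)} = (-1)^{mN+kN+N-k}$. Applying additivity to the closed embedding $\cZ_{\bc kG^{(m),N},q} \subset \A^{\# E}$,
\[
\chi_c\bigl(\cZ_{\bc kG^{(m),N},q}(\R)\bigr) = \chi_c(\A^{\# E}) - \chi_c\bigl(\{\cZ_{\bc kG^{(m),N},q}\}\bigr),
\]
and factoring out the common sign $(-1)^{mN+kN+N-k}$ gives precisely \eqref{chicbanchain}.

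No substantive obstacle is expected: the computation is purely mechanical once the Grothendieck class is in hand, and runs fully parallel to the proof of Proposition \ref{chicZGmkNqprop}. The only point requiring care is the sign bookkeeping --- in particular remembering that over $\R$ one has $\chi_c(\bL) = -1$ rather than $+1$, which is the very source of the alternating factor $(-1)^{mN+kN+N-k}$ in the final formula. A quick sanity check in the degenerate case $m=0$, $k=0$, $N=2$ (two single edges joined at a vertex) gives $\chi_c(\cZ(\R)) = 1 - 4 = -3$, which matches both direct computation from Corollary \ref{listZGqprop} and the right-hand side of \eqref{chicbanchain}.
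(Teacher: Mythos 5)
Your proof is correct and follows exactly the same route as the paper's, which simply says the argument is the same as for Proposition \ref{chicZGmkNqprop}: apply $\chi_c$ with $\chi_c(\bT)=-2$ to the class from Proposition \ref{chainbananaclass}, count edges ($(m+1)N+k(N-1)$, since $G^{(m)}$ is the $(m+1)$-banana), and use additivity to pass from the complement to the hypersurface. The sign bookkeeping and the sanity check at $m=0$, $k=0$, $N=2$ both work out.
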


\proof The argument is exactly as in the previous case, using the expression \eqref{ZGqchban}
for the classes in the Grothendieck ring. 
\endproof

\subsection{Decision complexity and topological complexity}

There is, in fact, a more technical sense, from the point of view of complexity theory, according
to which the Euler characteristic $\chi_c(\cZ_{G,q}(\R))$ really gives an estimate of the
complexity of the real algebraic variety $\cZ_{G,q}(\R)$ of virtual phase transitions of the
Potts model over the graph $G$. We describe it briefly following the survey \cite{BuCu}.

An algebraic circuit over $\R$ is an acyclic directed graph where each node has in-degree
either $0$ or $1$, or $2$. The nodes of in-degree $0$ are the input nodes and they are
labelled by real variables; the nodes of in-degree $1$ are either output nodes (out-degree
equal to zero) or sign nodes: these are nodes that, to an input $x$ assign output $1$ if
$x\geq 0$ and zero otherwise; and the nodes of in-degree $2$ are labelled by an operation
$+$, $-$, $\times$, or $/$ and are called arithmetic nodes. The size $\sigma(\cC)$ of an 
algebraic circuit $\cC$
is the number of nodes. To an algebraic circuit $\cC$ with $n$ input nodes and $m$ output 
nodes one can associate a function  $\varphi_{\cC}: \R^n \to \R^m$, the function 
computed by the circuit. A decision circuit  is an algebraic circuit with only one output 
node returning values of $0$ or $1$. To each  decision circuit one can associate a 
real (semi)algebraic set $S_\cC =\{ x\in \R^n \,|\, 
\varphi_{\cC}(x)=1 \}$ and, conversely, it is known that each (semi)algebraic set in 
$\R^n$ is realized by some decision circuit. The {\em decision complexity} of a real
(semi)algebraic set $S$ is defined as
\begin{equation}\label{decisioncomplex}
C(S) = \min \{  \sigma(\cC) \, |\, S_{\cC} =S \} .
\end{equation} 
It is known by \cite{Yao} that there is a lower bound on the decision complexity of a
(semi)algebraic set $S$ of the form
\begin{equation}\label{decisionbound}
C(S) \geq \frac{1}{3} ( \log_3 \chi_c(S) - n -4 ),
\end{equation}
in terms of the Euler characteristic with compact support. A similar, more refined estimate
exists in terms of the sum of the Borel--Moore Betti numbers. 

Instead of considering the real algebraic varieties of virtual phase transitions $\cZ_{G,q}(\R)$,
one can look at the actual physical phase transitions. For a finite graph $G$, this 
means considering, in the antiferromagnetic case, the semialgebraic set given by the
intersection of $\cZ_{G,q}(\R)$ with the semialgebraic set 
$S=\{ t\in \A^{\# E(G)} \,|\, -1\leq t_e \leq 0 \}$.  

%A version of the Grothendieck ring
%for real semialgebraic sets was developed in \cite{Quarez}, but in that setting the
%Euler characteristic $\chi_c(S)$ with compact support not only defines a motivic 
%invariant but in fact determines the class $[S]$.

\end{document}